\newif\ifnotes
\definecolor{carnelian}{rgb}{0.7, 0.11, 0.11}
\definecolor{darkcerulean}{rgb}{0.03, 0.27, 0.49}
\newtheorem{theorem}{Theorem}[section]
\newtheorem{lemma}[theorem]{Lemma}
\newtheorem{definition}[theorem]{Definition}
\newtheorem{claim}[theorem]{Claim}
\newtheorem{proposition}[theorem]{Proposition}
\theoremstyle{remark}
\newtheorem{remark}[theorem]{Remark}
\newcommand{\F}{\mathbb{F}}
\renewcommand{\P}{\mathbb{P}}
\newcommand{\Prj}{\P^1}
\newcommand{\nice}[1]{{\sf{#1}}\xspace}
\newcommand{\poly}{\mathsf{poly}}
\newcommand{\zr}{^{(0)}}
\newcommand{\one}{^{(1)}}
\newcommand{\two}{^{(2)}}
\newcommand{\ii}{^{(i)}}
\newcommand{\jj}{^{(j)}}
\newcommand{\iip}{^{(i+1)}}
\newcommand{\rounds}{\nice{k}}
\newcommand{\fin}{^{(\rounds)}}
\newcommand{\ALG}{\mathsf{ALG}}
\newcommand{\DIV}{\mathsf{DIV}}
\newcommand{\DEGREE}{\mathsf{DEGREE}}
\newcommand{\EXTEND}{\mathsf{EXTEND}}
\newcommand{\MEXTEND}{\mathsf{MEXTEND}}
\newcommand{\MULT}{\mathsf{MULT}}
\newcommand{\ENTER}{\mathsf{ENTER}}
\newcommand{\REDC}{\mathsf{REDC}}
\newcommand{\EXIT}{\mathsf{EXIT}}
\newcommand{\MODV}{\mathsf{MOD}}
\newcommand{\CRT}{\mathsf{CRT}}
\newcommand{\rem}{\mathsf{\:rem\:}}
\newcommand{\FFTree}{\ensuremath{\mathsf{FFTree}}\xspace}
\newcommand{\FFTrees}{\ensuremath{\mathsf{FFTrees}}\xspace}
\newcommand{\FFForest}{\ensuremath{\mathsf{FFForest}}\xspace}
\newcommand{\nn}{\nice{K}}
\newcommand{\nnq}{\widehat{\nn}_q}
\newcommand{\calF}{\mathcal{F}}
\newcommand{\Z}{{\mathbb Z}}
\newcommand{\sym}{\nice{Sym}}
\newcommand{\mmod}{\!\!\mod}
\DeclarePairedDelimiter{\ceil}{\lceil}{\rceil}
\DeclarePairedDelimiter{\parens}()
\newcommand{\fq}{\mathbb{F}_q}
\newcommand{\fqbar}{\overline\fq}
\newcommand{\EVT}[2]{\left\langle{ #1 \wr #2 }\right\rangle}
\newcommand{\canon}[2]{\left\langle #1 \right\rangle_{#2}}
\newcommand{\eli}[1]{\ifnotes{\color{cyan} Eli: #1}\xspace\fi}
\newcommand{\swastik}[1]{\ifnotes{\color{red} Swastik: #1}\xspace\fi}
\begin{document}

\title{Elliptic Curve Fast Fourier Transform (ECFFT) Part I: \\ Fast Polynomial Algorithms over all Finite Fields}
\author{Eli Ben-Sasson\thanks{StarkWare Industries Ltd. {\tt \{eli,dancar,david\}@starkware.co}}
\and
Dan Carmon\footnotemark[1]
\and Swastik Kopparty \thanks{Department of Mathematics and Department of Computer Science, University of Toronto. Research supported in part by NSF grants CCF-1540634 and CCF-1814409, at Rutgers University. {\tt swastik.kopparty@gmail.com}}
\and David Levit\footnotemark[1]}

\maketitle

\begin{abstract}
Over finite fields $\fq$ containing a root of unity of smooth order $n$ (smoothness means  $n$ is the product
of small primes),
the Fast Fourier Transform (FFT) leads to the fastest known algebraic algorithms for many basic polynomial
operations, such as multiplication, division, interpolation and multi-point evaluation.
These operations can be computed by constant fan-in arithmetic circuits over $\fq$ of quasi-linear size;
specifically, $O(n\log n)$ for multiplication and division, and $O(n \log^2 n)$ for interpolation and evaluation.
%$O(n\log n)$ for the first two listed operations, and $O(n \log^2 n)$ for the next two.
%\dan{Suggest changing the part with ``first two, next two'', which I find difficult to parse, to:
%specifically, $O(n\log n)$ for multiplication and division, and $O(n \log^2 n)$ for interpolation and evaluation.}
%\eli{accepted}

However, the same operations over fields with no smooth order root of unity suffer from an asymptotic slowdown, typically due to the need to introduce ``synthetic'' roots of unity to enable the FFT.
The classical algorithm of Sch\"{o}nhage and Strassen~\cite{ScSt71} incurred a multiplicative slowdown factor of $\log\log n$ on top of the smooth case. Recent
remarkable results of Harvey, van der Hoeven and Lecerf~\cite{HHL17,HH19} dramatically reduced this multiplicative overhead to $\exp(\log^*(n))$.

We introduce a new approach to fast algorithms for polynomial operations over all large finite fields.
The key idea is to replace the group of roots of unity with a
set of points $L\subset \fq$
suitably related to a well-chosen elliptic curve group over $\fq$ (the set $L$ itself is \emph{not} a group).
The key advantage of this approach is that elliptic curve groups can be of {\em any} size in the Hasse--Weil interval $[q + \pm 2\sqrt{q} + 1]$
and thus can have subgroups of large, smooth order, which an FFT-like divide and conquer algorithm can exploit.
%\david{sub-}groups (and their projection sets \david{the projection of the group itself is not exactly half the size so it does not divide it, that's why we need a coset})
%can be of \emph{any} size $n$ that divides an integer in the Hasse--Weil interval $[q+1 \pm 2\sqrt{q}]$;
Compare this with multiplicative subgroups over $\fq$ whose order must divide $q-1$.
By analogy, our method extends the standard, multiplicative FFT in a similar way to how
Lenstra's elliptic curve method~\cite{Lenstra} extended Pollard's $p-1$ algorithm~\cite{p-1}
for factoring integers.

%\dan{Maybe add here the analogy, (this paper):FFT = ECM:($p-1$ method for factoring).}
%\eli{If we want to add this, should appear at end of last paragraph.} \eli{On more thought, I don't see the analogy to be simple enough to warrant adding to the already long abstract.}
%\david{The analogy here fits like a glove, it is actually the original inspiration}
%\dan{Wrote a short sentence here, also an extended paragraph in 1.3}
%\eli{terrific phrasing, closing this}

For polynomials represented by their evaluation over subsets of $L$, we show that multiplication, division, degree-computation, interpolation, evaluation and Reed--Solomon encoding (also known as low-degree extension) {\em with fixed evaluation points} can all be computed with arithmetic circuits of size similar to what is achievable with the classical FFTs when the field size $q$ is special.
For several problems, this yields the asymptotically smallest known arithmetic circuits even in the standard monomial representation of polynomials.

The efficiency of the classical FFT follows from using the $2$-to-$1$ squaring map to reduce the evaluation set of roots of unity of order $2^k$ to similar groups of size $2^{k-i}, i>0$. Our algorithms operate similarly, using isogenies of elliptic curves with kernel size $2$ as $2$-to-$1$ maps to reduce $L$ of size $2^k$ to sets of size $2^{k-i}$ that are, like $L$, suitably related to
 elliptic curves, albeit different ones.

%that is a  projection of points in \david{a coset of}

%\david{sub-}groups (and their projection sets \david{the projection of the group itself is not exactly half the size so it does not divide it, that's why we need a coset})
%can be of \emph{any} size $n$ that divides an integer in the Hasse--Weil interval $[q+1 \pm 2\sqrt{q}]$;

%This also yields $O(n \log^2 n)$ size arithmetic circuits over $\fq$ for computing the elementary symetric polynomials of $n$ inputs, when $n = \poly(q)$.

%We show FFT-style divide and conquer algorithms for polynomials using elliptic curve group
%whenever it has a subgroup of large, smooth order.

\end{abstract}

\section{Introduction}

%\swastik{Add other areas of CS where EC helps}
%\swastik{I added the basic skeleton of this, if someone (Eli?) can fill in famous EC crypto references that would complete it.}
%\eli{done, and I think it looks fine}

%\swastik{Do our arithmetic circuits use division? I think they do. We should note this somewhere.}
%\dan{This is a good question. I think the answer is actually no -- I think all our divisions
%are by the precomputed parameters, not by inputs, e.g.\@ when we divided by $Z_0$ in $\REDC$
%or by $X^{n/2}$ in $\EXIT$, so we should simply say that we precompute the inverse and multiply by it,
%instead of division. But we should check no other division is done.}

The rocket fuel that powers modern fast algorithms for polynomial algebra
is the Fast Fourier Transform (FFT). The original FFT, due to Cooley--Tukey~\cite{CooleyTukey}\footnote{The history of this algorithm is much longer, and dates back to Gauss, see~\cite{FFThistory}.}, is a divide-and-conquer algorithm   that evaluates a polynomial $P(X)=\sum_{i<n} a_i X^i \in \mathbb C[X]$,  given by its sequence of coefficient $(a_0,\ldots, a_{n-1})$, on the $n$th roots of unity in $\mathbb C$. It does so using $O(n \log n)$ arithmetic operations over $\mathbb C$ whenever $n$ is an integer power of $2$,
or more generally, when $n$ is a {\em smooth number} -- a product of $O(1)$-sized primes.
%
%The connection to polynomials comes from the observation that the FFT equivalently is an $O(n \log n)$-time algorithm for evaluating a
%polynomial $P(X) \in \mathbb C[X]$ with degree $ < n$, specified by coefficients of its monomials, at the $n$th roots of unity in $\mathbb C$.
This immediately enables $O(n \log n)$ time multiplication of polynomials of degree $< n/2$ -- by evaluation at the $n$th roots of unity, pointwise multiplication of these evaluations, and then interpolation from the $n$th roots of unity via the \emph{inverse} FFT (iFFT) algorithm. Polynomial multiplication turns out to be the crucial operation for a wide variety of other algorithmic problems of polynomial algebra. See the books~\cite{GG-algebrabook,BCS-algebrabook} for a taste of the impact of the FFT on computer algebra.

Over finite fields $\fq$, these ideas generalize to some extent~\cite{pollardFFT}.
%Whenever $\fq$ contains an $n$th root of unity,
%for a smooth number $n$, the FFT yields a fast algorithm for multiplying polynomials of degree
%up to about $n$.
Define $M_q(n)$ to be the number of $\fq$ operations needed for the fastest algorithm over $\fq$ which takes as input the coefficients of two polynomials in $\fq[X]$ of degree $< n$,
and returns the coefficients of their product.
Using the same FFT algorithm, if $\fq$ contains an $n$th root of unity for smooth $n$,
we have $M_q(n) = O( n \log n)$. More generally, we get the same upper bound on $M_q(n)$ even if a bounded degree extension field $\F_{q^{O(1)}}$ contains
such a root of unity which generates a multiplicative subgroup of smooth order.
%, and even if $\F_{q}$ contains
%an additive subgroup of smooth order, which holds for
%all $q$ that are powers of a constant prime; this latter result was shown by Lin, Chung and Han~\cite{LWH2014}, improving on previous results by Cantor~\cite{Cantor} and Gao and Mateer\cite{GM2010}  \swastik{Check!}.
However, most finite fields are not ``special'' in this way, which raises the following well-known open problem:

\begin{center}
\emph{Open Question 1:} \label{quest:quasilinear}
	Does the bound  $M_q(n) = O(n \log n)$ hold for all prime powers $q$ and all $n$?
\end{center}

Until recently, the best general upper bound on $M_q(n)$ was the classical result of
Sch{\"o}nhage and Strassen~\cite{ScSt71}
(see also Sch{\"o}nhage~\cite{Schonhage-char2} and Cantor--Kaltofen~\cite{CaKa}), who showed
that:
$$M_q(n) = O( n \log n \log \log n).$$
This algorithm involves introducing a synthetic root of unity and recursively
running FFTs over more general rings. The algorithm is inspired by, and closely mirrors, the classical (Boolean) algorithm of Sch\"{o}nhage and Strassen for integer multiplication,
which shows that $M_{\Z}(n)$, the Boolean circuit complexity of multiplying two $n$-bit integers presented in base $2$,
satisfies:
$$M_\Z(n) \leq O(n \log n \log\log n).$$

\begin{remark}[Computational Model]
	Unless explicitly specified otherwise, we use the word ``algorithm'' to mean an algebraic algorithm that uses only field operations and field constants. In particular, we do not consider any precision issues or the cost of computing the constants used by the computation. This computational model is more commonly known as an arithmetic circuit or a straight-line program. When we refer to the running time of such an algorithm, we mean the size of the straight-line program or arithmetic circuit, which means we assign unit computational cost to each arithmetic operation over the ambient field.
\end{remark}

As in the case of $\mathbb C$, the best known algorithms for a wide variety of
algorithmic problems of polynomial algebra over $\fq$ depend on polynomial multiplication over
$\fq$, and thus their running time depends on $M_q(n)$.
Of particular interest are the following classical results.
\begin{enumerate}
\item Horowitz~\cite{Horowitz71,Horowitz71Errata} gave an algorithm for polynomial interpolation at $n$ points {\em with preprocessing} in time $O(M_q(n) \log^2 n)$. Here we are given a subset $B$ of $\fq$ of size $n$ and a function $f: B \to \fq$, and after doing arbitrary preprocessing of $B$, we want to compute the coefficients of the unique polynomial of degree $< n$ that interpolates $f$.
\item In the above mentioned paper, Horowitz~\cite{Horowitz71} presented a fast algorithm for evaluating all elementary symmetric polynomials over $n$ variables on a specific input $(\alpha_1,\ldots, \alpha_n)$ in time $O(M_q(n) \log n)$.
\item Subsequently, Borodin and Moenck~\cite{BorodinMoenck} improved Horowitz's algorithm and gave an algorithm for polynomial interpolation at $n$ points {\em without preprocessing} in time $O(M_q(n) \log n)$.
\item Along the way, Borodin and Moenck~\cite{BorodinMoenck} also showed how to do multi-point evaluation of degree $< n$ polynomials at $n$ arbitrary points in time $O(M_q(n) \log n)$.
\end{enumerate}

In recent years, there have been some remarkable advances in our understanding of the complexity of multiplying polynomials over finite fields.
These advances closely track breakthroughs on the fundamental problem of understanding the complexity of multiplying integers in the Boolean circuit or (multi-tape) Turing Machine model.
The starting point for all these recent advances was the result of F\"{u}rer~\cite{Furer} (see also~\cite{Deetal})
who showed that $M_\Z(n) = O( n \log n \cdot 2^{O(\log^* n)})$.
Soon after, Harvey, van der Hoeven and Lecerf \cite{HHL17} simplified and improved the constant in the exponent in
F\"{u}rer's bound on $M_\Z(n)$, while also developing an $\fq$-analogue of this
algorithm to show that $M_q(n) = O(n \log n \cdot 2^{O(\log^* n)})$. Harvey and van der Hoeven \cite{HH19} further improved the constant in the exponent in the bound on $M_q(n)$.

Finally, Harvey and van der Hoeven \cite{HH21} proved the breakthrough $M_\Z(n) = O(n \log n)$, settling a long-standing conjecture. There they discussed the reasons why their results do not extend to a similar bound on $M_q(n)$. Nevertheless, their results do imply (via Kronecker substitutions, see Section 1.2 of~\cite{HH19}) that multiplication of degree $n$ polynomials over $\fq$ for $n = q^{O(1)}$, can be done in time $O(n \log q (\log n + \log\log q))$ in the Turing machine model, which seems to be as good a bound one can hope to deduce in the Turing Machine model from the conjectured $M_q(n) = O(n \log n)$.

Returning to $M_q(n)$, Harvey and van der Hoeven showed in \cite[Theorem 9.2]{HH19linnik}, which is a companion paper to \cite{HH21},  that under a
number theoretic conjecture on the least prime in arithmetic progressions, $M_q(n)$ is indeed $O( n \log n)$.

Summarizing, the recent wave of results come extremely close to answering Open Question 1 unconditionally, but we are not quite there yet.

%\swastik{Also mention the subspace-polynomial-based basis for polynomials in characteristic 2 somewhere.} \eli{done}

\subsection{Our Results}

The main contribution of our paper is a new approach to fast polynomial algorithms via a new polynomial
representation that works over all large finite fields.  The approach is very closely related to the
classical FFT algorithm, but instead of working with subgroups of $\fq$ of smooth order (be they multiplicative or additive), it works with {\em elliptic curve groups} with large, smooth order subgroups,
{\em which exist for all $\fq$}.

Our approach is unrelated to all the recent results mentioned above, and unconditionally yields some
new results that would follow if $M_q(n) = O(n \log n)$ were true.

The new representation for polynomials suggested here is essentially the evaluation tables of the
polynomials at carefully chosen subsets % $S$
% \david{$S$ is incosistent with $L$ above}
% \dan{I think this is fine. The representations are indeed not over all of $L$, but over subsets
% $S \subset L$ which are basic sets}
of $\fq$. These sets % $S$
are related to some subgroup of some elliptic curve over $\fq$.
This is the analogue of representing polynomials by evaluations at % $S$ to be
multiplicative/additive subgroups of large, smooth order, which is only possible when $q$ is
special---either a power of a constant prime or such that $q-1$ is divisible by a large smooth factor.

In the classical multiplicative subgroup based FFT, we can convert the evaluation table representation into the
standard coefficient representation in time $O(n \log n)$ via the classical inverse FFT.
Unfortunately, in our elliptic curve group case, we do not know how to do this conversion as fast.
What we can do instead is to quickly {\em extend} the evaluation of the polynomial on our chosen subset $S$ to
another subset $S'$ of $\fq$. This is the analogue of using a combination of FFT and inverse-FFT
(with some scaling) to use the evaluations of some low degree polynomial at a multiplicative
subgroup $S$ to deduce the evaluations of that low degree polynomial at some coset of $S$. In fact, the way we
compute the low degree extension to the subset $S'$ is also a combination of some FFT-like
transform (which we call the ECFFT) and the inverse transform.
%,where FFT-like means it uses butterfly diagrams, as do classical FFTs.
It just so happens that the intermediate representation, i.e., the result of our iFFT-analogue, is not the standard monomial expansion of the polynomial, but some other representation. In this respect, our approach resembles the additive FFT-like transforms of \cite{LWH2014} which also lead to non-monomial representations supporting fast operations (see also~\cite{GM2010,Cantor}); however, their algorithms have $S,S'$ being additive subgroups of $\fq$, and require $\fq$ to have constant characteristic to have $O(n \log n)$ running time.
% \swastik{Check!}
% \eli{Not sure what needs checking but last sentence lgtm}
% \swastik{Okay, I checked it, we can leave it as is.
%
% Just for future reference for us: The precise results about constant characteristic are sort of messy. I think it is not known whether in the standard monomial representation polynomials can be multiplied in $O(n \log n)$ time. The fastest has $O(n \log n \log\log n)$ time, that too only on certain extension fields $F_{2^r}$ ($r$ has to be a power of 2). Over arbitrary $r$, I believe the fastest known algorithm for polynomial multiplication in the monomial basis is $O(n \log^{(1+\epsilon)}(n))$. Not sure if the integer multiplication breakthroughs changed this.
% }

% \swastik{added footnote about precomputation, ptal}
% \dan{looks great!}
We systematically exploit the above-mentioned fast algorithm for extending
polynomial evaluations on special sets to
develop fast algorithms\footnote{We remind the reader that the model of computation is algebraic circuits (and for one problem, algebraic decision trees), where the circuit may depend arbitrarily on $n$ and $q$. The preprocessing cost of setting up this circuit for a given $n$ or $q$, which in our case involves searching for a suitable elliptic curve, is not included in the complexity bounds. Under standard number theoretic heuristics, this preprocessing can be done by a randomized Turing machine in $O( n \cdot  \poly(\log n, \log q))$ time. Details will appear in~\cite{ECFFT2}.} for a variety of polynomial computation problems, giving the following results, defined formally in \cref{sec:algorithms}:
\begin{enumerate}
\item When polynomials of degree less than $n$ over $\fq$, $n \leq q^{O(1)}$, are represented as evaluations over special sets, the following three operations can all be done in time $O(n \log n)$:
\begin{enumerate}
\item addition,
\item multiplication, and
\item degree computation\footnote{The formal model for this is Algebraic Decision Tree (since the output is an integer), and by ``running time'' for this model we mean the depth of this tree.}
\end{enumerate}
Note that addition trivially takes $O(n)$ time for polynomials evaluated on any set of points, as does
multiplication of polynomials whose degrees sum to less than $n$; the crux here is that polynomials
can still be multiplied in quasi-linear time even if their product has degree above $n$,
by extending the evaluations to a larger set, supporting higher degrees. Degree compuation is also non-trivial,
as the polynomials are not represented directly by their coefficients.
% \dan{I strongly disagree: while addition does indeed take just linear time, this isn't true for multiplication!
% It is not enough to just take the pointwise products, since this does not necessarily lead to a
% representation of the product, whose degree can be larger than the special evaluation set, and thus require
% us to extend it. The ability to extend the special set quickly (which enables multiplication)
% is the crux, not degree computation! I don't think we actually have any real application for degree
% computation... certainly not in this paper.}
% \eli{agree, changed, is it now ok?}
% \david{I don't see that it is changed}
%\dan{It was changed, but not far enough to my taste; I changed it again.}

As far as we know, this is the only known representation of polynomials that allows all the above three operations
to be computed in $O(n \log n)$ algebraic operations for general $q$ and $n \leq q^{O(1)}$.

A folklore question, which was recently resolved
by the breakthrough on integer multiplication~\cite{HH21},
asked to find a representation of integers that supports addition, multiplication and comparison in $O(n \log n)$ time. Our result can be viewed as a positive answer
to the analogous question for polynomials over arbitrary finite fields.

\item We develop fast algorithms for other basic operations on these representations, such as division with remainder and Chinese remaindering, modulu fixed polynomials.
%\dan{Suggest changing ``with preprocessing'' here to the more explicit ``modulo fixed polynomials''.}
%\eli{done}

\item Converting between our new representation and the standard representation of polynomials by their monomial coefficients (in both directions) can be done in time $O(n \log^2 n)$.
\end{enumerate}

Armed with these tools for working with polynomials in the new representation, we get the following new results for
classical problems that have nothing to do with the new representation. All these results improve on the state of the art by a multiplicative $\exp(\log^* n)$ factor, and are consequences of the conjectured bound $M_q(n) = O(n \log n)$. See Section~\ref{sec:classical} for the formal statements.
\begin{enumerate}
\item We give an $O(n \log^2 n)$ time algorithm to evaluate all $n$ elementary symmetric polynomials on $n$ inputs,  provided $n \leq q^{O(1)}$. It was not known how to do this in general for all $n \leq q^{O(1)}$ even for the computation of just the $n/2$-th elementary symmetric polynomial.
\item Given an arbitrary set $B$ of $n$ points, we give an $O(n \log^2 n)$ time algorithm for interpolating a polynomial (and representing it in the standard monomial basis) from its evaluation on $B$ (we allow preprocessing based on $B$).
\item We give an $O(n \log^2 n)$ time algorithm for multi-point evaluation of a degree $<n$ polynomial at an arbitrary set $B$ of $n$ points (here, too, we allow  preprocessing based on $B$).
\item Combining the above two results, we get a an $O(n \log^2 n)$ time algorithm for computing low-degree extensions of function evaluated at $n$ arbitrary points to $n$ other arbitrary points. The two sets of points are assumed to be known in advance, and preprocessed to derive constants used by the algorithm.
\end{enumerate}

We believe this representation will have further uses in the development of fast algorithms for polynomial algebra. The most compelling question here is whether these methods can improve the bound on $M_q(n)$ itself. It is also
% extremely
interesting to see if we can do away with the need for preprocessing in the above algorithms.

\paragraph{Further applications in Part II~\cite{ECFFT2}:}
The applications of FFT-like divide and conquer for polynomials is not limited to the design of
fast algorithms. In a sequel to this paper (which is oriented towards applied cryptography),
we explore applications
of the Elliptic Curve based Fast Fourier Transforms to interactive oracle proofs (IOPs), IOPs of proximity (IOPPs)
for algebraic geometry codes and scalable transparent arguments of knowledge (STARK) systems,
generalizing the use of the standard FFT in PCPPs for Reed--Solomon codes~\cite{BenSassonSudan}, the FRI protocol for
proving proximity to Reed--Solomon codes~\cite{BBHR-FRI}, and the STARK protocol and analogous transparent IOP based
proof systems for verifying general computation~\cite{BBHR-STARK,aurora,fractal,ethSTARK}. Because of applications of
the latter two to cryptography in the real world, where the natural field of definition of the problems is specified by
external sources, there is a natural need to prove computational integrity statements about computations of length $n$
executed over specific finite fields $q\gg n$. For example, the $q$ used in the ECDSA algorithm that is part of the
Bitcoin standard is such that $q-1$ has no large smooth factor, and this is also the case for any $q$ which is a
``safe prime''
%\david{is it realy the condition for a safe prime? I don't think that secp256k1 satisfies it}
which means that $(q-1)/2$ is a large prime. Indeed, such examples were the original motivation for looking for generalizations of FFTs to all fields, and resolving it requires a deeper scrutiny of the ECFFT, used here only as a ``black-box'', and several other ideas.

%\swastik{We can do Ben-Sasson Sudan PCPPS for RS codes over all fields, right? If so, that speaks to cold-hearted
% FOCS reviewers and is worth mentioning.}
%\eli{Right!! You can show the RS-PCPP for RS codes evaluated over the new sets. But you can't get BS PCPs, because
% for arithmetization you need the ``next'' operation, i.e., to use automorphisms of the RS code, and there ain't none
% here (but the AG codes of next paper have them)... Come to think of it, would be best to mention the PCPP next to the
% FRI, they're really quite the same, so, that's for next paper}
%\swastik{Maybe we mention something about studying the FFT transform itself in part 2, not just the extension map.}
%\eli{added, at very end, an appetizer/teaser}
%
%
%\eli{Swastik, I've commented out the ``other related work'' because I don't see why we need to say that we don't get
% anything in the Turing machine model. The main reason is that we haven't really thought about it, so why preclude we
% can't get something better than what follows from the new integer mult stuff? If you want to leave it, I'd suggest
% phrasing it as an open question, i.e., asking whether our results can be used to obtain the same, or improve on, the
% int. mult. stuff. Notice that binary fields results are already discussed.}

\nocite{Pospelov11}

%
%
%\swastik{
%\subsection{Other Related Work}

%We remark that the results of \cite{HH21} on integer multiplication imply (via Kronecker substitution, see Section 1.2 of~\cite{HH19}) that multiplication of degree $n$ polynomials over $\fq$ for $n = q^{O(1)}$, can be done in time $O(n \log q \log(n \log q))$ in the Turing machine model, which seems to be as good a bound one can hope to deduce from the conjectured $M_q(n) = O(n \log n)$. Thus our results do not imply anything new for the
%Boolean circuit/Turing Machine models.
%}
%Mention characteristic 2 guys.

\subsection{ECFFT -- Informal Explanation}
\label{sec:intuition}

%\swastik{June 2, 8:30pm Israeli time - Only this section of the intro remains.}

The standard FFT algorithm exploits the structure of the group of $2^\rounds$-th roots of
unity and its subgroups, using the squaring map  $x \mapsto x^2$ to simultaneously (i) project the
group of size $n$ to a subgroup of half the size and (ii) split a polynomial of degree $n$ into two
polynomials of half the degree, expressed using the squaring map.

Let $n = 2^\rounds$, and suppose we are working in a field $\F$
which contains all $n$ of the $n$th roots of $1$. Let $L\zr \subseteq \F$
denote all the $n$th roots of $1$, assuming we wish
to represent polynomials of degree $<n$
by evaluating them on $L\zr$.
Let $\psi(X) = X^2$ be the squaring map. For each $i$, let
$L^{(i+1)} = \psi(L\ii)$. Thus $L\ii$ is the set of $\frac{n}{2^i}$th roots of
unity in $\F$ and $\psi$ is a $2$-to-$1$ map of degree $2$ from $L\ii$ onto $L\iip$. Thus far we have
described how $\psi$ is used to ``compress'' an evaluation set $L\ii$ to a smaller evaluation set $L\iip$
of half the size. Simultaneously, $\psi$ can be used to ``split'' a polynomial presented in the standard
monomial basis thus:
$$P(X)=\sum_{i<n} a_i\cdot X^i =
\parens*{\sum_{i<n/2} a_{2i}\cdot \psi(X)^i} + X\cdot \parens*{\sum_{i<n/2} a_{2i+1}\cdot \psi(X)^i}
= P_0(\psi(X))+X\cdot P_1(\psi(X)).$$

The FFT evaluates $P$ on $L\zr$ by recursively evaluating both $P_0(Y)$ and $P_1(Y)$ on $y\in\psi(L\zr)=L\one$
and then combining the results using $O(n)$ operations via the formula above. The running time $F(n)$
satisfies the recursive formula $F(n)=2\cdot F(n/2)+O(n)$ leading to $O(n \log n)$ running time.

The essential elements we preserve in our ECFFT are the usage of degree-$2$ maps $\psi\ii$ that are $2$-to-$1$ maps on special sets of points $L\ii$ of size $\frac{n}{2^i}$, along with the ability to express a polynomial
$P(X)$ of degree $<n$  in terms of two other polynomials $P_0(\psi\ii(X)), P_1(\psi\ii(X))$ of degree $<n/2$,
such that the value of $P(x), x\in L\ii$ can be obtained ``locally'' from the values of $P_0(\psi\ii(x)),
P_1(\psi\ii(x))$. Thus, we use such maps and sets of points to describe new \emph{\FFTrees}. An \FFTree\ (see
\cref{def:fftree}) is an ``FFT-inspired'' object that is a layered binary tree whose nodes residing at the
$i$th layer are labeled by the members of $L\ii$, and such that the $2$-to-$1$ map $\psi\ii$ defines directed
edges from two elements $s_0,s_1\in L\ii$ to $t=\psi\ii(s_0)=\psi\ii(s_1)\in L\iip$.

So far we have listed similarities between the FFT and our new ECFFT, so let us now describe the differences.
First, our set $L\ii$ is not a multiplicative group, and in fact it is not a group at all (soon, in
\cref{sec:ode to EC}, we will explain what $L\ii$ actually is). But examining the classical FFT, we could do
its first step using \emph{any} degree-$2$ polynomial $\psi(X)$ which is $2$-to-$1$ on some set of points
$L\zr$ (mapping it to an arbitrary set of points $L\one$ of size $n/2$). The group structure is useful for
knowing, recursively, that we can find further $2$-to-$1$ maps from $L\one$ to $L\two$ and so on.
A second
point of difference is that our $2$-to-$1$ maps may vary with $i$, whereas the classical FFT uses only squaring\footnote{When $n$ is factored into different prime factors (say, $n=2^a\cdot 3^b$)
	one would also use different maps in the FFT (say, squaring and cubing) to move between $L\ii$ and $L\iip$, and varying maps are also used in additive FFTs \cite{GM2010,Cantor,LWH2014}.
%	\david{Additive FFT uses varying maps as well.}
%	\eli{added}
}
to move from $L\ii$ to $L\iip$.
Finally, the maps $\psi\ii$ we use are not degree-$2$ polynomials but rather degree-$2$ rational maps, ratios
of two degree-$2$ polynomials. We show that any such map is just as good for the purpose of splitting a
polynomial into two subpolynomials of half the degree (see \cref{lem:decomposition}), and using rational
maps rather than polynomials gives us more degrees of freedom when  searching for $2$-to-$1$ maps on
special sets of points. These points, and the way they are obtained, are our next, and main, point
in this intuitive description of the ECFFT.

\subsection{Elliptic Curves as a Source for \FFTrees over Arbitrary Finite Fields}
\label{sec:ode to EC}

Elliptic curves are a vast topic of study, with wide-ranging impact across mathematics (e.g.,~\cite{FLT}), and we shall not attempt to describe their importance here.
%\swastik{I think it is a funner reference if we do not say Fermat's last theorem, and just let the reader check the reference to see what we pointed to. So let us make it ``.. across mathematics (for example~\cite{FLT}), and we ... ''. But as is is fine too.}
%\david{I agree with Swastik, nevertheless I think that we should add more references because just one reference feels like not as wide-ranging impact as we claim }
%\dan{fully agree with David}
%\eli{changed phrasing, added one more, please add more}
%\swastik{I think it is above our pay grade to give serious references. We
%do say ``e.g.'' when we say FLT, which is a famous example, and beyond that I think we have no idea of how the landscape and relative bigness of applications looks .. Maybe we can make it ``.. with wide-ranging impact (allegedly) across ...''    :)   }
%\eli{I think the humor of adding (alegedly) won't be appreciated by our dry audience, so leaving only FLT}
An elliptic curve $E$ over the finite field $\fq$ is defined by a suitable polynomial $C(X,Y)\in \fq[X,Y]$,
and the solutions $(x,y)\in \fq^2$ of $C(X,Y) = 0$ are the points of interest (the description here is
intentionally simplified, see \cref{sec:EC background} for a formal and accurate definition). Elliptic
curves have some remarkable properties that have led to a number of significant and surprising applications
in theoretical computer science. A small selection of notable examples include:
Lenstra's elliptic curve method for factoring integers~\cite{Lenstra};
Schoof's deterministic algorithm for finding square roots modulo a prime~\cite{Schoof};
Couveignes and Lercier's randomized algorithm for finding irreducible polynomials over finite
fields~\cite{CoLe};
cryptosystems, starting with Miller's EC Diffie--Hellman (ECDH) key exchange~\cite{ECDH}
and Koblitz's EC integrated encryption scheme (ECIES)~\cite{ECEG, ECAES} and including Vanstone's EC digital signature algorithm (ECDSA)~\cite{ECDSA}
and applications based on pairings, such as Joux's one-round 3-way key agreement~\cite{joux} and the Boneh--Franklin identity based encryption protocol~\cite{BonehF03}.
%\dan{I'm not sure it makes sense to explicitly credit Koblitz with ECIES specifically, it seems he was the first
%to suggest EC ElGamal, which is a part of ECIES, but that \cite{ECAES} has more than that.}
%\eli{looks good}

We remark that Lenstra's method for factoring integers using elliptic curves \cite{Lenstra} in
particular was a major inspiration for this paper. Lenstra's method is a generalization of
Pollard's $p-1$ algorithm
for factoring~\cite{p-1}: The $p-1$ method only works when, for some prime factor $p$, the
multiplicative group $\F_p^\times$ has a special property, which is only true for few primes $p$.
Lenstra's method extends the $p-1$ method to all possible $p$'s by replacing the group $\F_p^\times$
with elliptic curves. Very similarly, the standard FFT works inside $\fq$ only when the
field has special roots of unity, which is true only for sporadic $q$, and this paper extends core
applications of FFT to all prime powers $q$ by replacing the group $\fq^\times$ with elliptic curves.

%(such as ... cite the first paper to propose elliptic curve groups for discrete log, the first bilinear pairings paper, maybe other famous EC-crypto firsts). \swastik{NEED TO FILL IN THE ABOVE.} \eli{done}

The main properties of elliptic curves that we use are:
%\dan{Maybe change both ranges below to $[q \pm 2\sqrt{q} + 1]$, as in the abstract? i.e.\@ keep the
%big-O notation only for $n=O(\sqrt{q})$.} \eli{accepted}
\begin{itemize}
    \item The number of points on the curve $E$ can be nearly any number in the range $[q + 1 \pm 2\sqrt{q}]$ (this is by theorems of Deuring and Waterhouse; see \cref{sec:group struct} for a precise discussion of the number of points).
	\item These points
	%(alongside a special non-$\fq$ point)
	%\david{the special point should appear earliear because we already talked about the number of points in the previous item}
%	\dan{removed the parenthetical entirely, no real need to mention the special point here}
	form an abelian group, called, appropriately, an \emph{elliptic curve group}. Varying over curves, and acknowledging  the previous point, elliptic curve groups could be of nearly any size in $[q\pm 2\sqrt{q}+1]$. In particular, we can find subgroups $G$ of elliptic curve groups of size $n=2^k$ for $n=O(\sqrt{q})$ (see \cref{thm:Waterhouse,cor:Waterhouse}).
	\item If $H < G$ are subgroups of an elliptic curve $E$  over $\fq$, there is an $|H|$-to-$1$ map $\phi$ (called an isogeny) with kernel $H$ from the points of the curve $E$ to points on a different curve $E'$ over $\fq$. Thus, the image of $G$ under the isogeny is of size $|G|/|H|$.
	%Additionally, this map is given by a pair of rational functions $\phi_x, \phi_y$ and $\phi_x$ is a degree-$|H|$ rational function (see \cref{prop:quotient isogeny}).
\end{itemize}

The observations above give us nearly all that we need. We can find a set of points $G\zr$ inside a curve $E\zr$ that
is a group of size $2^\rounds\leq O(\sqrt{q})$ irrespective of the exact nature of $q$, and we have at our disposal
isogenies  that ``compress'' groups of points $G\ii$ to groups $G\iip$ half the size via $2$-to-$1$ maps $\phi\ii$,
where the new group $G\iip$ belongs to a different curve $E\iip$. The only remaining gap is that elements in the groups
$G\ii$ are \emph{pairs} $(x,y)\in \fq^2$ whereas we are interested in univariate polynomials and evaluation sets over
$\fq$. The final ingredient is to pick curves represented in a certain format (extended Weierstrass form) such that
suitably shifting and then projecting $G\ii$ to the $x$ coordinate gives a set $L\ii\subset \fq$
that is the same size as $G\ii$ and, crucially, the
isogeny map $\phi\ii$ gives rise to a degree-$2$ rational map that is $2$-to-$1$ from $L\ii$ onto $L\iip$
(see \cref{prop:standard form,thm:curve sequence}).

\begin{remark} The degree-$2$ (or higher degree) maps so obtained are generalizations of {\em Latt\'es maps}~\cite{Latte} (see~\cite{Silverman-dynamics}).
Latt\'es maps are the rational maps arising from the $x$-coordinate mapping of isogenies from an elliptic curve to {\em itself}. The 
rational maps that underlie the $\FFTree$ construction arise from the $x$-coordinate mapping of isogenies from an elliptic curve $E$ to some other elliptic curve $E'$, which may or may not equal $E$.
\end{remark}

Summarizing, the abundance of elliptic curve groups of various sizes over any large finite field assures us that
we will find a subgroup of smooth size; isogenies and their projections give $2$-to-$1$ degree-$2$ rational maps from sets of size $2^\rounds$
(in $\fq$) to sets of size $2^{\rounds-1}$ for all needed $\rounds$, and thereby we have the needed \FFTree structure which 
leads to efficient FFT-like running times for all finite fields.

%
%
%Now we bring in the main ingredient: elliptic curves. Elliptic curves over finite fields
%are algebraic curves that come with an associated group operation defined by
%rational maps. There is a huge world of elliptic curves and group homomorphisms (given by rational maps)
%between them, and it is here where we find the desired FFT-friendly structures.
%We focus on elliptic curves with subgroups of size $2^k$, ..

\paragraph{The paper of Chudnovsky and Chudnovsky~\cite{ChCh}}
Some of the core ideas appearing in this paper were first suggested, in brief,
in the final section of a paper by Chudnovsky and Chudnovsky \cite[Section 6]{ChCh}.
The main claim from~\cite{ChCh} relevant for us is an ``Elliptic Interpolation Algorithm'' called 
FENTT (Fast Elliptic Number Theoretic Transform), which
describes how to use elliptic curve groups over finite fields
to solve a certain rational function interpolation problem via
an FFT-type algorithm.

We were made aware of this paper by an anonymous reviewer,
who further remarked that the authors only sketch
their idea. Indeed, the writing is extremely succinct, and many details are omitted or only
hinted at. Furthermore, we believe there is at least one critical point which was overlooked in that
work, which significantly limits the applicability of the FENTT. 

As a consequence of this error, it turns out (via a result published independently by Ruck~\cite{Ruck}
and Voloch~\cite{Voloch})
that the instance size $n$ of an FENTT over $\F_q$ is bounded from above in terms of the prime
factorization of $q-1$ (as is the case for FFTs).
The net result is that an FENTT of size $n$ can be computed over $\F_q$ only if 
an FFT of size $\Theta(\sqrt{n})$ can be computed over $\F_q$ --- thus, finite fields
that do not support large FFTs also do not support large FENTTs. In contrast, large ECFFTs 
(of size $q^{\Omega(1)}$) are supported by all finite fields $\F_q$. This fact was crucial to our
faster algorithms for working with polynomials of degree $n$ over $\F_q$ for all $n \leq q^{O(1)}$.

We give details about the ideas, methods, and limitations of~\cite{ChCh} in relation to ours in Appendix~\ref{sec:ChCh-comp}.

\paragraph{Organization of paper}
The following \cref{sec:notation} gives notation. \cref{sec:FFTrees} defines and discusses (i) the \FFTree data structure and (ii) the polynomial decomposition lemma (using rational maps); these two ingredients are needed to abstract and generalize
the classical FFT algorithm to arbitrary sets of points and maps.
\cref{sec:FFTrees from EC} instantiates
\FFTrees and decomposition maps using elliptic curve and projections of isogenies, showing
that the necessary data structures exist over all large finite fields. \cref{sec:representations} defines the way we represent polynomials for efficient operations -- by evaluating them over the
special sets of points that arise from the previously defined \FFTrees.
\cref{sec:algorithms} presents fast algorithms for
fundamental operations applied to polynomials that are represented in this special way.
Finally, \cref{sec:classical} uses these efficient algorithms to efficiently solve ``classical'' problems about polynomials, like interpolation, evaluation over general sets of points, and computation
of elementary symmetric polynomials.

\iffalse{
Our FFT-like polynomial algorithms are based on a number of variations on the above structure.

First, we allow the $2$-to-$1$ maps between the consecutive $L\ii$ to be
different maps $\psi\ii: L\ii \to L\iip$. It is easy to see that the
decomposition above holds with any quadratic polynomial of $X$ in place of $X^2$.

Even with this flexibility, it is not clear that what we want exists. Is it true
that for every finite field there a sequence of sets $L\ii$ and
quadratic polynomials $\psi\ii$
We do not know.

Next we allow the $\psi\ii$ to be {\em rational maps} of degree $2$. While this introduces
some complications because we move out of the space of polynomials, it turns out that
an analogue of the above decomposition does hold.

With this FFT-friendly family of rational maps in place, we develop data structures and algorithms
for working with polynomials that exploit it. Here a new challenge arises.
When we worked with the simple map $\psi(X) = X^2$.
}\fi

\section{Notation}
\label{sec:notation}

%\subsection{Paper-specific notation}

\subsection{Functions and Polynomials}

For $g:D\to R$ a function and $S\subset R$ denote by $g^{-1}(S)$ the set of $g$-preimages of $S$,
%\david{did you mean preimages of $S$?} \dan{fixed}
namely $g^{-1}(S)=\{x: g(x)\in S\}$, and for $u\in R$ let $g^{-1}(u)=g^{-1}(\{u\})$. Likewise for $D'\subset D$ we let $g(D')=\{g(x): x\in D'\}$.

For a set $A \subseteq \fq$, we define {\em the vanishing polynomial} of $A$
to be the polynomial $Z(X) \in \fq[X]$ given by:
$$ Z(X) = \prod_{\alpha \in A} (X - \alpha).$$

We define:
$$ B(X) \rem A(X)$$
to be the unique polynomial with degree $ < \deg(A)$
which is congruent to $B(X)$ mod $A(X)$.

When $B(X), A(X$) are coprime polynomials, we define
$$(B(X))^{-1}_{A(X)}$$
to be the unique polynomial $C(X)$ with degree $< \deg(A)$
such that $B(X) \cdot C(X) \equiv 1 \pmod{A(X)}$.

\subsection{Projective Space}
We denote by $\P^n(\fq)$ (or simply $\P^n$) the \emph{$n$-dimensional projective
space} over $\fq$; only $\P^1$ and $\P^2$ will appear in the paper.
Points in $\P^n$ are given by homogenized coordinates $[x_1:x_2:\dots: x_{n+1}]$
where at least one $x_i$ is non-zero, and with the equivalence relation
$$[x_1:x_2:\dots: x_{n+1}] \sim [cx_1 : cx_2 :\dots :cx_{n+1}], \quad \forall c \neq 0.$$

Points in the \emph{affine} space $\fq^n$ are given by affine coordinates $(x_1,\dots, x_n)$,
and in this paper we equate such points with their standard embedding into projective space, i.e.\@
$$(x_1, \dots, x_n) = [x_1 :\dots : x_n : 1].$$
Thus, $\P^n$ is the disjoint union of $\fq^n$ and a copy of $\P^{n-1}$ ``at infinity'', i.e.\@ with
an additional $x_{n+1}=0$ coordinate. In particular, $\P^1(\fq) = \fq \cup \{\infty\}$, where
$\infty$ denotes the unique point at infinity, $[1:0]$.

We will refer to the two coordinates of the affine plane $\fq^2$ as $x$ and $y$. For a point
$P \in \fq^2$, we will denote its $x,y$ coordinates by $P_x, P_y$, respectively. For a point
$P \in \P^2$, the coordinates $P_x, P_y$ will only be defined if it is an affine point, according
to the above notation.

\subsection{Rational functions}
\emph{Rational functions} over $\fq$ are quotients $R(X) = P(X)/Q(X)$ where $P(X),Q(X) \in \fq[X]$
are coprime polynomials and $Q$ is non-zero. Rational functions form a field, denoted by
$\fq(X)$.

Rational functions can be considered as maps from $\P^1$
to itself, where zeros of $Q$ are mapped to $\infty$ and are called \emph{poles} of the rational
function, with multiplicity equal to their multiplicity as zeros of $Q$. Depending on whether
$\deg(P) - \deg(Q)$ is positive, negative, or zero, the point $\infty$ is
either a pole of multiplicity $\deg(P) - \deg(Q)$, a zero of multiplicity $\deg(Q) - \deg (P)$, or mapped
to the ratio between the leading coefficients of $P$ and $Q$, correspondingly.

The \emph{degree} of $R$ is defined as $\deg(R) \coloneqq \max(\deg(P), \deg(Q))$, and is equal
to both the total number of zeros and the total number of poles of $R$, including at $\infty$,
counted with multiplicity.

\section{Polynomial decompositions and \FFTrees} \label{sec:FFTrees}

In this section we show that any rational map can be used to decompose a polynomial
into lower degree polynomials, in a way similar to how the squaring map is used in FFTs
(see \cref{lem:decomposition}). We then define a generalized notion of FFT-like sets of evaluation
points (\cref{sec:fftrees definition}). In the next section we shall instantiate both of
these---rational maps and \FFTrees---using elliptic curve groups.

\subsection{Polynomial decompositions based on rational functions}
\label{sec:EC poly decomposition}

Let $V_d$ be the $\fq$-linear subspace of $\fq[X]$ consisting of
polynomials of degree strictly less than $d$. A crucial component in the standard FFT is the decomposition of a polynomial $P(X)=\sum_{i<d} a_i X^i\in V_d$ into two polynomials in $V_{d/2}$, one containing the terms of even degree and the other containing the terms of odd degree:
\begin{equation}\label{eq:FFT equality}
	P(X)=\left(\sum_{i<d/2} a_{2i} \left(X^2\right)^i\right)+ X\cdot \left(\sum_{i<d/2} a_{2i+1} \cdot \left(X^2\right)^i\right) = P_0(X^2)+X\cdot P_1(X^2).
\end{equation}
The results of this section generalize this partition by replacing $X^2$ with any rational function. Later, we shall instantiate the results of this section with rational functions coming from projections of isogenies of elliptic curves. We state the decomposition lemma next; its proof appears
in Appendix~\ref{sec:proof-of-decomposition}.

\begin{restatable}[Decomposition]{lemma}{decomp}
\label{lem:decomposition}
Let $\psi(X) \in \fq(X)$ be a rational map given by:
$$ \psi(X) = \frac{u(X)}{v(X)},$$
where $u(X),v(X) \in \fq[X]$ are relatively prime polynomials.
Let $\delta = \deg(\psi) = \max \{ \deg(u), \deg(v) \}$.
Let $d$ be a multiple of $\delta$.
%For a function $f : \plin \to \plin$,
%define $\psi^*(f) = f \circ \psi$.
Then for every $P(X) \in V_d$, there is a unique tuple:
$$ \left(P_0(X), P_1(X), \ldots, P_{\delta-1}(X)\right) \in (V_{d/\delta})^\delta$$
such that:
\begin{equation}\label{eq:ECFFT equality}
	P(X) = \left( \sum_{i=0}^{\delta-1} X^i \cdot P_i(\psi(X)) \right) \cdot v(X)^{\frac{d}{\delta}-1}.
\end{equation}
%\swastik{Seems like we never use this .. also we didn't define $\psi^*$
%Thus:
%\begin{align}
%\label{dsum}
%V_d = \bigoplus_{i=0}^{\delta-1} X^i \cdot \psi^*\left(V_{d/\delta}\right) \cdot v(X)^{\frac{d}{\delta}-1}.
%\end{align}
%}
\end{restatable}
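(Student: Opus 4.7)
The approach is to reframe the decomposition as a claim about a linear isomorphism between two $\fq$-vector spaces of equal dimension, and then establish injectivity using the degree of the field extension $\fq(X)/\fq(\psi(X))$. Define the $\fq$-linear map
$$\Phi: (V_{d/\delta})^\delta \to \fq[X], \qquad (P_0,\dots,P_{\delta-1}) \mapsto v(X)^{d/\delta-1}\sum_{i=0}^{\delta-1}X^i P_i(\psi(X)).$$
First I would check that $\Phi$ actually lands in $V_d$: for each $i$, write
$$v(X)^{d/\delta-1} P_i(\psi(X)) = v(X)^{d/\delta-1-\deg P_i}\cdot\bigl(v(X)^{\deg P_i}P_i(u(X)/v(X))\bigr),$$
whose second factor is a polynomial in $\fq[X]$ of degree at most $\delta\cdot\deg P_i$, so the whole product is a polynomial of degree at most $\delta(d/\delta-1)=d-\delta$; multiplying by $X^i$ with $i<\delta$ keeps the degree below $d$. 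Since the domain and codomain both have $\fq$-dimension $d$, it suffices to show $\Phi$ is injective to obtain simultaneously the existence and uniqueness claimed in the lemma.

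For injectivity, suppose $\Phi(P_0,\dots,P_{\delta-1}) = 0$. Since $v(X)$ is nonzero, we may divide in $\fq(X)$ to get
$$\sum_{i=0}^{\delta-1} X^i P_i(\psi(X)) = 0.$$
Each coefficient $P_i(\psi(X))$ belongs to the subfield $\fq(\psi(X))\subseteq \fq(X)$. The key algebraic input I would invoke is that $\fq(X)$ is a field extension of $\fq(\psi(X))$ of degree exactly $\delta$, with $\{1,X,\dots,X^{\delta-1}\}$ a basis. Granting this, the displayed identity is a linear dependence of basis elements over $\fq(\psi(X))$, forcing $P_i(\psi(X))=0$ for each $i$; since $\psi(X)$ is transcendental over $\fq$, this forces each polynomial $P_i$ to be zero.

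The main technical step is the degree/basis claim for the extension $\fq(X)/\fq(\psi(X))$. I would prove it by exhibiting the minimal polynomial of $X$ over $\fq(\psi(X))$, namely
$$M(T) \coloneqq u(T) - \psi(X)\cdot v(T) \in \fq(\psi(X))[T],$$
which vanishes at $T=X$ by the definition $\psi(X)=u(X)/v(X)$. Its $T$-degree is exactly $\delta = \max(\deg u, \deg v)$: when $\deg u \neq \deg v$ the leading term comes unambiguously from $u$ or $v$, and when $\deg u = \deg v$ the leading coefficient $\mathrm{lc}(u) - \psi(X)\cdot\mathrm{lc}(v)$ is nonzero since $\psi(X)$ is transcendental over $\fq$. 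Irreducibility follows from Gauss's lemma: viewing $M(T)$ in $\fq[\psi(X)][T]$ it has $\psi(X)$-degree one, so any nontrivial factorization must include a factor in $\fq[T]$ alone, and such a factor would divide both $u(T)$ and $v(T)$, contradicting $\gcd(u,v)=1$. Hence $[\fq(X):\fq(\psi(X))]=\delta$ with basis $\{1,X,\dots,X^{\delta-1}\}$, completing the argument. I expect this algebraic lemma about the minimal polynomial to be the main point requiring care; everything else is bookkeeping with degrees and dimensions.
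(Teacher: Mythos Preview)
Your proof is correct, and the overall architecture---set up a linear map between equal-dimensional spaces and prove injectivity---matches the paper's. The difference lies entirely in how injectivity is established. You invoke the field-theoretic fact that $[\fq(X):\fq(\psi(X))]=\delta$, proved via the minimal polynomial $u(T)-\psi(X)v(T)$ and Gauss's lemma, so that $\{1,X,\dots,X^{\delta-1}\}$ is linearly independent over $\fq(\psi(X))$. The paper instead argues elementarily inside $\fq[X]$: it rewrites the sum as $\sum_j Q_j(X)\,u(X)^j v(X)^{d/\delta-1-j}$ with $\deg Q_j<\delta$, takes the minimal $j_0$ with $Q_{j_0}\neq 0$, divides by $u(X)^{j_0}$, and reduces modulo $u(X)$ (assuming WLOG $\deg u=\delta$) to get the nonzero residue $Q_{j_0}(X)v(X)^{d/\delta-1-j_0}$. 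Your route is cleaner conceptually and makes transparent \emph{why} the number $\delta$ appears (it is the degree of a field extension); the paper's route is more self-contained, avoiding any appeal to field extensions or Gauss's lemma, and stays within polynomial arithmetic that mirrors the later algorithmic manipulations.
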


%\eli{suggest adding a part, or separate lemma that deals with locality, and will be used later in the algorithmic part, here's the sketch of it. If we add it, then it can replace the equations (2)-(4) below.

The next statement says that, as in the case of the standard FFT,
moving between the two representations of \cref{eq:FFT equality,eq:ECFFT equality} is done via a set of $\delta$-local invertible linear transformations.

\begin{lemma}[Locality and invertibility]\label{lem:locality}
Let $t \in \fq$. Keeping the notation of the previous lemma,
suppose $\psi^{-1}(t)=\{s_0, \ldots, s_{\delta-1}\}$ is a set of elements of $\fq$ of size exactly $\delta$.
Then the transformation \begin{equation}\label{eq:M} M_{t}:\fq^\delta\to \fq^\delta, \quad M_{t}(P(s_0),\ldots,P(s_{\delta-1}))\mapsto (P_0(t),\ldots, P_{\delta-1}(t))
	\end{equation} is linear and invertible.
\end{lemma}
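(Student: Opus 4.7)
The plan is to read off an explicit matrix representing $M_t^{-1}$ from the decomposition formula (\ref{eq:ECFFT equality}), and to verify that this matrix is invertible because it factors as a product of a Vandermonde matrix with a nonzero diagonal matrix.

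First I would plug $X = s_j$ into (\ref{eq:ECFFT equality}). Since $\psi(s_j) = t$ by assumption, this collapses the values $P_i(\psi(s_j))$ to $P_i(t)$, yielding
\begin{equation*}
P(s_j) \;=\; v(s_j)^{\frac{d}{\delta}-1} \cdot \sum_{i=0}^{\delta-1} s_j^{\,i} \cdot P_i(t),
\qquad j = 0, \ldots, \delta-1.
\end{equation*}
This identity exhibits the map
$\bigl(P_0(t), \ldots, P_{\delta-1}(t)\bigr) \mapsto \bigl(P(s_0), \ldots, P(s_{\delta-1})\bigr)$
as left-multiplication by the matrix $N = D \cdot V$, where $D$ is the $\delta \times \delta$ diagonal matrix with entries $D_{jj} = v(s_j)^{d/\delta - 1}$ and $V$ is the $\delta \times \delta$ Vandermonde matrix with entries $V_{ji} = s_j^{\,i}$.

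Next I would argue that $N$ is invertible. The matrix $V$ is invertible because its nodes $s_0, \ldots, s_{\delta-1}$ are pairwise distinct by hypothesis. The matrix $D$ is invertible because each $v(s_j)$ is nonzero: indeed, $u$ and $v$ are coprime, so any common zero is impossible, and if $v(s_j) = 0$ then $\psi(s_j) = \infty$, contradicting $\psi(s_j) = t \in \fq$. Hence $N = DV$ is invertible.

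Finally, I would use the existence and uniqueness guaranteed by the Decomposition Lemma to conclude that $M_t = N^{-1}$. Concretely, given any vector $\bm{y} = (y_0, \ldots, y_{\delta-1}) \in \fq^\delta$, we can lift it to some $P \in V_d$ with $P(s_j) = y_j$ (for instance by interpolating the unique polynomial of degree $< \delta \le d$); the lemma produces the unique tuple $(P_0, \ldots, P_{\delta-1})$, and the identity $N \cdot (P_0(t), \ldots, P_{\delta-1}(t))^\top = \bm{y}$ together with the invertibility of $N$ shows both that $(P_0(t), \ldots, P_{\delta-1}(t))$ is independent of the lift (so $M_t$ is well-defined on $\fq^\delta$) and that $M_t = N^{-1}$, which is linear and invertible.

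I do not expect a serious obstacle; the only non-routine step is verifying that the diagonal factor $D$ is nonsingular, and this follows immediately from coprimality of $u, v$ together with the hypothesis $t \in \fq$ (as opposed to $t = \infty$).
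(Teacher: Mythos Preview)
Your proof is correct and follows essentially the same approach as the paper: plug $X=s_j$ into the decomposition formula, recognize the resulting linear system as a Vandermonde matrix with rows scaled by the nonzero factors $v(s_j)^{d/\delta-1}$, and conclude invertibility. Your treatment is in fact more careful than the paper's, which leaves the well-definedness of $M_t$ implicit; you make explicit that invertibility of $N$ forces $(P_0(t),\ldots,P_{\delta-1}(t))$ to depend only on the values $(P(s_0),\ldots,P(s_{\delta-1}))$ and not on the particular lift $P\in V_d$.
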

%\dan{I switchd $i,j$ in the proof of the lemma, so that they correspond to the row/column indices of the
%left multiplication matrix in the standard way. For me this makes the last line more obvious (instead
%of appearing transposed). The new indexing of $P_j$ is maybe incosistent with the above $P_i$; we
%could also change all the $i$s to $j$ but that's probably unnecessary. If the constistency is more
%important than the orientation of the matrix, switch back.}
%\swastik{I switched back for consistency }
\begin{proof}
The assumption $t\in \fq$ and, in particular, $t\neq \infty$, implies $v(s_j)\neq 0$ for each $s_j$.
The relationship between the $P(s_j)$ and the $P_i(t)$ is captured by the following system of linear equations:
$$ P(s_j) = \left( \sum_{i=0}^{\delta-1} s_j^i \cdot P_i(t) \right) \cdot v(s_j)^{\frac{d}{\delta}-1}.$$
Inspection shows that the underlying matrix is a nonsingular Vandermonde matrix with rows scaled by nonzero scalars.
\end{proof}

For the rest of this paper, we will focus on the $\delta = 2$ case, although
everything generalizes to larger $\delta$. We briefly instantiate the above lemmas in this case,
to expose the similarity to the classical FFT.

Let $\psi(X)$ be a degree $2$ rational function. Suppose $d$ is even.
Fix any $P(X) \in V_d$, and consider the two polynomials $P_0(X), P_1(X)$ given by \cref{lem:decomposition}.
Then we have the following decomposition that resembles the classical FFT case of \cref{eq:FFT equality}:
$$P(X) = \left( P_0(\psi(X)) + X P_1(\psi(X)) \right) \cdot (v(X))^{\frac{d}{2}-1},$$
and so, for any $s \in \fq$:
\begin{align}
\label{eqPmain}
P(s) = \left( P_0(\psi(s)) + s P_1(\psi(s))\right) \cdot (v(s))^{\frac{d}{2}-1}.
\end{align}

Let $s_0, s_1, t \in \fq$ be such that $\psi(s_0) = \psi(s_1) = t$ with
$s_0 \neq s_1$. Then Lemma~\ref{lem:locality} implies that the values $P(s_0), P(s_1)$ determine $P_0(t), P_1(t)$
and vice versa (this uses the fact that $s_0 \neq s_1$), and the transformation between the two pairs of values
is computed by multpilication by an invertible $2\times 2$ matrix, whose coefficients depend only on
the values of $s_0, s_1, v(s_0)$, and $v(s_1)$.
% \david{I'm not sure that the term 'fixed' is clear enough here, it is not a constant, but it depends only on $s_i$ and $\psi$}
% \dan{agreed, changed}

Thus, when we have a degree 2 rational function $\psi$ that is $2$-to-$1$ from $S$
% \david{again inconsistency between the notations $S$ and $L$}
% \dan{Again I think this is fine, this claim is relevant to basic sets $S$ of any size and not necessarily all of $L$}
to $\psi(S) = T$, finding evaluations of a polynomial $P(X)$ at the points of $S$ is equivalent to finding
evaluations of $P_0(X)$ and $P_1(X)$ at the points of $T$.

\subsection{\FFTrees}\label{sec:fftrees definition}

We now define \FFTrees, a structure abstracting out relevant properties of evaluation sets and maps between them, which suffice to simulate
an FFT-like algorithm.

\begin{definition}[\FFTrees]
	\label{def:fftree}
Let $q$ be a prime power, and let $\rounds$ be an integer.
An $\FFTree$ over $\fq$ of \emph{depth} $\rounds$
is a collection of subsets $L\zr, L\one \ldots, L\fin \subseteq \fq$ along
with degree $2$ rational functions $\psi\ii(X) \in \fq(X)$ such that:
\begin{enumerate}
\item $|L\ii| = 2^{\rounds - i}$.
\item $\psi\ii(L\ii) = L\iip$ (and so $\psi\ii$ is a $2$-to-$1$ map from $L\ii$ to $L\iip$).
\end{enumerate}
Let $\calF$ denote the rooted, layered, binary tree, whose layers are indexed by $i \in \{0,1,\ldots, \rounds\}$. The set of vertices in layer $i$ is $L\ii$. The root
of $\calF$ is the unique element of $L\fin$.
The leaves of $\calF$ are all the vertices in $L\zr$.
For each $i < \rounds$, the parent of the vertex $s \in L\ii$ of the $i$-th layer is the vertex $\psi\ii(s) \in L\iip$ of the $(i+1)$st layer.
\end{definition}

Because of the decomposition lemma, evaluations of a polynomial on $L\ii$ can be deduced from evaluations of 2 related lower degree polynomials on $L\iip$, and this serves as the basis for fast ``divide and conquer'' algorithms.

Our eventual use of $\FFTrees$ will be as follows. We will first fix an $\FFTree$ over $\fq$. We will use $L$ to denote $L\zr$.
Let $\nn = |L| = 2^{\rounds}$.
Then for any $n \le \nn$, polynomials of degree $< n$
% \david{$n$ or $\nn$?}
will be represented by evaluations at specific subsets of $L$ of size $O(n)$.
% \david{what do you mean by specially chosen points? I suggest: ``Then polynomials of degree $< \nn$ will be represented by evaluations at $L$''}
The $\FFTree$ structure will then enable fast algorithms for working with these representations.

Thus any given $\FFTree$ will be useful for working with polynomials of degree up to $2^{\rounds}-1 $.
Therefore it is interesting to find $\FFTrees$ with as large depth $\rounds$ as possible.

In the next section, we use elliptic curves to show the existence
of $\FFTrees$ over $\fq$ with depth $\Omega(\log q)$.

\section{$\FFTrees$ from Elliptic Curves}
\label{sec:FFTrees from EC}
In this section we prove the existence of $\FFTrees$ of depth $\Omega(\log q)$ in any finite field $\fq$. Specifically, we show that there exist $\FFTrees$ over $\fq$ whose base set $L\zr$ has size $\Omega(\sqrt{q})$.
% Later on we shall use such
%sets of points to obtain fast algorithms for fundamental operations on polynomials.
We start by recounting the necessary definitions and results regarding elliptic curves.
In \cref{sec:EC nice maps} we then prove our main results about existence of $\FFTrees$ using rational
maps that are projections of isogenies.
%In \cref{sec:EC poly decomposition} we prove that polynomials can be ``decomposed'' using any rational map, in particular, the rational maps used to construct our FFT-like sets.

\subsection{Background on elliptic curves and isogenies}\label{sec:EC background}

In this subsection we provide a brief overview of the necessary definitions and theorems regarding
elliptic curves.
Further details and proofs can be found in most basic texts on the subject. Except where
specifically noted, all results can be found in \cite{Silverman} or \cite{Wash08}.

\subsubsection{Elliptic curve in Weierstrass form}
An \emph{elliptic curve} $E$ is a smooth, projective, algebraic curve of genus 1, with a special
marked point $O$, defined over a field. In this paper all curves will be defined over the finite field $\fq$.
Every elliptic curve can be presented in \emph{extended Weierstrass form}
as the set of planar points $(x, y) \in \fq^2$ satisfying a cubic equation
\begin{equation}
	 Y^2 + a_1 XY + a_3 Y = X^3 + a_2 X^2 + a_4 X + a_6
\end{equation}
or equivalently
\begin{equation}
	\label{eq:weierstrass}
	F(X,Y):=
		Y^2 + a_1 XY + a_3 Y - X^3 - a_2 X^2 - a_4 X - a_6 = 0
\end{equation}
parameterized by $a_1,a_2,a_3,a_4,a_6$,
together with the marked point $O = [0:1:0] \in \P^2(\fq)$, called its \emph{point at infinity}.

% For any curve $E$, we denote by $\fq(E)$ the field of rational functions from $E$ to $\P^1(\fq)$, which
% are ratios $P(X,Y)/Q(X,Y)$ of bivariate polynomials over $\fq$, where $Q$ is not identically zero
% on $E$, taken modulo the cubic equation defining $E$.

\subsubsection{The group law}\label{sec:group law}
The points of an elliptic curve $E$ form an abelian group, in which $O$ is the neutral element, and
any three distinct points $P, Q, R \in E$ satisfy $P+Q+R = O$ iff they are colinear.
If $P=Q\neq R$, the condition
is that the tangent to $E$ at $P$ passes through $R$, and if $P=Q=R$ the condition is that the tangent
at $P$ to $E$ is doubly tangent at the point.

Lines passing through $O$ are either the line at
infinity (which is doubly tangent to $E$ at $O$), or lines of the form $X = c$. Thus $P+Q = O$,
i.e.\@ $P = -Q$, iff their coordinates satisfy $P_x = Q_x$ and $P_y \neq Q_y$; or $P=Q\neq O$ and the line $X = P_x$
is tangent to $E$ at $P$; or $P = Q = O$. Note that in both affine cases, %\david{did you mean 3 cases?}
we also have
$Q_y = -a_1P_x -a_3 - P_y$, since $P_y, Q_y$ are the two (not necessarily distinct) roots of
a monic quadratic in $y$ with linear coefficient $a_1 P_x + a_3$.

\subsubsection{Isogenies and $x$-projection}
For a curve $E$ in extended Weierstrass form, let $\pi: E \to \Prj$ denote the projection to the $x$-coordinate,
defined by $\pi(O) = \infty \in \Prj$ and $\pi(P) = P_x \in \fq$ for $P \in E \setminus \{O\}$.
Additionally,
as noted in \cref{sec:group law}, for any $P, Q \in E$, $\pi(P) = \pi(Q)$ if and only if $P = \pm Q$,
thus the preimages $\pi^{-1}(\pi(P)) = \{\pm P\}$ are either sets of size two, or
% , for at most 4 points,
%\eli{pls explain more explicitly why 4} \dan{4, what 4? :)}
a singleton $\{P\}$ when $2P = O$. In particular, it follows that for any subset $C \subset E$
such that $C$ is disjoint from $-C = \{-P : P\in C\}$, the map $\left. \pi \right|_{C}$
is $1$-to-$1$ from $C$ to $\fq$.

% \eli{looking ahead, would be good to have claim that says ``Suppose $S$ is a set of points in $E$ such that $S\cap -S=\emptyset$, then $\pi(S)$ is $1$-to-$1$ onto $\Prj$.'' Then, this claim can be quoted at end of proof of \cref{cor:Lpsi seq}.}

Let $E, E'$ be elliptic curves over the same field. An \emph{isogeny} between the curves is a
% \eli{rational?} map $\phi : E \to E'$, given in coordinates \eli{what does ``given in coordinates''
% mean} by rational functions, and satisfying $\phi(O) = O'$.
rational map $\phi : E \to E'$ satisfying $\phi(O) = O'$, where $O'$ is the neutral element of $E'$.
We follow \cite[Chapters 2.9, 12.2]{Wash08} to give an algebraic, rather than geometric, description
of isogenies. When $E, E'$ are in extended Weierstrass form, $\phi$ can be expressed in a
\emph{standard form}:

\begin{restatable}{proposition}{standard}\label{prop:standard form}
 Let $\phi: E \to E'$ be an isogeny between two curves in extended Weierstrass form. Then, in coordinates,
 we may write
 $$\phi(x, y) = (\psi(x), \xi(x,y)),$$
where $\psi : \Prj \to \Prj$ is a rational function. Equivalently, if $\pi : E \to \Prj, \pi':E'\to \Prj$
are the $x$-projection maps in each curve, then there exists a unique
rational function $\psi$ such that the diagram
\begin{equation*}
 \begin{tikzcd}[ampersand replacement=\&]
    E \arrow[swap]{d}{\pi} \arrow{r}{\phi} \& E' \arrow{d}{\pi'} \\
    \Prj \arrow{r}{\psi} \& \Prj
  \end{tikzcd}
\end{equation*}
is commutative.
% Additionally, we have $\deg \psi = \deg \phi$, i.e.\@ the degree of $\phi$, as
% an isogeny, is equal to the degree of $\psi$ as a rational function.
\end{restatable}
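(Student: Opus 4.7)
The plan is to prove that the $x$-coordinate of $\phi(P)$ depends only on the $x$-coordinate of $P$, by exploiting the fact that isogenies are group homomorphisms and hence commute with negation on $E$. Once this is established, the rational function $\psi$ and the commutative diagram follow immediately.

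First I would recall the standard structure of the function field of a curve $E$ in extended Weierstrass form: the defining relation $Y^2 = -(a_1 X + a_3)Y + X^3 + a_2 X^2 + a_4 X + a_6$ lets one reduce any polynomial in $y$ modulo $F(X,Y)$ to a linear function in $y$, so every element of $\fq(E)$ admits a unique expression of the form $f(x) + g(x)\,y$ with $f,g \in \fq(x)$. In particular, writing $\phi(x,y) = (R_1(x,y), R_2(x,y))$, we may put
\begin{equation*}
R_1(x,y) = f_1(x) + g_1(x)\,y.
\end{equation*}

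Next I would use the key group-theoretic property. Because $\phi$ is an isogeny, it is a group homomorphism, so $\phi \circ [-1]_E = [-1]_{E'} \circ \phi$. From Section 3.1 we have $[-1]_E(x,y) = (x,\,-y - a_1 x - a_3)$, and the $x$-coordinate of $[-1]_{E'} \phi(P)$ equals the $x$-coordinate of $\phi(P)$. Equating $x$-coordinates on both sides of $\phi([-1]_E P) = [-1]_{E'}\phi(P)$ yields the identity
\begin{equation*}
f_1(x) + g_1(x)(-y - a_1 x - a_3) \;=\; f_1(x) + g_1(x)\,y
\end{equation*}
in $\fq(E)$, equivalently $g_1(x)\bigl(2y + a_1 x + a_3\bigr) = 0$. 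Since $2y + a_1 x + a_3$ is a nonzero element of the function field $\fq(E)$ (its vanishing locus would force $E$ to be singular, contradicting smoothness of the Weierstrass model), we conclude $g_1 \equiv 0$, so $R_1$ is a rational function of $x$ alone. Define $\psi(X) := f_1(X) \in \fq(X)$.

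Finally I would verify the commutative diagram. For affine $P = (P_x,P_y) \in E$ with $\phi(P) \neq O'$, we have $\pi'(\phi(P)) = R_1(P_x,P_y) = \psi(P_x) = \psi(\pi(P))$. At the special points (the point $O$ and any $P$ with $\phi(P) = O'$) one extends $\psi$ to $\Prj$ by the usual convention for rational maps; commutativity at these points translates into the boundary value $\psi(\infty) = \infty$, which holds because $\phi(O) = O'$ forces $\pi'(\phi(O)) = \infty$ while $\pi(O) = \infty$. Uniqueness of $\psi$ is immediate from surjectivity of $\pi$ onto $\Prj$: any two rational functions agreeing along $\pi'\circ\phi$ must agree on the image of $\pi$, which is all of $\Prj$.

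The main obstacle I anticipate is the characteristic two case, where the relation $2 g_1(x) y = -g_1(x)(a_1 x + a_3)$ does not allow separating $y$-coefficients from constants by simple parity. There one must invoke the smoothness hypothesis more carefully: in characteristic two a smooth Weierstrass curve necessarily has $(a_1,a_3) \neq (0,0)$ in a suitable sense, so $a_1 x + a_3$ is a nonzero rational function; combined with $2 = 0$, the identity $g_1(x)(a_1 x + a_3) = 0$ still forces $g_1 \equiv 0$, and the rest of the argument goes through uniformly across all characteristics.
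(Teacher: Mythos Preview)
Your proposal is correct and follows essentially the same approach as the paper: write the $x$-coordinate of $\phi$ as $\psi(x) + y\,\chi(x)$ in the function field, use that $\phi$ commutes with negation while $\pi'$ is negation-invariant, and deduce $\chi \equiv 0$. The only cosmetic difference is that the paper concludes $\chi = 0$ by observing it vanishes at infinitely many $x$-coordinates over $\fqbar$, whereas you argue directly that $2y + a_1 x + a_3$ is a nonzero element of the integral domain $\fq(E)$ (with your separate characteristic-$2$ check); both arguments are standard and equivalent.
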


This fact appears to be folklore, and is most commonly discussed only in the special case of curves
in \emph{short} Weierstrass form $E : y^2 = x^3 + Ax +B$, where $\xi(x,y)$ can also be expressed as
$y$ times a rational function---see \cite[Chapter 2.9]{Wash08} for a discussion of this case.
When focusing only on the $x$-coordinate, the same proof is valid also for the extended Weierstrass form.
For completeness, a full proof of this fact is included in \cref{sec:psi appendix}.

\begin{definition}
	\label{def:degree of projection equals degree of isogeny}
 Let $\phi: E \to E'$ be an isogeny between two curves in extended Weierstrass form,
 and let $\psi$ be as in \cref{prop:standard form}. We define $\deg \phi \coloneqq \deg \psi$,
 i.e.\@ the degree of the isogeny $\phi$ is defined to be equal to the degree of $\psi$ as a rational
 function.
 The isogeny $\phi$ is called \emph{separable} if the derivative (in $x$) of $\psi$ is not identically zero.
\end{definition}
The term \emph{$d$-isogeny} is shorthand for degree $d$ isogeny.

An important property of isogenies is that they are also group homomorphisms, with finite kernels.
If $\phi$ is separable, then $|\ker \phi| = \deg \phi$.
The converse is also true, and is a crucial part of our
construction:
\begin{proposition}[{\cite[III.4.12]{Silverman}}]\label{prop:quotient isogeny}
 Let $E$ be an elliptic curve and let $H < E$ be a finite subgroup of $E$. There is a unique
 elliptic curve $E'$ and a separable $|H|$-isogeny $\phi: E \to E'$ with $\ker\phi = H$.
\end{proposition}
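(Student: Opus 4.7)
The plan is to construct $E'$ as the quotient $E/H$ of $E$ by the translation action of $H$, and then verify that this quotient is an elliptic curve and that the natural projection is the desired isogeny. Existence comes from this construction, and uniqueness from the universal property this quotient satisfies.

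For existence, I would first note that $H$ acts on $E$ freely via translations, since on an elliptic curve, translation by any nonzero point has no fixed points. I would then produce the quotient curve $E/H$ either abstractly, as the unique smooth projective curve over $\fq$ whose function field is the $H$-invariant subfield $\fq(E)^H \subseteq \fq(E)$ (using the standard equivalence between smooth projective curves over $\fq$ and function fields of transcendence degree $1$), or concretely via V\'elu's formulas, which write down explicit Weierstrass coefficients and rational maps for $E/H$ and for the projection $\phi : E \to E/H$. Once $E/H$ is in hand, I would apply Riemann--Hurwitz to $\phi$: since $H$ acts freely, $\phi$ is unramified, so $2g(E) - 2 = |H|\cdot(2g(E/H) - 2)$, and $g(E) = 1$ forces $g(E/H) = 1$. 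Choosing the distinguished point $O' := \phi(O)$ makes $E/H$ into an elliptic curve and $\phi$ into an isogeny. The kernel of $\phi$ is exactly $H$ since $\phi(P) = \phi(Q)$ iff $P$ and $Q$ are in the same $H$-orbit, i.e.\@ iff $P - Q \in H$. Separability follows because the function field extension $\fq(E)/\fq(E)^H$ is Galois with group $H$ of order $|H| = \deg \phi$, ruling out any inseparable contribution.

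For uniqueness, suppose $(E'', \phi')$ is another pair with $\phi' : E \to E''$ a separable isogeny of degree $|H|$ and kernel $H$. Because $\phi'$ is invariant under translation by $H$, the pullback map $(\phi')^{\ast} : \fq(E'') \hookrightarrow \fq(E)$ lands inside $\fq(E)^H$. Comparing degrees, $[\fq(E) : (\phi')^{\ast}\fq(E'')] = \deg \phi' = |H| = [\fq(E) : \fq(E)^H]$, so the image of $(\phi')^{\ast}$ is exactly $\fq(E)^H = \phi^{\ast} \fq(E/H)$. This identification of function fields produces an isomorphism $E/H \xrightarrow{\sim} E''$ that carries $\phi$ to $\phi'$ and respects the marked points, giving the desired uniqueness up to the evident isomorphism.

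The main obstacle is the construction of the quotient curve $E/H$ defined over $\fq$ (and not just over $\overline{\fq}$), together with a verification that it admits an (extended) Weierstrass model so that the formalism of \cref{prop:standard form} applies. The abstract construction via the $H$-invariant subfield requires invoking the equivalence between smooth projective curves and function fields; the V\'elu construction avoids that heavy machinery but pays for it with an explicit case analysis (odd-order points versus $2$-torsion points) and a direct computation verifying that the resulting cubic is nonsingular and that the resulting projection really has degree $|H|$. Either route yields the proposition, and for our purposes the V\'elu route is preferable because it produces the rational function $\psi$ of \cref{prop:standard form} in a form immediately usable for the $\FFTree$ construction of the next section.
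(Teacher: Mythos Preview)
The paper does not give its own proof of this proposition: it is quoted directly from Silverman~\cite[III.4.12]{Silverman}, with a pointer to V\'elu~\cite{Velu71} for explicit formulas. Your sketch is essentially the standard argument one finds in Silverman---construct $E/H$ via the $H$-invariant function field (or V\'elu's explicit formulas), compute the genus by Riemann--Hurwitz using that translations act freely, and deduce uniqueness from the universal property of the quotient---so there is nothing to compare against in the paper itself, and your outline is correct and in line with the cited source.
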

See also \cite{Velu71} for an explicit construction of such isogenies. We will apply the proposition
for groups $H$ with $|H| = 2$, but all our results generalize to larger $H$. In this case
$\phi$ is 2-isogeny, meaning  $\psi$ is a degree 2 rational function.

\subsubsection{Group size and structure}\label{sec:group struct}
The group $E$ is abelian, and it is always of rank at most 2, i.e.\@ it is isomorphic to a product of
at most 2 cyclic groups
$$ E \simeq \Z/m_1\Z \times \Z/m_2\Z$$
with $m_1 \mid m_2$ and $m_1 \cdot m_2 = |E|$.

Hasse's theorem states that for every elliptic curve $E$, the order of the group $|E|$
%\eli{$E(\fq)$, and have we defined the set of $\fq$-rational points and mentioned that obviously they're a subgroup?}
%\dan{Actually my notation was $E = E(\fq)$, i.e.\@ $E$ is defined as a set of $\fq$ points,
%and the group is defined directly to be $E$, so this is not a subgroup but \emph{the} group.
%You are right that this is not perfectly standard but I think it's OK, since we don't really need
%field extensions of $E$...}\eli{In the proof of Thm 4.2 the closure group is mentioned}
belongs to a range
of length $4\sqrt{q}$ centered at $q+1$, that is,
$$ q - 2\sqrt{q} + 1 \le |E| \le  q + 2\sqrt{q} + 1.$$
By a theorem of Deuring~\cite{Deur41}, any number in this range is indeed attainable as the size
of an elliptic curve, in the case where $q$ is prime. Waterhouse~\cite[Theorem 4.1]{Wat69} provides
the complete characterization of achievable sizes for the prime power case.
We will require a much weaker form, about possible factors of $|E|$. The following is the
simplest case of Waterhouse's theorem:
\begin{theorem}\label{thm:Waterhouse}
 Let $N = q + 1 - t$ be an integer such that $|t| \le 2\sqrt{q}$ and $t$ is coprime to $q$.
 Then there exists an elliptic curve $E/\fq$ with $|E| = N$.
\end{theorem}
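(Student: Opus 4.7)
The plan is to treat this as the ordinary case of Waterhouse's classification, where the hypothesis $\gcd(t,q)=1$ ensures the Frobenius endomorphism is separable and the target curve will be ordinary. My approach will go through complex multiplication (CM) theory and Deuring's lifting/reduction correspondence rather than via any direct counting over $\fq$.

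First, I would analyze the characteristic polynomial $f(T) = T^2 - tT + q$ that any Frobenius endomorphism on a curve of the desired order must satisfy. The Hasse bound $|t|\le 2\sqrt{q}$ guarantees that the discriminant $t^2 - 4q$ is non-positive, and (assuming $t^2 \neq 4q$, which holds under the coprimality hypothesis whenever $q$ is not a square with $t = \pm 2\sqrt{q}$; the boundary case is handled separately by choosing a supersingular curve with a small tweak) $f$ is irreducible over $\Q$, defining an imaginary quadratic field $K = \Q(\sqrt{t^2-4q})$ and an order $\Z[\pi] \subset \mathcal{O}_K$ where $\pi$ is a root of $f$. The condition $\gcd(t,q)=1$ is exactly what's needed for $\pi$ to be coprime to the rational prime below $q$ in $\mathcal{O}_K$, i.e.\ for any CM curve with this Frobenius to be ordinary.

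Next, I would invoke classical CM theory to produce an elliptic curve $\tilde E$ over some number field $H$ (the ring class field of $\Z[\pi]$) whose endomorphism ring is exactly $\Z[\pi]$. Pick any prime $\mathfrak{p}$ of $H$ lying above the rational prime $p$ underlying $\fq$, with residue field contained in $\fq$, such that $\tilde E$ has good ordinary reduction at $\mathfrak{p}$; the ordinary hypothesis ensures such $\mathfrak{p}$ exist and that the reduction map on endomorphism rings is an isomorphism (this is Deuring's reduction theorem). The reduced curve $E = \tilde E \bmod \mathfrak{p}$ then has Frobenius endomorphism corresponding to $\pi$ itself (up to choice of embedding), so by the standard formula $|E(\fq)| = q + 1 - \mathrm{Tr}(\pi) = q+1-t = N$, possibly after replacing $E$ by its quadratic twist to flip the sign of $t$ if needed.

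The main obstacle is that the construction above lives naturally over the residue field of $\mathfrak{p}$, which a priori could be a proper subfield of $\fq$; ensuring the curve is defined over $\fq$ (and not merely becomes isomorphic to one with the right point count after base change) is the substantive content that requires Deuring's theorem in its full precision, together with the fact that $\pi$ generates a degree-2 extension corresponding to the Frobenius of $\fq$ itself, not of a smaller field. The edge case $t^2 = 4q$ (which under $\gcd(t,q)=1$ forces $q = p$ prime and $t = \pm 2$, an impossibility unless $p \le 4$) and the coprimality condition at the boundary are secondary issues to verify, but the heart of the argument is the CM-lift-and-reduce construction together with the Deuring correspondence matching endomorphism rings to orders in $K$.
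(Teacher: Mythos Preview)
The paper does not prove this theorem. It is stated in the background subsection on elliptic curves (Section~4.1.4) as a known result, explicitly attributed to Waterhouse~\cite[Theorem 4.1]{Wat69}, with Deuring~\cite{Deur41} credited for the prime case. The paper merely quotes ``the simplest case of Waterhouse's theorem'' and then uses it as a black box in the proof of Claim~\ref{cor:Waterhouse}. There is nothing to compare your argument against.

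That said, your outline is essentially the Deuring--Waterhouse approach (Honda--Tate in spirit): build a CM elliptic curve in characteristic zero whose endomorphism ring is the order $\Z[\pi]$ with $\pi$ a root of $T^2 - tT + q$, and reduce at a suitable prime so that Frobenius becomes $\pi$. Your handling of the boundary case is slightly off but the conclusion is right: if $t^2 = 4q$ then $t$ is divisible by the characteristic $p$ (when $p$ is odd this is immediate; when $p=2$, $t$ is even), contradicting $\gcd(t,q)=1$, so this case simply does not occur and no separate supersingular construction is needed. The genuine technical work, which you correctly flag, is pinning down that the reduction lands over $\fq$ itself with Frobenius trace exactly $t$; this is where one really needs the full strength of Deuring's lifting theorem (or Honda--Tate), and your sketch would need those references made precise to be a complete proof.
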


\subsection{An FFT-friendly sequence of rational maps coming from elliptic curves}
\label{sec:EC nice maps}

% \dan{Perhaps we should add subsubsections here.}
% \eli{I'm fine either way. Reading it again, perhaps the needed restructuring would be to start with
% \cref{cor:FFTree-existence}, changing its name to theorem (Main theorem?), then calling
% \cref{cor:Lpsi seq} also a Theorem (Main Technical?) and then proving them. This way we have the main
% claim, that can be used elsewhere, appear upfront. Dan, if you like this suggestion would you like to give it a try?}
% \dan{Gave it a try, PTAL; it can't appear immediately upfront since it needs $\nnq$, and also
% the lower bound on it, I tried to place it as early as possible.}
% \swastik{Looks great to me.}

As noted in \cref{sec:FFTrees}, the depth (or size) of an $\FFTree$ limits the degrees of the
polynomials which it can be used to evaluate. Thus, we would like to find the largest $\FFTree$
possible: if smaller degrees are sufficient, we can always use a subtree instead. We will denote by
$\nnq$ the largest possible size of an $\FFTree$ which can be obtained by our method. More
rigorously, we define
\begin{definition}\label{def:nnq}
 Let $q$ be a prime power. Define $\nnq$ to be the largest power of 2 such that there exists an
 elliptic curve $E$ defined over $\fq$ whose size satisfies $\nnq \mid |E|$ and $|E| > 2\nnq$.
\end{definition}
We claim that $\nnq$ is in fact fairly large with respect to $q$:
% Note that if $q\ge 4$ is even, then $\nnq = \frac{q}{4}$.
% \eli{Why? is $t=1$ considered co-prime to $q$? If yes I think this should be explicitly reminded}
% More generally, we claim:
\begin{claim}\label{cor:Waterhouse}
 Let $q \ge 7$ be a prime power. Then $\nnq > \sqrt{q}$. Equivalently,
 for any $\nn = 2^\rounds \le 2\sqrt{q}$,
 there exists an elliptic curve $E$ defined over $\fq$ with $\nn \mid |E|$ and $|E| > 2\nn$.
 If $q$ is even, then $\nnq \ge \frac{q}{4}$.
\end{claim}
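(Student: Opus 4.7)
The plan is to prove the ``Equivalently'' form of the claim, by invoking \cref{thm:Waterhouse}. That is, for each $\nn = 2^\rounds \le 2\sqrt{q}$, I will produce an integer $N$ inside the Hasse interval $I := [q+1-2\sqrt{q},\, q+1+2\sqrt{q}]$ such that $\nn \mid N$, $N > 2\nn$, and $t := q+1-N$ is coprime to $q$; \cref{thm:Waterhouse} then yields the desired curve of order $N$. The equivalence with $\nnq > \sqrt{q}$ is straightforward: let $M$ be the largest power of 2 at most $2\sqrt{q}$, so $M > \sqrt{q}$ (since $2M > 2\sqrt{q}$), and any curve witnessing the property for $\nn = M$ also witnesses it for every smaller power of 2.

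When $q = 2^k$ is even (with $k \ge 3$, since $q \ge 7$), I get the stronger bound $\nnq \ge q/4$ directly: take $\nn = q/4$ and $N = q$, so that $\nn \mid N$, $N > 2\nn$, $N \in I$, and $t = 1$, which is trivially coprime to $q$.

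For odd $q = p^k$ with $p$ odd, the key observation is that $|I| = 4\sqrt{q} \ge 2\nn$, so $I$ contains at least two multiples of $\nn$, say $N_1 < N_2 = N_1 + \nn$. The coprimality condition $\gcd(t,q) = 1$ reduces to $N \not\equiv 1 \pmod p$; since $\gcd(p,\nn) = 1$, the residues of $N_1$ and $N_2$ modulo $p$ are distinct, so at least one of them avoids the forbidden residue. Whenever $q + 1 - 2\sqrt{q} > 4\sqrt{q}$---equivalently $\sqrt{q} > 3 + 2\sqrt{2}$, so $q \ge 37$ among odd prime powers---the entire Hasse interval lies above $4\sqrt{q} \ge 2\nn$, making $N > 2\nn$ automatic, and we are done.

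The main obstacle is the finite list of small odd prime powers $q \in \{7, 9, 11, 13, 17, 19, 23, 25, 27, 29, 31\}$, where the lower bound $N > 2\nn$ is not automatic. I would dispose of these by direct enumeration: for each such $q$, taking $\nn = M$ (which is $4$ or $8$ in these cases), list the multiples of $\nn$ in $I$ that exceed $2\nn$ and check that at least one has $\gcd(q+1-N, q) = 1$. The enumeration has at most three candidates per case, so this verification is routine and completes the proof.
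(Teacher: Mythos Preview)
Your approach is essentially identical to the paper's: invoke Waterhouse's theorem, use the length of the Hasse interval to find two consecutive multiples of $\nn$, observe that their corresponding $t$-values differ by $\nn$ and hence cannot both be divisible by the odd prime $p$, and then handle the extra condition $|E|>2\nn$ by a threshold argument plus a finite check below the threshold. The even-$q$ construction $N=q$, $t=1$ is also exactly what the paper does.

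There is one small gap. You assert that for even $q$ the bound $\nnq \ge q/4$ is ``stronger'' and therefore dispenses with the main claim $\nnq > \sqrt{q}$; but $q/4 \le \sqrt{q}$ when $q \le 16$, so for $q=8$ and $q=16$ you have only shown $\nnq \ge 2$ and $\nnq \ge 4$ respectively, neither of which exceeds $\sqrt{q}$. The paper avoids this by leaving \emph{all} prime powers $7 \le q < 36$ (even ones included) to the manual check. You should either add $q=8,16$ to your enumeration list, or note that for these $q$ the choices $N=12$ (with $t=-3$) and $N=24$ (with $t=-7$) respectively lie in the Hasse interval, have $t$ odd, and witness $\nnq \ge 4$ and $\nnq \ge 8$.
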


Before we prove \cref{cor:Waterhouse}, having defined and bounded $\nnq$, we are now able to
precisely state the main theorem of this section:

\begin{theorem}[Existence of large $\FFTrees$]\label{cor:FFTree-existence}
Let $q$ be a prime power, and let the integer $\rounds$ be such that $\nn = 2^\rounds \le \nnq$; in particular,
one may take $\nn$ to be any power of two up to $2\sqrt{q}$ for $q \ge 7$.

Then there exists an $\FFTree$ over $\fq$ with depth $\rounds$.
\end{theorem}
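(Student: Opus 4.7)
The plan is to realize the \FFTree by projecting a coset of a subgroup of size $2^\rounds$ through a tower of elliptic curves connected by $2$-isogenies. By the definition of $\nnq$ and the hypothesis $2^\rounds \le \nnq$, I can fix an elliptic curve $E\zr$ over $\fq$ in extended Weierstrass form with $2^\rounds \mid |E\zr|$ and $|E\zr| > 2 \cdot 2^\rounds$; since $|E\zr|$ is a multiple of $2^\rounds$, in fact $|E\zr| \ge 3 \cdot 2^\rounds$. Inside $E\zr$ I will choose a subgroup $G\zr$ of order $2^\rounds$ and a ``shift point'' $P\zr \in E\zr$ satisfying $2P\zr \notin G\zr$; verifying that such a choice exists is the main obstacle, which I postpone to the last paragraph.

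Given such $(G\zr, P\zr)$, I construct recursively the layers $(E\ii, \phi\ii, \psi\ii, G\ii, P\ii)$ for $i = 0, 1, \ldots, \rounds - 1$. At stage $i$, pick any order-$2$ subgroup $H\ii < G\ii$; \cref{prop:quotient isogeny} produces a separable $2$-isogeny $\phi\ii \colon E\ii \to E\iip$ with $\ker \phi\ii = H\ii$, and \cref{prop:standard form} yields a degree-$2$ rational map $\psi\ii \in \fq(X)$ satisfying $\psi\ii \circ \pi\ii = \pi\iip \circ \phi\ii$, where $\pi\ii$ denotes the $x$-projection on $E\ii$. Set $G\iip := \phi\ii(G\ii)$ and $P\iip := \phi\ii(P\ii)$, so that $|G\iip| = 2^{\rounds - i - 1}$. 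Finally, define
$$L\ii := \pi\ii\bigl(P\ii + G\ii\bigr).$$

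There are two properties to check. First, $|L\ii| = 2^{\rounds - i}$ and $L\ii \subseteq \fq$: the projection $\pi\ii$ is injective on the coset $P\ii + G\ii$ precisely when that coset is disjoint from $-P\ii + G\ii$, which reduces to $2P\ii \notin G\ii$; the same condition also excludes $O$ from the coset, ensuring $L\ii$ avoids the point at infinity. This condition propagates from level $0$: the kernel of $\phi\iim \circ \cdots \circ \phi\zr$ lies inside $G\zr$, so $G\zr$ is exactly the preimage of $G\ii$ under this composition, and thus $2P\ii \in G\ii$ if and only if $2P\zr \in G\zr$. Second, $\psi\ii$ is a $2$-to-$1$ surjection from $L\ii$ onto $L\iip$: the compatibility $\psi\ii \circ \pi\ii = \pi\iip \circ \phi\ii$ combined with $\phi\ii(P\ii + G\ii) = P\iip + G\iip$ yields surjectivity; and for any $y \in L\iip$ with unique preimage $Q' \in P\iip + G\iip$, the set $\phi\ii^{-1}(Q') \cap (P\ii + G\ii)$ is a coset of $H\ii$ of size $2$ producing two distinct $\pi\ii$-images (using once more that $2P\ii \notin G\ii$), while the ``rival'' preimage $-Q'$ lies in the disjoint coset $-P\iip + G\iip$ and contributes nothing to $L\ii$.

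It remains to produce $(G\zr, P\zr)$ with $2P\zr \notin G\zr$. As $P\zr$ ranges over $E\zr$, the element $2P\zr$ sweeps $2E\zr$, so the task is to pick $G\zr$ with $2E\zr \not\subseteq G\zr$; equivalently, the quotient $E\zr/G\zr$, of order at least $3$, is not annihilated by multiplication by $2$. If $|E\zr|$ has any odd prime factor, placing $G\zr$ inside the $2$-Sylow subgroup of $E\zr$ leaves an element of odd order in the quotient, and the conclusion is immediate. Otherwise $|E\zr| \ge 4 \cdot 2^\rounds$ is a pure $2$-power, and writing $E\zr \cong \Z/2^a\Z \times \Z/2^b\Z$ with $a \le b$ and $a + b \ge \rounds + 2$, a short case analysis on $(a, b, \rounds)$ lets one align $G\zr$ with the cyclic factors so that the quotient still contains an element of order at least $4$. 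In every case, a valid shift point $P\zr$ then exists, completing the construction.
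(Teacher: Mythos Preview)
Your proof is correct and follows essentially the same route as the paper: build a tower of $2$-isogenies from a curve $E\zr$ with $2^\rounds \mid |E\zr|$ and $|E\zr| > 2\cdot 2^\rounds$, take a coset of a size-$2^\rounds$ subgroup that is disjoint from its negative, and project to the $x$-line. Your condition $2P\zr \notin G\zr$ is exactly the paper's condition $C \neq -C$ for the coset $C = P\zr + G\zr$, and your existence argument for $(G\zr, P\zr)$ (odd-factor case versus pure $2$-power case) matches the content of the paper's appendix, though the paper carries out the ``short case analysis'' explicitly.
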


We now proceed with building up the infrastructure towards proving \cref{cor:FFTree-existence}.

\begin{proof}[Proof of \cref{cor:Waterhouse}]
 We will ignore at first the condition $|E| > 2\nn$.
 By \cref{thm:Waterhouse}, it is enough to show that there exists an integer $t$ such that
 $\nn \mid q + 1 -t$,  $|t| \le 2\sqrt{q}$, and $t$ is coprime to $q$.

 Since the closed interval $[q - 2\sqrt{q}+1, q + 2\sqrt{q} + 1]$ has length at least $2\nn$, it
 must contain at least two integers $q + 1 - a, q + 1 - (a+\nn)$ which are both divisible by $\nn$.
 Note that at least one of $a, a+\nn$ must be coprime to the characteristic $p$ of $\fq$:
 indeed, if $p \neq 2$, this follows since their difference $\nn = 2^\rounds$ is not divisible by $p$,
 whereas if $p = 2$, then $\nn \mid q + 1 - a$ implies both $a, a+\nn$ are odd and thus coprime to
 $q$---and in fact $a=1$ simply works, yielding a curve of size $q$ (also known as an
 ``anomalous'' curve) and showing $\nnq \ge \frac{q}{4}$.
%  \eli{what's ``anomalous''? Singular? And does it matter to our discussion here?}
 Thus we can always choose at
 least one of $a, a+\nn$ as our candidate for $t$, for which a corresponding curve exists.

 Finally, to assert $|E| > 2\nn$, note that $|E| > q - 2\sqrt{q}$ and $2\nn \le 4\sqrt{q}$, thus for
 all $q \ge 36$ we get
 $$|E| > q - 2\sqrt{q} \ge 6\sqrt{q} - 2\sqrt{q} \ge 2\nn$$
 as claimed. The finitely many cases of $7 \le q < 36$ can be manually checked to verify
 that indeed for each such $q$ there is an elliptic curve $E$ with size exactly $3\nnq$.
\end{proof}

\begin{remark}
 \cref{cor:Waterhouse} is false for $q = 2, 4, 5$: since $2\sqrt{q}$ is not much smaller than $q$
 for these prime powers, for the largest $\nn$ below $2\sqrt{q}$, we have $q +2\sqrt{q}+1 < 3\nn$,
 and therefore no curve has order divisible by $\nn$ and greater than $2\nn$.
 %  \eli{any reason? reference? i.e., this last line seems mystic and would be good to explain why}
\end{remark}

See also \cite{ShSu17} for an overview of practical algorithms for finding such curves.
We note that restricting the size of $\nn$ further,
e.g.\@ $\nn \le \sqrt{q}$ or even $\nn = o(\sqrt q)$, greatly increases the number of possible
curves, and similarly decreases the difficulty of finding one.

Starting from a curve as guaranteed by \cref{cor:Waterhouse}, we now construct a chain of curves and
isogenies with useful properties.

% \dan{Should we include something about Cohen--Lenstra? I don't really understand it. \cite{ShSu17} refers
% to them, of course.}
%
% % Also Cohen--Lenstra Heuristics on abundance ...
%
% \eli{Note to add somewhere an open question as to getting $N$ to be $\Omega(q)$, rather than $\sqrt{q}$. The abundance of smooth numbers should probably give this...}

%\eli{Notation, for dealing with after submission: perhaps curves below should be $E\zr,E\one,\ldots, E\fin$ as all of our objects are denoted?}

% Let $q$ be a prime power and let $\nn = 2^\rounds$ be a power of two greater than $1$ such that there exists an
% integer $N$ divisible by $\nn$ in the range $q+1\pm 2\sqrt{q}$, for which $\frac{N}{\nn}\ne 1,2$
% and $q + 1 - N$ is coprime to $q$.
% In particular, such $N$ exists if $\nn \leq 2\sqrt{q}$ and $q\geq 7$ or if $4\nn\mid q$.
% We will define a certain special subset $L\zr \subseteq \fq$ based on elliptic curves.

\begin{theorem}\label{thm:curve sequence}
For any prime power $q$ and any $1 < \nn = 2^\rounds \le \nnq$, there exist elliptic curves
$E_0, E_1, \ldots, E_{\rounds}$ over $\fq$ in extended Weierstrass form,
%\eli{i think we should say the curves are in extended Weierstrass form, for the projections to make sense}
a subgroup $G_0\subseteq E_0$ of size $\nn$, $2$-isogenies $\phi_i : E_i \to E_{i+1}$
and rational functions $\psi\ii :\Prj \to \Prj$ of degree 2, such that the following diagram is commutative:

\begin{equation}\label{eq:comdiag}
 \begin{tikzcd}
E_0 \arrow{r}{\phi_0} \arrow[swap]{d}{\pi_0} &
E_1 \arrow{r}{\phi_1} \arrow[swap]{d}{\pi_1} &
\cdots \arrow{r}{\phi_{\rounds-1}} &
E_{\rounds} \arrow{d}{\pi_{\rounds}} \\%
\Prj \arrow{r}{\psi\zr}& \Prj \arrow{r}{\psi\one}&
\cdots \arrow{r}{\psi^{(\rounds-1)}}&
\Prj
\end{tikzcd}
\end{equation}
where:
\begin{itemize}
\item $\pi_i$ are the projection maps to the $x$-coordinate of each curve;
\item $\ker(\phi_{i})\subseteq G_{i} \coloneqq \phi_{i-1}\circ\cdots\circ\phi_0(G_0)$ for all $i$; and
\item $G_0$ has a coset $C$ such that $C\ne -C$ (as elements of the quotient group $E_0/G_0$).
%\eli{$-C$ undefined}
%\dan{it's the quotient group operation, I think this doesn't require definition, but did clarify it}
\end{itemize}

\begin{remark}
 The existence of the coset $C$ with $C \ne -C$ will be crucial in the derivation of
 \cref{cor:Lpsi seq}, i.e.\@ in the construction of the $\FFTree$ structure.
\end{remark}

% \eli{explain why $C\neq -C$ is needed}
\end{theorem}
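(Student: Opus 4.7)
The plan is to first produce an elliptic curve $E_0$ of suitable order and an order-$\nn$ subgroup $G_0 \subseteq E_0$ satisfying the coset condition; then to choose a filtration of $G_0$ by subgroups doubling in size; and finally to unwind this filtration into a tower of $2$-isogenies whose $x$-projections yield the desired rational maps $\psi\ii$ via \cref{prop:standard form}.

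First I would apply \cref{cor:Waterhouse} to obtain an elliptic curve $E_0$ over $\fq$ in extended Weierstrass form with $\nn \mid |E_0|$ and $|E_0| > 2\nn$. The most delicate step is the choice of $G_0 \subseteq E_0$ of order $\nn$ such that some coset $C \in E_0/G_0$ satisfies $C \neq -C$. Since $C = -C$ is the same as $2C = 0$, this is equivalent to asking $2E_0 \not\subseteq G_0$, and I would verify it by cases on $|E_0[2]| \in \{1,2,4\}$. If $|E_0[2]| \le 2$, then $|2E_0| = |E_0|/|E_0[2]| \ge |E_0|/2 > \nn$, so $2E_0 \not\subseteq G_0$ by a size count for any order-$\nn$ subgroup $G_0$; the same count handles the case $|E_0[2]| = 4$, $|E_0| > 4\nn$. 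If $|E_0[2]| = 4$ and $|E_0| = 3\nn$, then $E_0/G_0$ has odd order $3$ and any non-identity coset is automatically unequal to its inverse. Finally, in the edge case $|E_0[2]| = 4$ and $|E_0| = 4\nn$, the subgroup $2E_0$ itself has order exactly $\nn$, and here one must take $G_0 \neq 2E_0$; this is always possible because, writing $E_0 \simeq \Z/2^{a_1}\Z \oplus \Z/2^{a_2}\Z$ with $a_1 + a_2 = \rounds+2$ and $a_1, a_2 \ge 1$, an alternative order-$\nn$ subgroup is easily exhibited (e.g.\@ $\langle e_1 + 2 e_2 \rangle$ when $a_1 = 1$, and $\langle e_1 + e_2,\, 4 e_2 \rangle$ when $a_1, a_2 \ge 2$), which visibly differs from $2E_0$.

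With $G_0$ fixed, I would choose any filtration $\{O\} = H_0' \subsetneq H_1' \subsetneq \cdots \subsetneq H_\rounds' = G_0$ with $|H_i'| = 2^i$; this exists because $G_0$ is a finite abelian $2$-group and therefore admits a composition series whose successive quotients are $\Z/2\Z$. I would then build the chain of isogenies inductively: given $\phi_0, \ldots, \phi_{i-1}$ and $E_i$ in extended Weierstrass form with $G_i = (\phi_{i-1} \circ \cdots \circ \phi_0)(G_0)$, the image $H_i := (\phi_{i-1} \circ \cdots \circ \phi_0)(H_{i+1}')$ is a subgroup of $E_i$ of order $|H_{i+1}'|/|H_i'| = 2$ contained in $G_i$, and \cref{prop:quotient isogeny} (realised concretely by V\'elu's formulas, which output the target curve in extended Weierstrass form) yields a separable $2$-isogeny $\phi_i : E_i \to E_{i+1}$ with $\ker \phi_i = H_i$. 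Setting $G_{i+1} := \phi_i(G_i)$ and applying \cref{prop:standard form} to each $\phi_i$ produces the unique degree-$2$ rational function $\psi\ii : \Prj \to \Prj$ with $\pi_{i+1} \circ \phi_i = \psi\ii \circ \pi_i$, which is exactly the commutativity of \eqref{eq:comdiag}.

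The main obstacle is the edge case $|E_0| = 4\nn$ with full $2$-torsion: there $2E_0$ is itself an order-$\nn$ subgroup, and if $G_0$ were forced to equal it then $E_0/G_0 \simeq (\Z/2\Z)^2$ and every coset would satisfy $C = -C$. The resolution is the short combinatorial argument above, producing a second order-$\nn$ subgroup of $\Z/2^{a_1}\Z \oplus \Z/2^{a_2}\Z$. Once this single obstruction is sidestepped, the rest of the construction—existence of the filtration, the inductive isogeny construction via \cref{prop:quotient isogeny}, and compatibility of $x$-projections with isogenies via \cref{prop:standard form}—assembles essentially mechanically.
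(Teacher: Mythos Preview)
Your proof is correct and follows essentially the same approach as the paper: obtain $E_0$ with $\nn \mid |E_0|$ and $|E_0| > 2\nn$, choose $G_0$ so that some coset satisfies $2C \ne 0$ (the paper organises this via the structure decomposition $\Z/(m_1 2^{l_1}\Z) \times \Z/(m_2 2^{l_2}\Z)$ rather than your case split on $|E_0[2]|$, but the content is the same and the single delicate case $|E_0|=4\nn$ with full $2$-torsion is handled equivalently), then iterate $2$-isogenies via \cref{prop:quotient isogeny} and project to the $x$-line via \cref{prop:standard form}. One small slip: to produce $E_0$ you should invoke \cref{def:nnq} directly rather than \cref{cor:Waterhouse}, since the theorem's hypothesis is $\nn \le \nnq$, whereas \cref{cor:Waterhouse} only covers the narrower range $q \ge 7$, $\nn \le 2\sqrt{q}$ (e.g.\ in characteristic $2$ one can have $\nnq$ as large as $q/4$).
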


\begin{proof}
	By the definition of $\nnq$, there exists an elliptic curve $E_0$ over $\fq$ with exactly $N$
	points, where $\nn \mid N$ and $N > 2\nn$. Since $E_0$ is abelian, it has a subgroup of any
	order dividing $N$, in particular of order $\nn$. However, since we want to ensure the existence
	of coset $C$ with $C \neq -C$, we may need to choose $G_0$ more carefully.\footnote{As the proof
	shows, this	is in fact only an issue when $\frac{N}{\nn} = 4$; in other cases any choice of $G_0$
	works.} The proof that an appropriate $G_0$ exists is technical and not of particular
	importance, and the interested reader may find it in \cref{sec:existence of G_0}. We note
	that the condition that $N > 2\nn$ is required exactly to ensure the existence of such $G_0$
	and $C$.

% 	\dan{Maybe move the proof of existence of good $G_0$ to the appendix? It obstructs the
% 	beauty of the construction.}
%     \dan{Moved the proof of existence of good $G_0$ to the appendix---what do you think?}
%     \eli{I think moving it there is fine.}

% 	As noted in \cref{sec:group struct}, $E_0$ is of
% 	rank at most 2, and there is an isomorphism
% 	$$\tau : E_0  \leftrightarrow \Z/(m_12^{l_1}\Z) \times \Z/(m_22^{l_2}\Z) $$
% 	where $m_1, m_2$ are odd with $m_1 \mid m_2$,
% 	$l_1 \le l_2$, $m_1 m_2 2^{l_1+l_2} = N$ and $l_1 + l_2 \ge \rounds$. A subgroup $G_0$
% 	of size $\nn$ will necessarily be of the form
% 	$$G_0 =
% 	\tau^{-1}\parens*{(m_12^{l_1-k_1}\Z)/(m_12^{l_1}\Z) \times (m_2 2^{l_2- k_2}\Z)/(m_22^{l_2}\Z)}
% 	\simeq \Z/2^{k_1}\Z \times \Z/2^{k_2}\Z $$
% 	with $k_1\le l_1$, $k_2\le l_2$ and $k_1 + k_2 = \rounds$, and the quotient $E/G_0$ is then
% 	isomorphic to
% 	$$E_0/G_0 \simeq \Z/(m_12^{l_1-k_1}\Z) \times \Z/(m_22^{l_2-k_2}\Z).$$
% 	We wish to ensure that this group contains an element $C$ such that $C \neq -C$, or equivalently,
% 	$2C \neq 0$. This is clearly the case for any choice of $k_1, k_2$, except if
% 	$m_1 = m_2 = 1$ and $l_1 - k_1, l_2 - k_2 \le 1$. But since
% 	$m_1 m_2 2^{l_1 -k_1 + l_2 - k_2} = \frac{N}{\nn} > 2$,
% 	this happens only when $N = 4\nn$ and for the choice $k_1 = l_1 -1$ and $k_2 = l_2 - 1$.
% 	But, by the assumption $\nn > 1$ and by $l_2 \ge l_1$, we find $l_2 \ge 2$, thus we may choose
% 	instead $k_1 = l_1$ and $k_2 = l_2 - 2$, to obtain $E_0/G_0 \simeq \Z/4\Z$, which contains
% 	an element $C$ with $C \neq -C$.

    Having constructed $G_0$, we choose inside it a subgroup of size 2, and use
    \cref{prop:quotient isogeny} to find a new Weierstrass curve $E_1$ and 2-isogeny
    $\phi_0 : E_0 \to E_1$ whose kernel is the subgroup. Thus $G_1 = \phi_0(G_0)$ is a subgroup
    of $E_1$ of order $2^{\rounds-1}$, and we continue iteratively, at step
    $i$ constructing $E_{i+1}$ and $\phi_i$ such that the kernel of $\phi_i$ is a size 2
    subgroup of $G_i$, the image of $G_0$ in $E_i$, which is of size $2^{\rounds-i}$.
%    \eli{suggest adding statement that since $G_i$ is smaller, having a coset $C$ of it the doesn't equal $-C$ is also doable (even more so).}
    The iteration stops at $E_\rounds$, where the image $G_\rounds$ of $G_0$ becomes a singleton.
    % 	is of odd size for the first time. \eli{isn't the odd size necessarily $1$?}

    By \cref{prop:standard form,def:degree of projection equals degree of isogeny}, having written all curves $E_i$ in extended Weierstrass forms,
    we find that there exist rational functions $\psi\ii$, of degrees equal to $\deg\phi_i = 2$,
    which complete the commutative diagram as claimed.
\end{proof}

Focusing on the bottom row of \eqref{eq:comdiag}, we obtain \cref{cor:FFTree-existence} as a direct
corollary of \cref{thm:curve sequence}. The following theorem is an equivalent reformulation of
\cref{cor:FFTree-existence}, directly recalling the definition of the $\FFTree$.

\begin{theorem}\label{cor:Lpsi seq}
Let $q$ be a prime power, and let $\rounds$ be such that $\nn = 2^\rounds \le \nnq$.
There exist subsets $L\zr, L\one, \ldots, L\fin \subseteq \fq$ and
degree $2$ rational functions $\psi\ii(X) = \frac{u\ii(X)}{v\ii(X)} \in \fq(X)$ such that:
\begin{enumerate}
\item $|L\ii| = 2^{\rounds - i}$.
\item $\psi\ii$ is a $2$-to-$1$ map from $L\ii$ onto $L\iip$.
\end{enumerate}
\end{theorem}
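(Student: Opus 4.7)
The plan is to derive \cref{cor:Lpsi seq} as a direct translation of \cref{thm:curve sequence}, reading off the bottom row of the commutative diagram \eqref{eq:comdiag} after pinning down a particular coset on which the $x$-projections are injective. Concretely, start by invoking \cref{thm:curve sequence} to obtain curves $E_0,\dots,E_\rounds$, the subgroup $G_0\le E_0$ of size $\nn$, the $2$-isogenies $\phi_i$, the projections $\pi_i$, the rational maps $\psi\ii$, and the distinguished coset $C\subseteq E_0$ of $G_0$ with $C\ne -C$. Then define $C_i \coloneqq \phi_{i-1}\circ\cdots\circ\phi_0(C)$ (with $C_0 \coloneqq C$) and set $L\ii \coloneqq \pi_i(C_i)$.

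The core technical step is verifying the size condition $|L\ii|=2^{\rounds-i}$. By the remark following \cref{prop:standard form}'s group-law discussion, $\pi_i$ is $1$-to-$1$ on any subset of $E_i$ disjoint from its negation, so it suffices to show $C_i \cap (-C_i) = \emptyset$ (note $-C_i = \phi_{i-1}\circ\cdots\circ\phi_0(-C)$ since each $\phi_j$ is a group homomorphism). If some $P,Q\in C$ satisfied $\phi_{i-1}\circ\cdots\circ\phi_0(P)=-\phi_{i-1}\circ\cdots\circ\phi_0(Q)$, then $P+Q$ would lie in $\ker(\phi_{i-1}\circ\cdots\circ\phi_0)$, which is contained in $G_0$ by the property $\ker(\phi_j)\subseteq G_j$ from \cref{thm:curve sequence}. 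But $P+Q$ lies in the coset $2C$, and $2C=G_0$ as cosets precisely when $C=-C$, contradicting the defining property of $C$. Hence $\pi_i$ is injective on $C_i$, and since $|C_i|=|C|/|\ker(\phi_{i-1}\circ\cdots\circ\phi_0|_C)| = 2^\rounds/2^i = 2^{\rounds-i}$ (the composition is $2^i$-to-$1$ on cosets of $G_0$ because its kernel sits inside $G_0$), we conclude $|L\ii|=2^{\rounds-i}$.

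The second condition, that $\psi\ii$ is a $2$-to-$1$ map from $L\ii$ onto $L\iip$, then falls out of the commutativity of \eqref{eq:comdiag}: $\psi\ii(L\ii)=\psi\ii(\pi_i(C_i))=\pi_{i+1}(\phi_i(C_i))=\pi_{i+1}(C_{i+1})=L\iip$. For the fiber counts, $\phi_i$ is a $2$-to-$1$ map from $C_i$ onto $C_{i+1}$ (its kernel has order $2$ and lies in $G_i$, so it acts as a $2$-to-$1$ map on every coset of $G_i$, in particular on $C_i$), while $\pi_i$ and $\pi_{i+1}$ are bijections between $C_i\leftrightarrow L\ii$ and $C_{i+1}\leftrightarrow L\iip$ by the previous step. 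Composing, $\psi\ii=\pi_{i+1}\circ\phi_i\circ\pi_i^{-1}$ on $L\ii$, which is $2$-to-$1$ onto $L\iip$.

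The main obstacle is the subtlety in the first step: the whole argument hinges on the existence of the coset $C$ with $C\ne -C$, so that $2C\ne G_0$ and the kernel-containment argument forces disjointness $C_i\cap(-C_i)=\emptyset$. Without this condition, the projections $\pi_i$ could collapse pairs $\{P,-P\}$ inside $C_i$ and destroy the size equality; this is precisely why the hypothesis $|E|>2\nn$ (and the careful choice of $G_0$ in \cref{thm:curve sequence}) is needed, and why \cref{cor:Waterhouse} was stated with that strict inequality rather than merely $\nn\mid|E|$.
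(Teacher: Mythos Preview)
Your proof is correct and follows essentially the same approach as the paper: define $C_i$ as the image of $C$ under the composed isogenies, set $L\ii=\pi_i(C_i)$, and use $C\ne -C$ together with $\ker(\phi_{i-1}\circ\cdots\circ\phi_0)\subseteq G_0$ to deduce that $C_i$ and $-C_i$ are disjoint, making $\pi_i$ injective on $C_i$. The only cosmetic differences are that the paper phrases the disjointness step via the third isomorphism theorem (the induced embedding $E_0/G_0\hookrightarrow E_i/G_i$ sends $C\mapsto C_i$ and $-C\mapsto -C_i$), whereas you argue it directly at the level of elements, and the paper separately notes the trivial case $\nn=1$ since \cref{thm:curve sequence} is stated only for $\nn>1$.
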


\begin{proof}
The case $\nn=1$ is trivial.
If $\nn >1$, apply \cref{thm:curve sequence} to find $E_i, \phi_i, \psi\ii$ and $G_0$ as above,
and let $C$ be a coset of $G_0$ such that $C\ne -C$. For each $i$ define
$C_i$ to be the image of $C$ in $E_i$, i.e.\@ $C_i = \phi_{i-1}\circ\cdots\circ\phi_1\circ\phi_0(C)$.
Since $\ker(\phi_{i-1}\circ\cdots\circ\phi_0) < G_0$, by the third isomorphism theorem,
the map $\phi_{i-1}\circ\cdots\circ\phi_0$
induces an embedding $E_0/G_0 \hookrightarrow E_i/G_i$ which maps distinct cosets of $G_0$ to
distinct cosets of $G_i$, and $C$ to $C_i$.
In particular $C \neq -C$ as cosets of $G_0$ implies to $C_i \neq -C_i$ as cosets of $G_i$.
% \eli{suggest defining $C_i$ inductively, to clarify}
% the kernel of the composed map from $E_0$ to $E_i$ is a subgroup of $G_0$.
% \eli{last deduction too fast for me, why does kernel being subgroup imply that $C_i\neq -C_i$?}
% \dan{is it clear now?} \eli{intuitively, yes, though I wonder if there's a quotable principle that says this. ``This'' is: if the kernel of a homomorphism is strictly contained in a group $G$, then distinct cosets of $G$ are mapped to distinct cosets in the image group}
Define $L\ii=\pi_i(C_i)$. Note that since $C_i, -C_i$ are cosets, $C_i \neq -C_i$ means they are
disjoint, and thus $\pi_i$ is a $1$-to-$1$ map from $C_i$ onto $L\ii$.
In particular $|L\ii|=|C_i|=2^{\rounds-i}$.

Finally, since the diagram is commutative and $\phi_i$ is a $2$-to-$1$ map from $C_i$ onto $C_{i+1}$,
$\psi\ii$ is a $2$-to-$1$ map from $L\ii$ onto $L\iip$.
\end{proof}

% For future use, we note a consequence of the above corollary in the language of $\FFTrees$.
% \begin{corollary}\label{cor:FFTree-existence}
% Let $q$ be a prime power, and let $\rounds$ be such that $\nn = 2^\rounds \le \nnq$; in particular,
% one may take $\nn$ to be any power of two up to $2\sqrt{q}$ for $q \ge 7$.
%
% Then there exists an $\FFTree$ over $\fq$ with depth $\rounds$.
% \end{corollary}

\begin{remark}\label{rem:ffforest}
Not to miss the forest for the trees, we clarify some features of this
elliptic curve based construction.
A careful examination of the proof in \cref{sec:existence of G_0} shows that $C = -C$ holds for
at most $4$ different cosets. % (under certain conditions, the bound $4$ can be replaced with $2$ or $1$).
% \footnote{In fact, if % $\nn$ is sufficiently large, specifically
% $\nn \ge 2^{l_1}$ in the notation of \cref{sec:existence of G_0}, then $G_0$
% can be chosen such that $C=-C$ holds for at most 2 cosets; and if $\nn = 2^{l_1+l_2}$,
% i.e.\@ $\frac{N}{\nn}$ is odd, then $C = -C$ holds only for the trivial coset $C = G_0$.}.
% \david{how much close? and why do we need it to be close?}
% \david{are you sure that 4 is not possible? it fits the $N/2\nn-2$ later}
% \dan{Re David's questions: The $N/2\nn - 2$ is indeed computed according to 4 cosests as it is indeed
% a possible case for small-ish $\nn$. Anyway David's questions made me realize that this unimportant
% comment really belongs in a footnote and not inside the main text.}
% \david{I still don't like the sufficiently large thing, because it makes the reader think that for
% small $\nn$ there are less such cosets, while the contrary is true. I suggest removing the condition
% on $\nn$ and letting the number of problematic cosets be at most 4, which is still very low without
% getting into the different cases of when exactly we get 1, 2 or 4}
The rest of the cosets appear in pairs $\{C^{(j)}, -C^{(j)}\}$, each pair projecting through
$\pi_0$ to a different (and disjoint) $L_j\zr = \pi_0(C\jj) = \pi_0(-C\jj)$.
Thus, our construction actually yields at least $\frac{N}{2\nn} - 2$ different $\FFTrees$,
with pairwise disjoint vertices from all trees at every fixed level, but with the same rational
functions $\psi\ii$ across all trees.

Thus, there exists not only a single $\FFTree$, but an entire $\FFForest$ of disjoint $\FFTrees$
all sharing the same maps. The algorithms in \cref{sec:algorithms} will all be described for the
case of a single $\FFTree$ and subsets of its vertices, but we note that many of them can also
be applied without additional complexity on sets taken from two (or $O(1)$) different $\FFTrees$
belonging to the same $\FFForest$. Note that the total number of leaves in this $\FFForest$ is
$\Omega(q)$, or, more accurately, $\frac{q}{2} - O(\sqrt{q} + \nn)$.
\end{remark}

%\swastik{This is a good place to mention that
%this construction actually constructs an $\nice{FFForest}$, where
%all the trees in the forest look like the above constructed $\FFTree$,
%but the total number of leaves in the forest is as large as $\Omega(q)$.}

%\swastik{Conjecture - if $s$ is a double root of $\psi(X) = t$, then
%by evaluating $P(X)$ and $P'(X)$ at $s$, we can recoverer
%$P_0(t)$ and $P_1(t)$. If so, then maybe we can enlarge $L$, and further avoid a lot of the extra care that we currently take about $C \neq -C$. }
%\dan{You will probably still need to avoid at least the coset $C = G_0$, since $\pi$ has a double
%pole at $O$. Anyway I don't see why it makes $L$ larger?}

\section{Representing polynomials via \FFTrees}
\label{sec:representations}

In this section, we show how to use $\FFTrees$ to
get a nice representation for polynomials that supports
fast operations.

We begin by fixing an $\FFTree$ for the rest of this section.
Thus we have sets $L\zr, L\one, \ldots, L\fin \subseteq \fq$,
and degree-$2$ rational functions $\psi\ii : L\ii \to L\iip$.
We let $L = L\zr$ and let $\nn = |L| = 2^\rounds$.
Also recall the associated binary tree $\calF$ whose set of leaves is $L$.

 All the data structures and algorithms for polynomials
that we describe will be in the context of this $\FFTree$.
While the exact details of how this $\FFTree$ is obtained are not important for anything in this section,
it will be helpful to recall the parameters of $\FFTrees$ that are achievable via \cref{cor:FFTree-existence}.

\subsection{Evaluation tables}
% \dan{Shouldn't this appear under the notations' section?}
% \eli{prefer it here. Explanation: I like ``Notation'' to have things that are well-known, knowing that readers will skip that section and refer to it only as reference when they don't understand something. But if you have notation that's quite different and really essential to what's going on, better put it right before its first use, as we did here.}
% \swastik{I think keeping it here is better, for Eli's reason.}
% \dan{OK}

We shall represent polynomials by their evaluations on various special sets of points, so we
introduce a special notation that will emphasize the sets of evaluation points used. Concretely, an {\em evaluation table} is specified by the following data:
\begin{itemize}
\item a set $S \subseteq \fq$,
\item a function $f: S \to \fq$.
\end{itemize}
We denote the associated evaluation table by $\EVT{f}{S}$, pronounced ``$f$ on $S$''.

For a polynomial or rational function $P(X) \in \fq(X)$ with $P(X)$ defined on $S$,
we define the associated evaluation table
$\EVT{P}{S}$ to be the evaluation table $\EVT{\left.P\right|_S}{S}$, where
$\left.P\right|_S$ is the function from $S$ to $\fq$ given by evaluation of $P$. Looking ahead,
we shall use evaluation tables for operations like
\begin{itemize}
\item Adding, multiplying and dividing, as in this example: given $\EVT{f}{S}$, $\EVT{g}{S}$, $\EVT{h}{S}$, $\EVT{P}{S}$ for some $P(X) \in \fq[X]$, we can compute $\EVT{\frac{f + P(X) g}{h}}{S}$.
\item Restricting an evaluation table $\EVT{f}{S}$ to a subset $S_0 \subseteq S$,
denoting the restricted table by $\EVT{f}{S_0}$
\item Partitioning a set $S$ into $S = S_0 \cup S_1$, and ``splitting''  $\EVT{f}{S}$ into  $\EVT{f_0}{S_0}$ and $\EVT{f_1}{S_1}$, as well as doing the inverse operation of forming the combined evaluation table $\EVT{f}{S} = \EVT{f_0}{S_0} \cup \EVT{f_1}{S_1},$
where $f:S \to \fq$ is given by:
$$\left.f\right|_{S_0} = f_0,$$
$$\left.f\right|_{S_1} = f_1.$$
\end{itemize}

\subsection{Basic sets and moieties}

%For any set $L\subseteq \fq, |L|=\nn$ there is a bijection between the space of functions $\fq^L$ and the space of polynomials of degree less than $\nn$ over $\fq$. \emph{Evaluation} is the transformation mapping a polynomial $P(X)$ to its evaluation on $L$ and \emph{interpolation} is the dual map. We denote by $\EVT{P}{L}$ the representation of a polynomial $P(X)$ by its evaluation on $L$ and by $\INT{f}{L}$ the polynomial $P(X)$ that interpolates $f\in\fq^L$, where $P(X)$ is now represented by its coefficients $P(X)=\sum_{i<\nn} a_i X^i$.

%In what follows we shall fix $L\zr, L\one, \cdots, L^{(\rounds)} \subseteq \fq$ and the maps $\psi\ii(X) \in \fq(X)$ as given by \cref{cor:Lpsi seq} above, and represent polynomials of degree less than $\nn$ by their evaluation on $L = L\zr$.

%Most of our results will be expressed in terms of $L = L\zr$, though the remaining $L\ii$ (and the maps $\psi\ii$) will play an internal role in the recursive algorithms.

We now identify some important subsets of $L$.

\begin{definition}[Basic sets]
	\label{def:basic set}
	We define a {\bf basic} set to be a subset $S$ of $L$
	which is the set of all descendants in $L$ of some vertex of $\calF$.

	Equivalently, it is a set of size $2^a$ for some integer $a$,
	such that if we let $g$ denote the
	composed function $\psi^{(a-1)} \circ
	\psi^{(a-2)} \circ \cdots \circ \psi\one \circ \psi\zr$,
	then $S = g^{-1}(u)$ for some $u \in L^{(a)}$.
\end{definition}

We have the following important property of basic sets: they can be
partitioned into two basic sets of equal size.
\begin{lemma}\label{lem:basic set partition}
	Any basic set $S$ of size $2^a \geq 2$ can be partitioned to two basic sets $S_0 \cup S_1$, where each $S_i$ has size $2^{a-1}$.
	%Moreover,
%	letting
%	$g:=\psi^{(a-1)} \circ
%	\psi^{(a-2)} \circ \cdots \circ \psi\one \circ \psi\zr$ and
	%assuming $S = g^{-1}(u)$ for $g,u$ as defined in \cref{def:basic set}, and letting
	%$\tilde{g}:=\psi^{(a-2)} \circ \cdots \circ \psi\one \circ \psi\zr$ and $\left(\psi^{(a-1)}\right)^{-1}(u)=\{u_0,u_1\}$, we have $S_0=\tilde{g}^{-1}(u_0)$ and $S_1=\tilde{g}^{-1}(u_1)$.
%	Finally, $\psi\zr(S)$ is a basic set of size $|S|/2$.
\end{lemma}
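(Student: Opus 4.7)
The plan is to unwind the definition of basic set in terms of the tree $\calF$ and exploit the fact that every non-leaf vertex of $\calF$ has exactly two children.

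More concretely, let $S$ be a basic set of size $2^a \ge 2$. By \cref{def:basic set}, $S$ is the leaf-descendant set of some vertex $v$ living at level $a$ of $\calF$, i.e., $v \in L^{(a)}$. Equivalently, writing $g = \psi^{(a-1)} \circ \psi^{(a-2)} \circ \cdots \circ \psi^{(0)}$, we have $S = g^{-1}(v)$. The key observation is that since $\psi^{(a-1)} : L^{(a-1)} \to L^{(a)}$ is a $2$-to-$1$ map (property~(2) of an $\FFTree$), the vertex $v$ has exactly two children in $\calF$, call them $w_0, w_1 \in L^{(a-1)}$, with $\psi^{(a-1)}(w_0) = \psi^{(a-1)}(w_1) = v$.

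I would then set, for $i \in \{0,1\}$,
\[
S_i \;=\; h^{-1}(w_i), \qquad \text{where } h = \psi^{(a-2)} \circ \cdots \circ \psi^{(0)},
\]
(and when $a = 1$, simply $S_i = \{w_i\}$). By \cref{def:basic set}, each $S_i$ is itself a basic set, being the set of all leaf-descendants in $L$ of the vertex $w_i \in L^{(a-1)}$. The partition property $S = S_0 \sqcup S_1$ follows from $g^{-1}(v) = h^{-1}(\psi^{(a-1)\,-1}(v)) = h^{-1}(w_0) \cup h^{-1}(w_1)$ together with the disjointness $h^{-1}(w_0) \cap h^{-1}(w_1) = \emptyset$ (since $w_0 \ne w_1$ and $h$ is a well-defined function). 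Finally, $|S_i| = 2^{a-1}$ either by a trivial induction on $a$ using the $2$-to-$1$ property at every level, or simply by $|S_0| + |S_1| = |S| = 2^a$ combined with the symmetry argument that iterating $2$-to-$1$ preimages $a-1$ times yields the same cardinality on both sides.

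There is no real obstacle here; the statement is essentially a direct consequence of unpacking \cref{def:basic set} together with the $2$-to-$1$ property of each $\psi^{(i)}$, which guarantees that every internal node of $\calF$ has exactly two children of equal subtree size.
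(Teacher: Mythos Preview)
Your proposal is correct and is essentially the same argument as the paper's: you take the vertex $v$ whose leaf-descendants form $S$, split along its two children $w_0, w_1$, and let $S_i$ be the leaf-descendants of $w_i$. The paper states this in one sentence immediately after the lemma, while you spell out the preimage formulas and cardinality check explicitly, but the underlying idea is identical.
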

The proof is immediate from \cref{def:fftree,def:basic set}: if $S$ is the set of all descendants in $L$ of the vertex $u \in \calF$,
then letting $\{u_0, u_1\}$ be the children of $u$, we can take $S_i$ to be the set of all descendants in $L$ of $u_i$.
We shall call $S_0$ and $S_1$ the {\bf \emph{moieties}} of $S$. Note that the two moieties are equivalent,
and can be labeled $S_0, S_1$ or $S_1, S_0$ interchangeably.

The following property of sets and polynomials
with respect to moieties of basic sets will also prove to be important in the paper, especially for
algorithms related to modular arithmetic:
\begin{definition}
 Let $S$ be a basic set, and let $A \subset \fq$ be an arbitrary set. We say $A$ is \emph{half-disjoint}
 from $S$ if it is disjoint from at least one moiety of $S$. Similarly, we say a polynomial $P(X)$
 is half-disjoint from $S$ if its set of zeros is disjoint from at least one moiety of $S$.
\end{definition}

%\begin{definition}[Basic sets]\label{def:basic set}
%We define a {\bf basic} set to be a subset $S$ of some layer $L\ii$ of $\calF$ which is the set of all descendants in $L\ii$ of some vertex of $\calF$.

%Equivalently, it is a set of size $2^a$ for some integer $a$,
%such that if we let $g$ denote the
%composed function $\psi^{(i+a-1)} \circ
%\psi^{(i+a-2)} \circ \cdots \circ \psi^{(i+1)} \circ \psi\ii$,
%then $S = g^{-1}(u)$ for some $u \in L^{(i+a)}$.
%Notice that $S'=\psi\ii(S)$ is a basic set of size $2^{a-1}$, we call it the basic set that lies \emph{above} $S$ and we say $\psi\ii$ \emph{generates $S'$ from $S$}.
%\end{definition}

We now consider representations of polynomials by evaluation tables.
Since nonzero polynomials of degree $< n$ cannot vanish in $n$ points, we
immediately get the following fundamental fact.
For distinct polynomials $P(X), Q(X) \in \fq[X]$ with $\deg(P), \deg(Q) < n$,
and a set $S$ with $|S| = n$, we have
that
$$\EVT{P}{S} \neq \EVT{Q}{S}.$$
Thus, for a fixed set $S$ with $|S| = n$,
$\EVT{P}{S}$ is a way of representing a polynomial $P$ with
degree $< n$. The key to our fast algorithms for working with
such a representation is to choose $S$ to be a basic set.

We now define a standard representation for polynomials (in the context of the fixed $\FFTree$). This standard representation will support fast operations, and will be used when we describe applications to classical problems.

For each $a \leq \rounds$, we arbitrarily pick a basic set $U_a$
with size $2^a$ such that:
$$ U_0 \subseteq U_1 \subseteq \cdots \subseteq U_{\rounds} = L.$$
We will call this $U_a$ the {\em standard} basic set of size $2^a$.

%\dan{I don't like the use of  ``canonical'' to mean an arbitrarily-chosen set, this is
%quite opposite from the usual meaning of canonical. Perhaps we can use a different word?
%Some suggestions (in descending order of personal preference):
%standard, default, special, priviliged, marked, ...}
%\swastik{Agreed, made it standard. }
%\eli{not sure we ever use the term standard in this context anymore...}

For a polynomial $P(X)$ and an integer $a$ with $2^a > \deg(P)$,
we define the {\em standard representation of $P$
at scale $a$}, denoted $\canon{P}{a}$, to be $\EVT{P}{U_a}$.

For a polynomial $P(X)$, we define {\em {\bf the} standard representation of $P$},
to be the $\canon{P}{a_0}$, where $a_0$ is the smallest integer
with $2^{a_0} > \deg(P)$.

This standard representation will be our data structure for representing polynomials. In the next section, we show how the $\FFTree$ enables fast operations for this representation
of polynomials.

\section{Fast polynomial algorithms from \FFTrees}
\label{sec:algorithms}

As in the previous section, we assume that we have fixed an $\FFTree$.
Again, the exact details of how this $\FFTree$ is obtained is not important for anything in this section, but it will be helpful to recall the parameters
of $\FFTrees$ that are achievable via~\cref{cor:FFTree-existence}.

In this section we give a number of fast algorithms for working with
polynomials $P(X)$ represented using evaluation tables $\EVT{P}{S}$, where
$S$ is a basic set. Inspection will reveal that nearly all of these algorithms can
be converted to arithmetic circuits over $\fq$ with constant fan-in and size that matches the proclaimed
running time (the only exception is the computation of polynomial degree, which outputs an integer, not a
field element). Thus, henceforth when we say an algorithm ``runs in time $t(n)$'' we shall allow it to receive
advice that will be explicitly stated,
and also mean that it can be computed by an arithmetic circuit over $\fq$ with $t(n)$ gates (and constant fan-in).
In particular, we assume each basic arithmetic operation ($+, -, \times, /$) over $\fq$ has constant computational
cost. While the algorithms of this section use division of elements in $\fq$ for clarity, by inspecting the
details it can be seen that they can be reformulated to avoid division by taking advice in a different form
(for example, taking $\EVT{\frac{1}{f}}{S}$ as advice instead of $\EVT{f}{S}$ as advice).

\paragraph{Algorithmic notations}
We use the notation $\ALG_{P_1, P_2,\ldots}(I_1,I_2,\ldots)$ for our algorithms/circuits. $\ALG$ is the name of
the algorithm, the subscript elements $P_1,P_2,\ldots$ denote fixed parameters that affect
constants of the algorithm/circuit and the inputs $(I_1,I_2,\ldots)$ are given inside the parenthesis, and
are variables.
In particular, any data which depends only on $P_1,P_2,\ldots$ can be assumed to be included
as part of the circuit, or given by a precomputation advice, and our running times exclude the time required
to obtain these parameters and constants. Furthermore, $q$ and the $\FFTree$ that we fixed are always assumed
to be part of the fixed parameters of the algorithm.

\paragraph{Directory of algorithms}

Below we give a list of the algorithms in this section.
\begin{enumerate}
\item $\EXTEND_{S,S'}$ which does low degree extension of polynomial evaluations
from a basic set $S$ to another basic set $S'$. $\EXTEND$ is the basis for all the remaining algorithms
in this section.
\item $\MULT$, which multiplies polynomials in the new representation (allowing for the possibility of
the degree growing). Addition is trivially done in linear time so we do not explicitly describe it.
\item $\MEXTEND$, a version of $\EXTEND$ for monic polynomials of known, fixed degree.
\item $\DEGREE$, which computes the degree of a polynomial given in the new representation.
\item $\REDC$, which performs Montgomery reduction---a technical operation that helps with the remaining operations.
\item $\MODV$, which performs modular reduction, reducing a given polynomial in the new representation modulo a fixed polynomial.
\item $\DIV$, which finds the quotient after division by a fixed polynomial.
\item $\ENTER$ and $\EXIT$, which convert between the new representation and the standard monomial representation.
\item $\CRT$ which computes one direction of the Chinese Remainder Theorem, constructing a polynomial
from its residues modulo two fixed and relatively prime polynomials. (The other direction of the CRT
can be done by $\MODV$.)
\end{enumerate}

\subsection{Low degree extension}\label{sec:LDE}

Our first primitive extends the evaluation of $P$ from one basic set to another basic set of the same size in time $O(n \log n)$ (i.e., via an arithmetic circuit over $\fq$ with constant fan-in and $O(n \log n)$ gates). In other words, the algorithm performs Reed--Solomon encoding in quasi-linear time, as long as the message is provided by the evaluation of $P$ on a basic set, and is encoded by evaluating $P$ on a constant collection of basic sets. Such low-degree extensions are often used to produce interactive proofs and interactive oracle proofs.

\begin{theorem}[Low-degree extension]
	\label{thm:low degree extension}
For any two basic sets $S, S'\subset \fq$ with $|S| = |S'| = n$, there is
an algorithm that runs in time
$O(n \log n)$, denoted $\EXTEND_{S,S'}$, which when given as input:
\begin{itemize}
\item $\EVT{P}{S}$, where $P(X) \in \fq[X]$ with $\deg(P) < n$,
\end{itemize}
outputs $\EVT{P}{S'}$.
\end{theorem}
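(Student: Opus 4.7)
The plan is to mimic the classical radix-$2$ FFT recursion, using \cref{lem:decomposition} with $\psi\zr$ in place of the squaring map, and exploiting the fact that $\psi\zr$ is $2$-to-$1$ from each moiety of a basic set onto a basic set at level $1$ of the \FFTree. First I would set up the recursion cleanly by observing that the subtree of $\calF$ obtained by deleting the bottom layer is itself an $\FFTree$ of depth $\rounds - 1$ whose base set is $L\one$. In particular, basic sets and moieties make sense at every level, and the recursion naturally moves from basic sets in $L\zr$ to basic sets in $L\one$. For basic sets $S, S' \subseteq L\zr$ of size $n = 2^a \geq 2$, let $S_0, S_1$ and $S'_0, S'_1$ be their respective moieties; by \cref{lem:basic set partition}, $T \coloneqq \psi\zr(S_0) = \psi\zr(S_1)$ and $T' \coloneqq \psi\zr(S'_0) = \psi\zr(S'_1)$ are basic sets of size $n/2$ inside $L\one$.

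Next I would apply \cref{lem:decomposition} to $P$ with $\psi = \psi\zr = u\zr / v\zr$, $\delta = 2$, and $d = n$, obtaining a unique pair $P_0, P_1 \in V_{n/2}$ with
\[
P(X) = \bigl(P_0(\psi\zr(X)) + X \cdot P_1(\psi\zr(X))\bigr) \cdot v\zr(X)^{n/2-1}.
\]
For each $t \in T$ with preimages $s_0 \in S_0$, $s_1 \in S_1$, \cref{lem:locality} furnishes an invertible $2 \times 2$ linear map---determined by $s_0, s_1, v\zr(s_0), v\zr(s_1)$, hence precomputable from $S$ alone---that converts $(P(s_0), P(s_1))$ into $(P_0(t), P_1(t))$. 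The same lemma applied at $t' \in T'$ with preimages $s'_0, s'_1 \in S'$ runs the \emph{forward} direction of this $2 \times 2$ map in $O(1)$ field operations per pair, so that $(P_0(t'), P_1(t'))$ yields $(P(s'_0), P(s'_1))$.

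The algorithm has three steps: (i) apply the $n/2$ local inverse transforms to $\EVT{P}{S}$ to produce $\EVT{P_0}{T}$ and $\EVT{P_1}{T}$, at cost $O(n)$; (ii) recursively invoke $\EXTEND_{T,T'}$ on each of these tables within the sub-\FFTree of depth $\rounds-1$, producing $\EVT{P_0}{T'}$ and $\EVT{P_1}{T'}$; (iii) apply the $n/2$ local forward transforms to assemble $\EVT{P}{S'}$, at cost $O(n)$. The base case $n = 1$ is trivial, since $P$ is then a constant and $\EVT{P}{S} = \EVT{P}{S'}$ as the single stored value. Correctness follows immediately from \cref{lem:decomposition,lem:locality}, and the recurrence $T(n) = 2 T(n/2) + O(n)$ gives $T(n) = O(n \log n)$.

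The main technical point---rather than a true obstacle---is to ensure that all quantities depending on $S, S'$ and the \FFTree (the preimage pairings, the scalars $v\zr(s_i)^{n/2-1}$ and $v\zr(s'_i)^{n/2-1}$, and the $2\times 2$ matrices and their inverses at every level of the recursion) are folded into the precomputed circuit advice, so that each local transform truly costs $O(1)$. Otherwise naive online recomputation of the power $v\zr(\cdot)^{n/2-1}$ alone would introduce a spurious $\log n$ factor. Since $S$ and $S'$ are fixed parameters of $\EXTEND_{S,S'}$, this is admissible under the accounting conventions stated at the start of \cref{sec:algorithms}.
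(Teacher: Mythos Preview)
Your approach is essentially the paper's: decompose $P$ via \cref{lem:decomposition} with $\psi\zr$, use the local $2\times 2$ maps of \cref{lem:locality} to pass between $\EVT{P}{S}$ and $(\EVT{P_0}{T},\EVT{P_1}{T})$, recurse one level up, and invert. The paper formalizes the recursion by introducing $i$-basic sets (\cref{def:i basic set}) and an algorithm $\EXTEND_{S,S',i}$ indexed by the layer; your device of ``pass to the sub-\FFTree on $L\one$'' is an equivalent bookkeeping choice.

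There is, however, a concrete error in your description of $T$. You write $T\coloneqq\psi\zr(S_0)=\psi\zr(S_1)$ and that for each $t\in T$ the two preimages lie one in $S_0$ and one in $S_1$. This is false whenever $|S|\ge 4$. The moieties $S_0,S_1$ are the leaf-sets below the two children $u_0,u_1$ of the vertex $u$ at layer $a$, whereas the fibres of $\psi\zr$ on $S$ are sibling pairs at layer~$0$; both members of any such pair lie in the \emph{same} moiety. In fact $\psi\zr(S_0)$ and $\psi\zr(S_1)$ are disjoint sets of size $n/4$ each---they are precisely the two moieties of $T=\psi\zr(S)$ in the sub-\FFTree. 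The fix is simply to drop moieties from this argument (they play no role here) and set $T=\psi\zr(S)$, $T'=\psi\zr(S')$; for each $t\in T$ the two preimages $s_0,s_1\in S$ are its two children in $\calF$, with no reference to which moiety they sit in. With that correction, your three-step algorithm and the recurrence $F(n)=2F(n/2)+O(n)$ coincide with the paper's proof.
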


%This is a special case of the following more general theorem: the
%added generality allows a proof by induction. This is the only place
%where we deal with $i$-basic sets for $i \neq 0$.
%
%
%\begin{theorem}
%For any two $i$-basic sets $S, S'$ with $|S| = |S'| = n$,
%there is an algorithm $\EXTEND_{S,S',i}$, which when given as input:
%\begin{itemize}
%\item $\EVT{P}{S}$, where $P(X) \in \fq[X]$ with $\deg(P) < n$,
%\end{itemize}
%runs in time $O(n \log n)$ and computes $\EVT{P}{S'}$.
%\end{theorem}

For the proof of this theorem (and only for this proof) we need a generalization of basic sets:

\begin{definition}[$i$-basic sets]
	\label{def:i basic set}
	We define an $i$-{\bf basic} set to be a subset $S$ of $L\ii$
	which is the set of all descendants in $L\ii$ of some vertex of $\calF$.

	Equivalently, an $i$-basic set is a subset $S$ of $L\ii$
	of size $2^a$ for some integer $a$, such that if we let $g$ denote the function
	$$\psi^{(a+i-1)} \circ \psi^{(a+i-2)} \circ \cdots \circ \psi^{(i+1)} \circ \psi\ii,$$
	then $S = g^{-1}(u)$ for some $u \in L^{(a+i)}$.
\end{definition}

Notice that $0$-basic sets are simply basic sets per \cref{def:basic set}. In our proof, stated next, we shall use the property that
for every $i$-basic set $S$,
$\psi\ii(S)$ is an $(i+1)$-basic set $T$ of size $|S|/2$, and we say $T$ \emph{lies above} $S$ and is \emph{induced by} $\psi\ii$.

\begin{proof}[Proof of \cref{thm:low degree extension}]

We give a more general algorithm $\EXTEND_{S,S',i}$ to solve
the analogous extension problem where
$S$ and $S'$ are $i$-basic sets with $|S| = |S'| = n$. The algorithm $\EXTEND_{S,S'}$ claimed in \cref{thm:low degree extension} is obtained by fixing $i=0$, i.e., $\EXTEND_{S,S'}(\EVT{\pi}{S})=\EXTEND_{S,S',0}(\EVT{\pi}{S})$.

The $\EXTEND_{S,S',i}$ algorithm uses the map $\psi\ii$ to reduce the extension problem for $i$-basic sets of size $n=2^a$ to two analogous extension problems for $(i+1)$-basic sets of
size $n/2$,
%using the partition of \cref{lem:basic set partition},
and then proceeds recursively, by induction on $a$.

%Concretely, let $$g:=\psi^{(i+a-1)} \circ
%\psi^{(i+a-2)} \circ \cdots \circ \psi^{(i+1)} \circ \psi\ii$$
%as in \cref{def:basic set}, and suppose $S=g^{-1}(u)$ as in \cref{lem:basic set partition}.
Let $T = \psi\ii(S)$, $T' = \psi\ii(S')$ be the $(i+1)$-basic sets above $S,S'$, respectively, which are induced by $\psi\ii$.
By \cref{lem:decomposition}, there are unique polynomials
$P_0(X), P_1(X)$ of degree $< n/2$ with:
\begin{align}
\label{eqPdecomp}
P(X) = \left(P_0(\psi\ii(X)) + X P_1(\psi\ii(X)) \right) (v\ii(X))^{\frac{n}{2}-1}.
\end{align}

$\EXTEND_{S,S',i}$ first computes $\EVT{P_0}{T}$ and $\EVT{P_1}{T}$.
(Since $|T| = n/2$, these uniquely determine $P_0(X)$ and $P_1(X)$).
Then it runs $\EXTEND_{T,T',i+1}$ on this to get
$\EVT{P_0}{T'}$ and $\EVT{P_1}{T'}$, and combines the results
to get $\EVT{P}{S'}$.

The algorithm takes as advice $\EVT{(v\ii(X))^{\frac{n}{2}-1}}{S}$, which can be precomputed since it only depends on $S$ and $\psi\ii$, along with
whatever advice is needed in the recursive calls.
\\ \\

\noindent{\bf Algorithm $\EXTEND_{S,S',i}$:}\\
\noindent{\bf Input:} an evaluation table $\EVT{\pi}{S}$
\begin{enumerate}
\item If $n = 1$ (recall that $n = |S| = |S'|$), then
\begin{enumerate}
\item Let $S = \{s\}$ and $S' = \{s'\}$.
\item Define
$$ \pi': S' \to \fq$$ by
$\pi'(s') = \pi(s)$.
\item Return $\EVT{\pi'}{S'}$.
\end{enumerate}
\item Let $T = \psi\ii(S), T' = \psi\ii(S')$ be the sets that lie above $S$ and $S'$ respectively.
\item For each $t \in T$:
\begin{enumerate}
	\item Define $s_0, s_1$ to be the $\psi\ii$-preimages of $t$ (noticing they are distinct because $S$ is a basic set)
	\item Compute $(\pi_0(t), \pi_1(t))=M_t(\pi(s_0),\pi(s_1)))$ where $M_t$ is defined in \cref{eq:M}.
%by solving the two linear equations:
%	\begin{align}
%	\label{eqpi1}
%		\pi(s_0) = \left(\pi_0(t) + s_0 \pi_1(t) \right) v(s_0)^{n/2-1} \\
%	\label{eqpi2}
%		\pi(s_1) = \left(\pi_0(t) + s_1 \pi_1(t) \right) v(s_1)^{n/2-1}.
%	\end{align}
\end{enumerate}
\item Form the evaluation tables $\EVT{\pi_0}{T}$ and $\EVT{\pi_1}{T}$.
\item Let $\EVT{\pi_0'}{T'}$ and $\EVT{\pi_1'}{T'}$ be the evaluation tables
returned by:
$$ \EXTEND_{T,T',i+1}(\EVT{\pi_0}{T})$$
$$ \EXTEND_{T,T', i+1}(\EVT{\pi_1}{T})$$
\item For each $s' \in S'$, define $\pi'(s')$ by
\begin{align}
\label{eqpiprime}
\pi'(s') = \left( \pi_0'(\psi\ii(s')) + s' \cdot \pi_1'(\psi\ii(s')) \right) v\ii(s')^{\frac{n}{2} - 1}.
\end{align}
\item Return $\EVT{\pi'}{S'}$.
\end{enumerate}

\paragraph{Correctness:}
Suppose $P(X) \in \fq[X]$ is a polynomial of degree $<n$.
We want to show that $\EXTEND_{S,S',i}(\EVT{P}{S})$ returns $\EVT{P}{S'}$.

The main claim is that when the input $\EVT{\pi}{S}$
is $\EVT{P}{S}$, the functions
$\pi_0, \pi_1:T \to \fq$ computed by the algorithm  satisfy:
$$\EVT{\pi_0}{T} = \EVT{P_0}{T},$$
$$\EVT{\pi_1}{T} = \EVT{P_1}{T},$$
where $P_0, P_1$ are as in Equation~\eqref{eqPdecomp}.
This is trivially correct for $a=0$ (i.e., when $n=1$) so we focus henceforth on larger values of $n = 2^a$.

Take any $t$ in $T$, and take $s_0, s_1 \in S$ with $\psi\ii(s_0) = \psi\ii(s_1) = t$.
Using the fact that $\pi(s_0) = P(s_0)$ and $\pi(s_1) = P(s_1)$, and the definition of $M_t$ from \cref{eq:M}, \cref{lem:locality} implies that $\pi_0(t) = P_0(t)$ and $\pi_1(t) = P_1(t)$.
Thus
$$\EVT{\pi_0}{T} = \EVT{P_0}{T},$$
$$\EVT{\pi_1}{T} = \EVT{P_1}{T}.$$

By induction on $a$, we conclude that $\EXTEND_{T,T',i+1}$
on $\EVT{P_0}{T}$ and $\EVT{P_1}{T}$
returns $\EVT{P_0}{T'}$ and $\EVT{P_1}{T'}$.

Thus
$$\EVT{\pi_0'}{T'}  = \EVT{P_0}{T'},$$
$$\EVT{\pi_1'}{T'}  = \EVT{P_1}{T'},$$

Using this along with Equations~\eqref{eqpiprime} and~\eqref{eqPmain}, we get that
$$\EVT{\pi'}{S'} = \EVT{P}{S'},$$
as desired. This completes the proof of correctness.

\paragraph{Running time:}
By inspection, we see that our algorithm uses $O(n)$ arithmetic operations over $\fq$ to reduce an instance of $\EXTEND$ of size $n$ to two instances of size
$n/2$.
(Recall that the algorithm fixes various constants, like $v(s)^{n/2-1}$ and the values of the matrix $M_t$.)
Thus the total running time $F(n)$ of this algorithm
satisfies the recursion:
$$ F(n) \leq 2 F(n/2) + O(n).$$
We conclude the running time (or circuit size) is  $O(n \log n)$ and this completes our proof.
\end{proof}

\begin{remark}
 The $\EXTEND$ algorithm as described is defined for $S, S'$ which are basic sets of the same
 size in the same $\FFTree$. However, we note that it works just as well when $S, S'$ are
 basic sets of the same size from two different $\FFTrees$ in the same $\FFForest$ (see also \cref{rem:ffforest}).
\end{remark}

% \dan{Add $\MEXTEND$ Here.}

\subsection{Multiplication}

We give a quick application of the previous algorithm to multiplication of polynomials in the new representation.

\begin{theorem}[Multiplication]
Let $S$ be a basic set with $|S| = n$.
Let $S_0 \subseteq S$ be a moiety of $S$. %a basic set with $|S_0| = n/2$.

There is an algorithm $\MULT_{S, S_0}$, which when given as input:
\begin{itemize}
\item $\EVT{P}{S_0}$, where $P(X) \in \fq[X]$ with $\deg(P)<n/2$, and
\item $\EVT{Q}{S_0}$, where $Q(X) \in \fq[X]$ with $\deg(Q)<n/2$,
\end{itemize}
runs in time
$$O(n \log n)$$
and computes $ \EVT{P\cdot Q}{S}$.
\end{theorem}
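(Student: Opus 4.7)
The plan is to reduce polynomial multiplication to low-degree extension plus pointwise multiplication, using the $\EXTEND$ algorithm from \cref{thm:low degree extension} as a black box.

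First I would observe that since $S$ is a basic set of size $n$ and $S_0$ is a moiety of $S$, by \cref{lem:basic set partition} we can write $S = S_0 \cup S_1$ where $S_1$ is the other moiety; both $S_0$ and $S_1$ are basic sets of size $n/2$. Because $\deg(P), \deg(Q) < n/2 = |S_0|$, the tables $\EVT{P}{S_0}$ and $\EVT{Q}{S_0}$ uniquely determine $P$ and $Q$, so we are free to extend them.

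The algorithm proceeds in three steps:
\begin{enumerate}
\item Invoke $\EXTEND_{S_0,S_1}(\EVT{P}{S_0})$ to obtain $\EVT{P}{S_1}$, and likewise $\EXTEND_{S_0,S_1}(\EVT{Q}{S_0})$ to obtain $\EVT{Q}{S_1}$. Combining with the inputs yields $\EVT{P}{S}$ and $\EVT{Q}{S}$ on the full basic set $S$.
\item Pointwise multiply the two tables on $S$: for each $s \in S$, output $(P\cdot Q)(s) = P(s) \cdot Q(s)$, forming $\EVT{P\cdot Q}{S}$.
\end{enumerate}
Correctness is immediate: $\deg(P \cdot Q) < n = |S|$, so the resulting evaluation table is a valid (and well-defined) representation of the product.

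For the running time, each call to $\EXTEND_{S_0,S_1}$ costs $O((n/2)\log(n/2)) = O(n\log n)$ by \cref{thm:low degree extension}, and the pointwise multiplication on $S$ takes $O(n)$ arithmetic operations over $\fq$. Summing yields the claimed $O(n \log n)$ bound. There is no serious obstacle here; the only subtlety to check is that the degree bound $\deg(P\cdot Q) < n$ matches $|S| = n$, which is exactly why the hypothesis $\deg(P), \deg(Q) < n/2$ (rather than $\leq n/2$) is stated as it is.
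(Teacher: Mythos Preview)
Your proof is correct and follows essentially the same approach as the paper: extend both tables from $S_0$ to $S_1$ via $\EXTEND_{S_0,S_1}$, combine to get evaluations on all of $S$, and multiply pointwise. The paper's argument is identical in structure and running-time analysis.
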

\begin{proof}
The algorithm is basically immediate given $\EXTEND$.
Let $S_1$ be the other moiety of $S$.
We first run $\EXTEND_{S_0, S_1}$ on $\EVT{P}{S_0}$ and $\EVT{Q}{S_0}$
to get $\EVT{P}{S_1}$ and $\EVT{Q}{S_1}$. Combining these, we get
$\EVT{P}{S}$ and $\EVT{Q}{S}$, and by pointwise multiplication we get
$\EVT{P\cdot Q}{S}$. The running time comes from two invocations of $\EXTEND$
and $O(n)$ other operations, and is thus $O( n \log n)$.
\end{proof}

\subsection{Monic polynomial extension}\label{sec:MEXTEND}
As noted before, for a set $S$ of size $n$, the linear space of all possible evaluation tables $\EVT{P}{S}$
is in one-to-one correspondence with the space of all polynomials $P(X)$ of degree $< n$. It is also
interesting to note that these spaces are in one-to-one correspondence with the set of all \emph{monic}
polynomials of degree \emph{exactly} $n$. In fact, if $Z(X)$ is the vanishing polynomial of $S$,
and $P(X), Q(X)$ are polynomials with $\deg(Q) < n =\deg(P)$ and $P$ is monic, then
$\EVT{P}{S} = \EVT{Q}{S}$ if and only if $P(X) = Q(X) + Z(X)$.

This property allows us to easily adapt the $\EXTEND$ algorithm into an extension algorithm for
monic polynomials, which we call $\MEXTEND$.
\begin{theorem}[Monic polynomial extension]
	\label{thm:mextend}
For any two basic sets $S, S'\subset \fq$ with $|S| = |S'| = n$, there is
an algorithm that runs in time
$O(n \log n)$, denoted $\MEXTEND_{S,S'}$, which when given as input:
\begin{itemize}
\item $\EVT{P}{S}$, where $P(X) \in \fq[X]$ is monic with $\deg(P) = n$,
\end{itemize}
outputs $\EVT{P}{S'}$.
\end{theorem}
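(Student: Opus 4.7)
The plan is to reduce $\MEXTEND$ to $\EXTEND$ using the correspondence between monic polynomials of degree exactly $n$ and polynomials of degree $<n$, via the vanishing polynomial of $S$, as discussed in the paragraph preceding the theorem.

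Let $Z_S(X) = \prod_{s \in S}(X - s)$ be the vanishing polynomial of $S$; it is monic of degree $n$. If $P(X)$ is monic of degree $n$, then $Q(X) := P(X) - Z_S(X)$ has degree $< n$, and since $Z_S$ vanishes on $S$, we have $\EVT{Q}{S} = \EVT{P}{S}$. Therefore the input evaluation table $\EVT{P}{S}$ can be reinterpreted, without any computation, as $\EVT{Q}{S}$ for the unique polynomial $Q$ of degree $<n$ agreeing with $P$ on $S$.

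The algorithm then proceeds in three steps. First, it treats the input $\EVT{P}{S}$ as $\EVT{Q}{S}$. Second, it invokes $\EXTEND_{S,S'}(\EVT{Q}{S})$ from \cref{thm:low degree extension} to obtain $\EVT{Q}{S'}$ in time $O(n \log n)$. Third, it takes as precomputed advice the evaluation table $\EVT{Z_S}{S'}$ (which depends only on the fixed parameters $S$ and $S'$) and performs pointwise addition to return
\[
\EVT{P}{S'} = \EVT{Q}{S'} + \EVT{Z_S}{S'},
\]
using $O(n)$ additional field operations.

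Correctness follows because $P(X) = Q(X) + Z_S(X)$ as polynomials, so this identity holds when evaluated at each point of $S'$. The total running time is $O(n\log n) + O(n) = O(n\log n)$, and the whole computation can be realized as an arithmetic circuit of this size over $\fq$ (with the evaluation table $\EVT{Z_S}{S'}$ hard-wired as constants, exactly as the other advice tables used inside $\EXTEND$). There is no real obstacle here; the only subtlety is recognizing that $\EXTEND$ demands a polynomial of degree strictly less than $n$, which is precisely what $Z_S$ is designed to supply by absorbing the leading monomial of $P$.
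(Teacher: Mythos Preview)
Your proof is correct and follows essentially the same approach as the paper: subtract the vanishing polynomial $Z_S$ of $S$ (implicitly, since it vanishes on $S$), apply $\EXTEND_{S,S'}$ to the resulting polynomial of degree $<n$, and then add back $\EVT{Z_S}{S'}$, which is taken as advice. The paper's proof is virtually identical in both idea and execution.
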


\begin{proof}
 Let $Z(X)$ be the vanishing polynomial of $S$.
 As noted above, for such polynomials $P(X)$, we have $\EVT{P}{S} = \EVT{P-Z}{S}$, and
 $\deg(P(X)-Z(X)) < n$. By the properties of $\EXTEND$ it thus follows that
 $$\EXTEND_{S,S'}(\EVT{P}{S}) = \EXTEND_{S,S'}(\EVT{P-Z}{S}) = \EVT{P-Z}{S'} $$
 and adding $\EVT{Z}{S'}$ pointwise yields
 $$\MEXTEND_{S, S'}(\EVT{P}{S}) \coloneqq \EXTEND_{S, S'}(\EVT{P}{S}) + \EVT{Z}{S'} = \EVT{P}{S'}$$
 as needed. The algorithm takes $\EVT{Z}{S'}$ as advice, calls $\EXTEND$ once and does an additional
 $O(n)$ operations, thus runs in time $O(n \log n)$.
\end{proof}

This algorithm can replace $\EXTEND$ in applications where the polynomials are known to be monic
and of known degrees, with more efficient run times.
For example, it can be used to multiply two monic polynomials of degree $n/2$, represented as
evaluation tables on a set of size $n/2$, with the product similarly being a monic polynomial of degree $n$,
represented as an evaluation table on a set of size $n$. If we were instead to multiply such polynomials
using the standard $\EXTEND$ algorithm, we would have to represent each polynomial by its values on a
set of size $n$, and their product on a set of size $2n$, and use extensions from $n$ to $2n$
instead of extensions from $n/2$ to $n$, which would more than double the required run-time.

\subsection{Degree Computation}
\label{sec:degree computation}

The next operation we describe is that of computing the degree of a polynomial $P$ represented by
its evaluation on a basic set.

\begin{theorem}[Degree Computation]\label{thm:degree computation}
Let $S$ be a basic set of size $|S| = n$.
There is an algorithm $\DEGREE_S$, which when given as input:
\begin{itemize}
\item $\EVT{P}{S}$, where $P(X) \in \fq[X]$ with $\deg(P) < n$,
\end{itemize}
runs in time
$$O(n \log n)$$ and
computes $\deg(P)$.
\end{theorem}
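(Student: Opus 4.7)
My plan is a one-sided divide-and-conquer on $|S|=n$, at each step either confirming $\deg(P) < n/2$ (in which case $\EVT{P}{S_0}$ already determines $P$) or extracting the high-degree quotient and recursing on it with a set of half the size.

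First I set up the splitting. Let $S = S_0 \cup S_1$ be the two moieties of $S$ (which exist by \cref{lem:basic set partition}), let $Z_0(X)$ be the vanishing polynomial of $S_0$, and treat $\EVT{Z_0}{S_1}$ as precomputed advice depending only on the \FFTree. Note $Z_0$ is nonzero at every point of $S_1$ since $S_0$ and $S_1$ are disjoint. Write $P(X) = Q(X) Z_0(X) + R(X)$ with $\deg(R) < n/2 = \deg(Z_0)$; then $\deg(P) < n/2$ iff $Q \equiv 0$, and otherwise $\deg(Q) = \deg(P) - n/2 < n/2$. The key observation is that $R$ is \emph{already determined} by $\EVT{P}{S_0}$: since $Z_0$ vanishes on $S_0$, we have $R|_{S_0} = P|_{S_0}$, and because $\deg(R) < n/2 = |S_0|$, we can recover $\EVT{R}{S_1}$ in time $O(n \log n)$ by invoking $\EXTEND_{S_0, S_1}$ on $\EVT{P}{S_0}$ (\cref{thm:low degree extension}).

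Second, I use $\EVT{R}{S_1}$ to branch. Compare pointwise to $\EVT{P}{S_1}$. If they agree everywhere on $S_1$, then $P \equiv R$ (they agree on all of $S$, and both have degree $< n$), so $\deg(P) < n/2$, and I recurse on $\DEGREE_{S_0}(\EVT{P}{S_0})$. Otherwise $\deg(P) \ge n/2$; I compute
\[
\EVT{Q}{S_1} \;=\; \bigl(\EVT{P}{S_1} - \EVT{R}{S_1}\bigr)\big/\EVT{Z_0}{S_1}
\]
pointwise in $O(n)$ operations, recurse to obtain $d' = \DEGREE_{S_1}(\EVT{Q}{S_1})$, and return $d' + n/2$. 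The base case $n=1$ reduces to testing whether the single value $P(s)$ is zero, returning $-\infty$ or $0$ accordingly. Correctness follows from the decomposition $P = Q Z_0 + R$ together with the facts that $\deg(R) < n/2$ forces $R$ to be the unique interpolant of $P|_{S_0}$ through $|S_0| = n/2$ points, and that $Q$ is uniquely determined on $S_1$ where $Z_0$ does not vanish.

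Third, the running time. Each recursive call works on a basic set of half the size, and the only expensive step per level is a single invocation of $\EXTEND_{S_0, S_1}$ (plus $O(n)$ pointwise arithmetic and comparisons). The overall cost satisfies $F(n) = F(n/2) + O(n \log n)$, which telescopes to $F(n) = O(n \log n)$ since $\sum_{i \ge 0} (n/2^i)\log(n/2^i) = O(n \log n)$. Because the output is an integer, the natural model is the algebraic decision tree, branching on the $n/2$ equality tests that determine whether $\EVT{P}{S_1} = \EVT{R}{S_1}$; the depth matches the arithmetic cost up to constants.

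The one subtle point to be careful about is the equality test in the branching step: one must verify that a single nonzero discrepancy on $S_1$ is both necessary and sufficient to conclude $\deg(P) \ge n/2$, which is exactly the statement that a polynomial of degree $< n$ vanishing on all $n$ points of $S$ must be zero. Everything else is a bookkeeping exercise around the decomposition $P = Q Z_0 + R$ and the fact that $\EXTEND$ produces this $R$ for free.
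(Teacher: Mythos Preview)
Your proof is correct and is essentially identical to the paper's: both split $S$ into moieties $S_0,S_1$, use $\EXTEND_{S_0,S_1}$ on $\EVT{P}{S_0}$ to produce the evaluation of the low-degree remainder $R$ on $S_1$, branch on whether $\EVT{R}{S_1}=\EVT{P}{S_1}$, and in the high-degree case recurse on $\EVT{(P-R)/Z_0}{S_1}$, with the same recurrence $F(n)=F(n/2)+O(n\log n)$. The only cosmetic difference is that in the low-degree branch the paper recurses on $\DEGREE_{S_1}(\EVT{P}{S_1})$ while you recurse on $\DEGREE_{S_0}(\EVT{P}{S_0})$; since $\deg(P)<n/2$ in that branch, either moiety determines $P$ and both choices are fine.
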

\begin{proof}
%Write $S = S_0 \cup S_1$, where $S_0$ and $S_1$ are basic sets with $|S_0| = |S_1| = |S|/2$. Let
Let $S_0, S_1$ be the moieties of $S$, and let
$Z_0(X)$ be the vanishing polynomial of $S_0$.
The algorithm we give will assume that $\EVT{Z_0}{S_1}$ is given as advice: this is a fixed
precomputation that depends only on $S$.\\
\\

\noindent{\bf Algorithm $\DEGREE_S$:}\\
\noindent{\bf Input:} An evaluation table $\EVT{\pi}{S}$
\begin{enumerate}
\item If $|S| = 1$ with $S = \{s\}$, then
\begin{itemize}
\item if $\pi(s) \neq 0$ return $0$; else, return $-\infty$.
%\eli{I think the correct base case is to always return $0$. The degree of any constant is $0$. Why would it be $-\infty$? After all, for any function we get we know degree is well defined and less than $n$.}
%\dan{The degree of the 0 polynomial is indeed usually defined to be $-\infty$, not 0. There are many
%good reasons why this is the case.}
\end{itemize}
\item Let $\EVT{g}{S_1} = \EXTEND_{S_0, S_1}\left( \EVT{\pi}{S_0} \right)$.
\item \label{caselowdeg} If $\EVT{g}{S_1} = \EVT{\pi}{S_1}$, then return $\DEGREE_{S_1}(  \EVT{\pi}{S_1} )$.

\item \label{casehighdeg} Otherwise, using $ \EVT{\pi}{S_1}$, $ \EVT{g}{S_1}$ and
$\EVT{Z_0}{S_1}$, compute:
$$ \EVT{\frac{\pi - g}{Z_0}}{S_1},$$
and return
$$\frac{n}{2} + \DEGREE_{S_1}\left(\EVT{\frac{\pi-g}{Z_0}}{S_1} \right).$$
\end{enumerate}

\paragraph{Correctness:}
The case $n =1$ is trivial, and when $P$ is the zero polynomial notice by inspection the result will be $-\infty$, as required.

Suppose $n > 1$.
Let $P(X)$ be a polynomial with $0\leq \deg(P) < n$.
Let us consider the execution of the above algorithm on input $\EVT{P}{S}$.

\begin{itemize}
\item {\bf Case 1:} $\deg(P) < n/2$. Then by the defining property of $\EXTEND_{S_0, S_1}$,
we have that
$$\EXTEND_{S_0, S_1}\left( \EVT{P}{S_0} \right) = \EVT{P}{S_1}.$$
Thus in the execution of the algorithm, we will have $\EVT{g}{S_1} = \EVT{P}{S_1}$,
and thus in Step~\ref{caselowdeg} the algorithm will return $$\DEGREE_{S_1}\left( \EVT{P}{S_1} \right),$$
which equals $\deg(P)$ by induction, as desired.

\item {\bf Case 2:} $\deg(P) \geq n/2$. Let $P(X) = R(X) + Z_0(X) \cdot Q(X)$, where $\deg(R) < n/2$.
Thus $\deg(P) = n/2 + \deg(Q)$.

By the above relation between $P$ and $R$, we have
$$\EVT{R}{S_0} = \EVT{P}{S_0} = \EVT{\pi}{S_0}.$$

By the defining property of $\EXTEND_{S_0, S_1}$, we get that $\EVT{g}{S_1} = \EVT{R}{S_1}$.
Thus $$ \EVT{\frac{\pi-g}{Z_0}}{S_1} = \EVT{ \frac{P-R}{Z_0}}{S_1} = \EVT{Q}{S_1},$$
which implies, by induction, that Step~\ref{casehighdeg} returns
$$ n/2 + \DEGREE_{S_1}\left(\EVT{Q}{S_1}\right) = n/2 + \deg(Q) = \deg(P),$$
as desired.
\end{itemize}

\paragraph{Running time:}

The algorithm calls one instance of $\EXTEND$ on an instance of size $O(n)$,
does $O(n)$ operations,
and makes one recursive call to itself on an instance of size $n/2$.
Thus the running time $F(n)$ satisfies:
$$F(n) \leq O(n \log n) + F(n/2),$$
and thus $F(n) \leq O(n \log n)$, as claimed.
\end{proof}

%\dan{c) The algorithm uses both $\EVT{A}{S_0}, \EVT{A}{S_1}$ -- they should be mentioned
%as being inputs. Note that they are not directly extensions of each other, since $A$
%is of degree $=n$. But if we know its leading coefficient, we can correct this; assuming that it is monic (which I believe is true for every usecase?), then
%\begin{align*}
%&\EXTEND_{S_0, S_1}(\EVT{A}{S_0}) = \EXTEND_{S_0, S_1}(\EVT{A \rem Z_0}{S_0}) =
%\EVT{A \rem Z_0}{S_1} = \EVT{A - Z_0}{S_1} \\
%\Rightarrow &\EVT{A}{S_1} = \EXTEND_{S_0, S_1}(\EVT{A}{S_0}) + \EVT{Z_0}{S_1}
%\end{align*}
%}

\subsection{Modular and Montgomery Reduction}\label{sec:mod and montgomery reduction}
\subsubsection{Modular Reduction---theorem statement}
The goal of this chapter is to present an algorithm that computes the remainder of the division
of an input polynomial $P$ (in the new representation) by a fixed polynomial $A$:

\begin{theorem}[Modular Reduction]
	\label{thm:mod}
% Let $S$ be a basic set of size $n$, partitioned as $S = S_0 \cup S_1$.
% Let $A(X) \in \fq[X]$ be a polynomial of degree at most $n/2$,
% having no zeroes in at least one of $S_0$ or $S_1$.
Let $S$ be a basic set of size $n$, and let $A(X) \in \fq[X]$ be a polynomial of degree at most
$n/2$ which is half-disjoint from $S$, i.e.\@ $A(X)$ has no zeroes in at least one moiety of $S$.

There is an algorithm running in time $O(n \log n)$, denoted $\MODV_{S, A}$, which when given as input:
\begin{itemize}
\item $\EVT{P}{S}$, where $P(X) \in \fq[X]$ with $\deg(P) < n$,
\end{itemize}
computes $ \EVT{Q}{S}$, where $Q(X) \in \fq[X]$ is given by:
$$ Q(X) = P(X)  \rem A(X).$$
\end{theorem}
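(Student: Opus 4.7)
The plan is to reduce $\MODV$ to two successive invocations of the Montgomery reduction algorithm $\REDC$ (established earlier in this section), stitched together by a pointwise multiplication against a precomputed polynomial. Let $S_0$ denote the moiety of $S$ on which $A$ has no zeros, and let $Z_0$ be its vanishing polynomial. Since $A$ and $Z_0$ share no roots, they are coprime, and the polynomial $B := Z_0^2 \rem A$, of degree less than $\deg(A) \le n/2$, is well defined; I would take $\EVT{B}{S}$ as fixed precomputed advice depending only on $S$ and $A$.

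The key identity is that $\REDC$ divides its input by $Z_0$ modulo $A$. So I would first apply $\REDC_{S,A}$ to $\EVT{P}{S}$ to obtain $\EVT{X}{S}$ with $X \equiv P Z_0^{-1} \pmod{A}$; then form $\EVT{XB}{S}$ by pointwise multiplication, noting that $XB \equiv P Z_0 \pmod{A}$ because $B \equiv Z_0^2 \pmod{A}$; and finally apply $\REDC_{S,A}$ a second time to obtain $\EVT{Q}{S}$, where $Q \equiv XB \cdot Z_0^{-1} \equiv P \pmod{A}$ will have degree less than $\deg(A)$, so that $Q = P \rem A$ as desired.

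The step requiring most care is tracking degree bounds, which determine whether each intermediate polynomial is uniquely represented on $S$ and whether the final output is already canonical. The standard Montgomery analysis of the internal update $V = (T + UA)/Z_0$ with $\deg(U) < \deg(Z_0) = n/2$ yields $\deg(X) \le n/2 - 1$ from the first call; combined with $\deg(B) < \deg(A) \le n/2$, this gives $\deg(XB) \le n-2$, so $XB$ fits uniquely on $S$ and also respects the input-degree bound of the second $\REDC$ call. Rerunning the same analysis on $T = XB$ then shows the output of the second $\REDC$ has degree strictly less than $\deg(A)$, hence it equals the canonical remainder $P \rem A$ with no further reduction required.

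I do not foresee a substantive obstacle once $\REDC$ is in hand: the construction is essentially the classical Montgomery trick for converting out of Montgomery form via a precomputed square of the Montgomery factor, adapted to the evaluation representation. Each $\REDC$ invocation costs $O(n \log n)$ and the pointwise multiplication on $S$ costs $O(n)$, for a total of $O(n \log n)$ as claimed.
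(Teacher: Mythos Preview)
Your proposal is correct and is exactly the paper's approach: two invocations of $\REDC_{S,S_0,A}$ sandwiching a pointwise multiplication by the precomputed $\EVT{Z_0^2 \rem A}{S}$, with the running time coming out to $O(n\log n)$.

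One small caution on the degree tracking: the bound $\deg(XB)\le n-2$ that you record is not sharp enough to force $\deg(Q)<\deg(A)$ after the second $\REDC$ when $\deg(A)<n/2-1$; what you actually need (and what the paper uses) is the tighter $\deg(XB)\le (n/2-1)+(\deg(A)-1)<n/2+\deg(A)$, which follows directly from the two facts $\deg(X)<n/2$ and $\deg(B)<\deg(A)$ that you already observed. With that bound in hand, the $\REDC$ degree guarantee $\deg(Q)\le\max(\deg(XB)-n/2,\ \deg(A)-1)$ indeed yields $\deg(Q)<\deg(A)$, so $Q=P\rem A$.
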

Before presenting the proof and the algorithm, we introduce an auxiliary algorithm, which we call
\emph{Montgomery reduction}, inspired by Montgomery's~\cite{Mont85} algorithm for ``modulo-free''
modular multiplication, which we also describe briefly.

\subsubsection{Montgomery Reduction}\label{sec:montgomery reduction}
Montgomery's algorithm for multiplication is motivated by the observation that while the operation
$a \mmod N$ for a generic (odd) integer $N$ might be computationally expensive,
the operation $a \mmod R$ where $R = 2^r \gtrsim N$ is very efficient, in computing systems based on
binary representations.

In Montgomery's method, each residue $x \pmod N$ is represented instead by $xR \mmod N$.
To get the representation of the product $xy$, i.e.\@ $xyR \mmod N$, we first multiply the two
representations to get an integer equivalent to $xyR^2 \pmod N$, and then apply the \emph{reduction}
algorithm $\REDC$, which efficiently maps an integer $t$ to $tR^{-1} \mmod N$, without explicitly computing
the division by $N$. The reduction algorithm relies on having the constant number $(-N^{-1}) \mmod R$ as
advice.

The representation $xR \mmod N$ can be transformed back to $x \mmod N$ by simply applying reduction.
In the other direction, $x \mmod N$ can be transformed into $xR \mmod N$ by performing the
full Montgomery multiplication (i.e.\@ integer multiplication + reduction) between $x \mmod N$
and the constant $R^2 \mmod N$, which is again given as advice.

For our purposes, we want to perform modular arithmetic of polynomials. We observe that the
vanishing polynomial $Z(X)$ of a basic set $S$ is a natural analogue to the radix $R=2^r$, as
arithmetic operations on the tables $\EVT{P}{S}$ are equivalent to arithmetic operations
on polynomials modulo $Z(X)$. Thus, we can attempt to create a version of $\REDC$ which
transforms $\EVT{P}{S}$ into $\EVT{P\cdot Z^{-1}}{S}$, and then use this algorithm to perform
general modular operations, such as $\MODV$. In fact, we apply $\REDC$ directly only inside
$\MODV$.
%\dan{Maybe comment that both have the same asymptotic run time, but $\REDC$ might have
%independent interest since it has strictly better constants, due to $\MODV$ containing
%two applications of $\REDC$? Probably not interesting for FOCS?}
%\eli{Dan, please go ahead and add such a remark}
%\dan{Added at end of the entire mod section, i.e.\@ before division section}

\begin{theorem}[Montgomery Reduction]\label{thm:invmod}
	Let $S$ be a basic set with $|S|=n$.  Let $S_0\subseteq S$ be a moiety of $S$.
	Let $A(X) \in \fq[X]$ be a polynomial of degree at most $n/2$
	having no zeroes in $S_0$. Let $Z_0(X)$ be the vanishing polynomial
	of $S_0$.

	There is an algorithm running in time $O(n \log n)$, denoted $\REDC_{S, S_0,A}$,
	which when given as input:
\begin{itemize}
\item $\EVT{P}{S}$, where $P(X) \in \fq[X]$ satisfies $\deg(P) < n$,
\end{itemize}
computes $\EVT{Q}{S}$, where $Q(X) \in \fq[X]$ is a polynomial such that
\begin{itemize}
 \item  $Q(X) \equiv P(X) \cdot Z_0(X)^{-1} \pmod {A(X)}$, and
 \item $\deg(Q) \le \max(\deg(P) - n/2,\ \deg(A) - 1) < n/2$.
\end{itemize}
\end{theorem}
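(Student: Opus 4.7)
The plan is to mirror the classical Montgomery reduction algorithm, with the vanishing polynomial $Z_0(X)$ of $S_0$ playing the role of the ``radix'' $R$, and $A(X)$ playing the role of the modulus $N$. The fact that $A$ has no zeros on $S_0$ means $\gcd(A, Z_0) = 1$, the polynomial analogue of the classical coprimality condition $\gcd(N, R) = 1$. The algebraic identity we will realize is
\[
Q(X) = \frac{P(X) + M(X) \cdot A(X)}{Z_0(X)},
\quad \text{where } M(X) \equiv -P(X) \cdot A(X)^{-1} \pmod{Z_0(X)} \text{ with } \deg M < n/2.
\]
By construction $P + MA \equiv 0 \pmod{Z_0}$ so the division is exact, and reducing $Q \cdot Z_0 = P + MA$ mod $A$ gives $Q \equiv P \cdot Z_0^{-1} \pmod{A}$, as required.

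The algorithm $\REDC_{S, S_0, A}$ computes this via evaluation tables, using $\EVT{-1/A}{S_0}$, $\EVT{A}{S_1}$, and $\EVT{1/Z_0}{S_1}$ as precomputed advice (all well-defined: $A$ is nonvanishing on $S_0$ by assumption, and $Z_0$ is nonvanishing on $S_1$ since $S_0, S_1$ are disjoint moieties). Given $\EVT{P}{S}$: (i) restrict to $\EVT{P}{S_0}$ and pointwise multiply by the advice to obtain $\EVT{-P/A}{S_0} = \EVT{M}{S_0}$ (the equality holds because $M$ is the unique polynomial of degree $< n/2 = |S_0|$ agreeing with $-P A^{-1}$ modulo $Z_0$, i.e., agreeing with $-P/A$ on $S_0$); (ii) apply $\EXTEND_{S_0, S_1}$ to get $\EVT{M}{S_1}$, and combine with $\EVT{M}{S_0}$ into $\EVT{M}{S}$; (iii) pointwise compute $\EVT{P + MA}{S}$; (iv) on $S_1$, pointwise divide by $Z_0$ to obtain $\EVT{Q}{S_1}$ (noting that $\EVT{P+MA}{S_0}$ is identically zero, so we cannot divide there); (v) apply $\EXTEND_{S_1, S_0}$ to $\EVT{Q}{S_1}$ and combine to output $\EVT{Q}{S}$.

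The main technical point is verifying the degree bound, which is also what licenses the final $\EXTEND$ call. Since $\deg(M) \le n/2 - 1$ and $\deg(A) \le n/2$, we have $\deg(MA) \le n-1$, and combined with $\deg(P) < n$ we get $\deg(P+MA) < n$. After dividing by $Z_0$ of degree $n/2$ we obtain
\[
\deg(Q) = \deg(P+MA) - n/2 \le \max\bigl(\deg(P), \deg(M) + \deg(A)\bigr) - n/2 \le \max\bigl(\deg(P) - n/2,\ \deg(A) - 1\bigr),
\]
which is strictly less than $n/2$. Consequently $Q$ is uniquely determined by its values on $S_1$, justifying step (v).

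For running time, the pointwise operations in steps (i), (iii), (iv) each cost $O(n)$, and the two $\EXTEND$ invocations in steps (ii) and (v) each cost $O(n \log n)$ by \cref{thm:low degree extension}, giving total cost $O(n \log n)$. I do not anticipate a serious obstacle: the assumption that $A$ is nonvanishing on $S_0$ (half-disjointness) is exactly what is needed to make the pointwise inversion in step (i) well-defined, and everything else reduces to algebraic identities and the previously established $\EXTEND$ primitive.
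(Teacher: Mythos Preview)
Your proposal is correct and takes essentially the same approach as the paper: the paper's intermediate polynomial $G(X)$ (the degree-$<n/2$ interpolant of $P/A$ on $S_0$) is exactly your $-M(X)$, and the paper's output $H(X)=(P-GA)/Z_0$ is exactly your $Q(X)=(P+MA)/Z_0$. The only differences are cosmetic sign and bookkeeping choices (you form $\EVT{M}{S}$ and $\EVT{P+MA}{S}$ on all of $S$, whereas the paper works only on $S_1$ at that stage), and the degree and congruence arguments are identical.
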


\begin{remark}\label{rem:invmod ext}
 If $\deg(P) < n/2 + \deg(A)$, then it follows that $\deg(Q) < \deg(A)$, and therefore
 $$Q(X) = P(X) \cdot (Z_0(X))^{-1}_{A(X)} \rem A(X).$$
 However, the last identity is not true in general when $n/2 + \deg(A) \le \deg(P) < n$, since
 $Q$ might not be of degree less than $\deg(A)$.
\end{remark}

Before giving the algorithm, we give some high level motivation for it.
Suppose for simplicity that $A(X)$ has degree exactly $n/2$.
Observe that $A(X)$ and $Z_0(X)$ are relatively prime.
Thus there exist $G(X), H(X)$ of degree $< n/2$ such that:
$$ G(X)A(X) + H(X) Z_0(X) = P(X).$$
From this identity, if we are given the evaluation of one of $G(X)$ at a point $x \in S\setminus S_0$,
we can compute the evaluation of $H(X)$ at that point $x$.
Finally, we observe that:
\begin{align}
\label{GH1}
G(X) = (P(X) / A(X) ) \rem Z_0(X),\\
\label{GH2}
H(X) =    ( P(X)/Z_0(X) ) \rem A(X).
\end{align}
The first equation tells us how to compute $G(X)$ at any $x \in S_0$, which we can then extend to 
compute evaluations of $G(X)$ at $x \in S \setminus S_0$.
The second equation tells us that $H(X)$ equals $Q(X)$, the polynomial whose evaluations we seek.
Putting these together, we get the algorithm.

\begin{proof}
Let $S_1$ be the other moiety of $S$.
The algorithm uses the values of $\EVT{Z_0}{S_1}$, $\EVT{A}{S_0}$, $\EVT{A}{S_1}$,
which depend only on $S$, $S_0$ and $A$.\\
\\

\noindent{\bf Algorithm $\REDC_{S, S_0, A}$:}\\
\noindent{\bf Input:} an evaluation table $\EVT{\pi}{S}$
\begin{enumerate}
\item From $\EVT{\pi}{S_0}$ and $\EVT{A}{S_0}$, compute
$\EVT{\frac{\pi}{A}}{S_0}.$
\item \label{invmodextstep} Let $\EVT{g}{S_1} = \EXTEND_{S_0, S_1}( \EVT{\frac{\pi}{A}}{S_0} )$.
\item From $\EVT{\pi}{ S_1 }$, $\EVT{ g}{S_1} $, $\EVT{A}{S_1}$, $\EVT{Z_0}{S_1}$,
compute:
$$ \EVT{h_1}{S_1} = \EVT{\frac{\pi - g A}{Z_0}}{S_1}.$$
\item \label{redclastext} Compute:
$$ \EVT{h_0}{S_0} = \EXTEND_{S_1, S_0}(\EVT{h_1}{S_1}).$$
\item Return $\EVT{h_0}{S_0} \cup \EVT{h_1}{S_1}$.
\end{enumerate}

\paragraph{Proof of correctness:}
Let $P(X)$ be a polynomial of degree $ < n$.
We will analyze the above algorithm when its input $\EVT{\pi}{S}$ is taken to be $\EVT{P}{S}$.
Let $G(X) \in \fq[X]$ be the unique polynomial of degree $ < n/2$ interpolating $\frac{\pi}{A}$ on $S_0$; namely:
$$\EVT{G}{S_0} = \EVT{\frac{\pi}{A}}{S_0}.$$

Then by the defining property of $\EXTEND$, we get that:
$$\EXTEND_{S_0, S_1}( \EVT{\frac{\pi}{A}}{S_0} ) = \EXTEND_{S_0, S_1}( \EVT{G}{S_0} ) = \EVT{G}{S_1}.$$
Thus in Step~\ref{invmodextstep} of the algorithm, we will have
$$\EVT{g}{S_1} = \EVT{G}{S_1}.$$

By definition of $G(X)$, we have that  $\frac{P(X)}{A(X)}- G(X)$ vanishes on $S_0$.
Therefore $P(X) - G(X)A(X) \in \fq[X]$ vanishes on $S_0$, and so $Z_0(X)$ divides $P(X) - G(X)A(X)$.
Let $H(X) \in \fq[X]$ be given by:
$$H(X) = \frac{P(X) - G(X)A(X)}{Z_0(X)}.$$
Note that
\begin{equation}\label{eq:H deg}
	\deg(H) \le \max\{\deg(P), \deg(A)+\deg(G)\}-\deg(Z_0) \le \max(\deg(P) - n/2, \deg(A) - 1) < n/2.
\end{equation}
The second inequality follows from the fact that $\deg(G)<\deg(Z_0)=n/2$, and the final inequality from
the assumptions $\deg(P) < n$ and $\deg(A) \le n/2$.
%and the facts $\deg(A)\leq n/2$ and $\deg(G)<n/2$.

We have
$$ \EVT{h_1}{S_1} = \EVT{\frac{ \pi - gA}{Z_0}}{S_1} = \EVT{\frac{P - GA}{Z_0}}{S_1} = \EVT{H}{S_1}.$$

Thus in Step~\ref{redclastext} $\EXTEND_{S_1, S_0}$ yields
$$ \EVT{h_0}{S_0} = \EVT{H}{S_0},$$
and so the algorithm returns:
$$\EVT{h_0}{S_0} \cup \EVT{h_1}{S_1} = \EVT{H}{S_0} \cup \EVT{H}{S_1} = \EVT{H}{S}$$
and we have already shown in \cref{eq:H deg} that $H(X) = Q(X)$ is of the claimed degree.

Finally, from the definition of $H(X)$ we get
$$H(X) Z_0(X) = P(X) - G(X)A(X) \equiv P(X) \pmod{A(X)},$$
which after dividing by $Z_0(X)$ is equivalent to
$$H(X) \equiv P(X)\cdot Z_0(X)^{-1} \pmod{A(X)}.$$

% It remains to show that $H(X) \equiv P(X) \cdot Z_0(X)^{-1}\pmod{A(X)}$.
% This can be deduced from the following two claims:
% \begin{enumerate}
% \item $\deg(H) < \deg(A) = n/2$, which is given by \cref{eq:H deg}.
% \item $H(X) \equiv P(X) \cdot (Z_0(X))^{-1}_{A(X)} \pmod{A(X)}$. This is a consequence
% of the definition of $H(X)$, which implies that
% $$H(X) Z_0(X) = P(X) - G(X)A(X) \equiv P(X) \pmod{A(X)}.$$
% \end{enumerate}

This completes the proof of correctness.
\paragraph{Running time:}
The algorithm does $O(n)$ operations and invokes $\EXTEND$ twice on instances of size $n/2$. Thus the total running time is $O(n \log n)$.
\end{proof}

\subsubsection{Modular Reduction---algorithm and proof}\label{sec:modular reduction}

% The next operation computes the remainder of a polynomial $P$ from its division by another fixed polynomial $A$.
%
% \begin{theorem}[Modular Reduction]
% 	\label{thm:mod}
% Let $S$ be a basic set of size $n$.
% Let $A(X) \in \fq[X]$ be a polynomial of degree at most $n/2$,
% having no zeroes in $S$.
%
% There is an algorithm running in time $O(n \log n)$, denoted $\MODV_{S, A}$, which when given as input:
% \begin{itemize}
% \item $\EVT{P}{S}$, where $P(X) \in \fq[X]$ with $\deg(P) < n$,
% \end{itemize}
% computes $ \EVT{Q}{S}$, where $Q(X) \in \fq[X]$ is given by:
% $$ Q(X) =     P(X)  \rem A(X).$$
% \end{theorem}

\begin{proof}[Proof of \cref{thm:mod}]
Let $S_0, S_1$ be the moieties of $S$, and suppose without loss of generality that $A(X)$
has no zeros in $S_0$ (otherwise, it has no
zeros in $S_1$ by assumption, and we may swap the labeling of the moieties).

Let $C(X) =  Z_0(X)^2 \rem A(X)$, which has degree $\deg(C) < \deg(A)$.
The algorithm uses the values of $\EVT{C}{S}$, that depend only on $A(X)$, $S$ and $S_0$, as
well as values used internally by the $\REDC_{S, S_0, A}$ sub-circuit.
\\
\\

\noindent{\bf Algorithm $\MODV_{S,A}$:}\\
\noindent{\bf Input:} an evaluation table $\EVT{\pi}{S}$
\begin{enumerate}
\item \label{invmod1} From $\EVT{\pi}{S}$, compute
$$\EVT{h}{S} = \REDC_{S, S_0, A}( \EVT{\pi}{S}).$$
\item \label{hR} From $\EVT{h}{S}$ and $\EVT{C}{S}$, compute
$$ \EVT{h \cdot C}{S}.$$
\item \label{invmod2} Compute:
$$\EVT{g}{S} = \REDC_{S, S_0, A}(\EVT{h\cdot C}{S} ).$$
\item Return $\EVT{g}{S}$.
\end{enumerate}

\paragraph{Proof of correctness:}
Suppose $P(X) \in \fq[x]$ with $\deg(P) < n$.
We will analyze the above computation when its input $\EVT{\pi}{S}$ is taken to be
$\EVT{P}{S}$.
By \cref{thm:invmod} about $\REDC$, we get that Step~\ref{invmod1} computes $\EVT{h}{S} = \EVT{H}{S}$,
where $H(X)$ is a polynomial with $\deg H(X) < n/2$ and satisfying
$$H(X) \equiv P(X) \cdot Z_0(X)^{-1} \pmod{A(X)}.$$

Thus Step~\ref{hR} computes
$$\EVT{H\cdot C}{S},$$
where $H(X)\cdot C(X)$ is a polynomial with
$$\deg(H\cdot C) = \deg(H) + \deg(C) < n/2 + \deg(A)$$
and
$$H(X) \cdot C(X) \cdot Z_0(X)^{-1} \equiv P(X) \cdot Z_0(X)^{-1} \cdot Z_0(X)^2 \cdot Z_0(X)^{-1}
\equiv P(X)\pmod{A(X)}.$$

Thus in Step~\ref{invmod2}, as noted in \cref{rem:invmod ext}, the algorithm returns $\EVT{Q}{S}$,
where
$$Q(X) = \left(H\cdot C \cdot (Z_0(X))^{-1}_{A(X)} \right) \rem A(X) = P(X) \rem A(X),$$
as desired.

\paragraph{Running time:}
The algorithm invokes $\REDC$ twice on instances of size $n$, and does $O(n)$ other operations.
Thus the running time is $O(n \log n)$.
\end{proof}

\begin{remark}
 As noted earlier in this section, we have no further direct applications of $\REDC$ in this paper,
 and all calls to it are mediated by calls to $\MODV$. Nonetheless, we note that it may hold individual
 interest for real-world applications, as it is naturally more than twice as fast as $\MODV$, due to
 $\MODV$ containing two calls to $\REDC$. Thus, applying $\REDC$ directly might be more efficient in
 certain situations.
\end{remark}

\subsection{Division}

We give a quick application of the previous algorithm to finding the quotient of an input polynomial
$P$ (in the new representation) by a fixed polynomial $A$.
% \david{something in the grammar of this sentence does not make sense to me}
% \eli{lgtm, as written now}

\begin{theorem}[Division]
Let $S$ be a basic set with $|S| = n$.

Let $A(X)$ be a polynomial with degree at most $n/2$ having no zeroes in $S$.

There is an algorithm $\DIV_{S}$, which when given as input:
\begin{itemize}
\item $\EVT{P}{S}$, where $P(X) \in \fq[X]$ with $\deg(P)<n$, and
\end{itemize}
runs in time
$$O(n \log n)$$
and computes $ \EVT{Q}{S}$,
where $Q(X)$ is the quotient when $P(X)$ is divided by $A(X)$.
\end{theorem}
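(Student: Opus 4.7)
The plan is to reduce Division to Modular Reduction plus pointwise arithmetic. Write $P(X) = A(X) Q(X) + R(X)$ where $R(X) = P(X) \rem A(X)$ and $\deg R < \deg A$. Then the quotient is given by the identity
\[
Q(X) = \frac{P(X) - R(X)}{A(X)},
\]
and the key observation is that since $A$ has no zeros in $S$, the right-hand side can be evaluated on $S$ purely by pointwise operations once $\EVT{R}{S}$ is known.

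Concretely, I would proceed as follows. First, invoke $\MODV_{S,A}$ on the input $\EVT{P}{S}$ to obtain $\EVT{R}{S}$; this is legal because $A(X)$ has no zeros in $S$, hence in particular has no zeros in either moiety of $S$, so the half-disjointness hypothesis of \cref{thm:mod} is satisfied. Second, compute $\EVT{P - R}{S}$ by pointwise subtraction. Third, using the precomputed advice table $\EVT{1/A}{S}$ (which is well-defined since $A$ has no zeros in $S$, and depends only on $S$ and $A$), compute $\EVT{Q}{S} = \EVT{(P-R)/A}{S}$ by pointwise multiplication.

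For correctness, note that $A(X)$ divides $P(X) - R(X)$ as polynomials, so the rational function $(P-R)/A$ is in fact the polynomial $Q(X)$, and its evaluation at each $s \in S$ agrees with the pointwise quotient because $A(s) \neq 0$. The table $\EVT{Q}{S}$ is a valid representation because $\deg Q = \deg P - \deg A < n$, so $Q$ is uniquely determined by its values on the $n$-element basic set $S$.

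For the running time, Step 1 runs in $O(n \log n)$ by \cref{thm:mod}, while Steps 2 and 3 are pointwise operations on tables of size $n$ and thus cost $O(n)$. The total is $O(n \log n)$, matching the claim. No step here should present any real obstacle, since the heavy lifting was already done in the $\MODV$ algorithm; the only subtle point is confirming that $A$ being zero-free on $S$ is strong enough both to apply $\MODV$ (half-disjointness) and to allow the final pointwise division, which is immediate.
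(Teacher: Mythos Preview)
Your proposal is correct and takes essentially the same approach as the paper: compute $R = \MODV_{S,A}(\EVT{P}{S})$ and then return $\EVT{(P-R)/A}{S}$ by pointwise arithmetic, using that $A$ has no zeros in $S$. The paper's proof is just a two-line version of what you wrote.
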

\begin{proof}
The algorithm is basically immediate given $\MODV$.
Letting $R = \MODV_{S,A}(\EVT{P}{S})$,
the algorithm returns $\EVT{\frac{P-R}{A}}{S}$.
\end{proof}

\subsection{Exiting to Standard Polynomial  Representation}\label{sec:exit}

The next computation transforms  a polynomial represented by its evaluation on a basic set to the set of coefficients that
form the standard representation as $\sum_i a_i X^i$.

\begin{theorem}[Exit to Standard Polynomial Representation]\label{thm:exit}
Let $S$ be a basic set with $|S| = n$.

There is an algorithm $\EXIT_{S}$, which when given as input:
\begin{itemize}
\item $\EVT{P}{S}$, where $P(X) \in \fq[X]$ with $\deg(P) < n$,
\end{itemize}
runs in time
$$O(n \log^2 n)$$
and computes the coefficients $a_i$ of $P(X)$ in the standard expansion
$$P(X) = \sum_{i=0}^{n-1} a_i X^i.$$
\end{theorem}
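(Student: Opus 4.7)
The plan is a divide-and-conquer recursion, analogous to an inverse FFT, that splits $P(X) = Q(X)A(X) + R(X)$ using a carefully chosen fixed polynomial $A$ of degree exactly $n/2$. We need two properties of $A$ simultaneously: it should be half-disjoint from $S$ (so that $\MODV_{S,A}$ applies and produces $\EVT{R}{S}$), and it should have $O(1)$ nonzero monomial coefficients (so that reconstructing the coefficients of $P$ from those of $R$ and $Q$ is trivial and does not require a further polynomial multiplication in the monomial basis). A convenient choice is
$$A(X) = X^{n/2} - d$$
for an appropriate scalar $d \in \fq$. Since $A$ vanishes at $x \in S_i$ iff $x^{n/2} = d$, $A$ fails to be half-disjoint from $S$ precisely when $d$ lies in $\{x^{n/2} : x \in S_0\} \cap \{x^{n/2} : x \in S_1\}$, a set of size at most $n/2$. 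With $n \le |L| = O(\sqrt{q})$, this is far smaller than $q$, so a good $d$ exists and is included in the precomputed advice at every recursion level.

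With such a $d$ fixed and $S_\star \subseteq S$ the moiety on which $A$ is nonzero, the algorithm is:
\begin{enumerate}
\item Compute $\EVT{R}{S} \gets \MODV_{S,A}(\EVT{P}{S})$ in time $O(n \log n)$ by \cref{thm:mod}; here $R = P \rem A$ has degree less than $n/2$.
\item Using the precomputed table $\EVT{A}{S_\star}$, compute $\EVT{Q}{S_\star} \gets \bigl(\EVT{P}{S_\star} - \EVT{R}{S_\star}\bigr)/\EVT{A}{S_\star}$ pointwise in time $O(n)$; here $Q = (P-R)/A$ has degree less than $n/2$.
\item Recursively call $\EXIT_{S_\star}$ on both $\EVT{R}{S_\star}$ and $\EVT{Q}{S_\star}$; by \cref{lem:basic set partition}, $S_\star$ is a basic set of size $n/2$, and $R, Q$ are uniquely determined by their evaluations on $S_\star$ since $\deg R, \deg Q < n/2$.
\item Assemble the coefficients of $P$ via the identity $P(X) = R(X) - dQ(X) + X^{n/2} Q(X)$: for $k < n/2$ set $a_k = R_k - d Q_k$, and for $k \ge n/2$ set $a_k = Q_{k - n/2}$. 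This costs $O(n)$ additions and scalar multiplications.
\end{enumerate}
Correctness is immediate from the uniqueness of the polynomial division $P = QA + R$ and the defining property of $\MODV$. The recurrence $T(n) = 2T(n/2) + O(n \log n)$ solves to $T(n) = O(n \log^2 n)$, as required. The base case $n = 1$ is trivial: $\EVT{P}{\{s\}}$ is already the constant coefficient of $P$.

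The main conceptual point---and the only place the proof could go wrong---is the choice of $A$. The sparseness of $X^{n/2} - d$ is what makes step (4) a pure coefficient reshuffle and so avoids a recursive call to $\EXIT$ via a polynomial multiplication. Had we instead taken $A = Z_{S_0}$, the vanishing polynomial of a moiety (an equally natural choice for $\MODV$), the combine step would require honestly multiplying the dense polynomial $Z_{S_0}$ by $Q$ in the monomial basis; the only $O(n \log n)$ route to such a multiplication would go through $\ENTER$, pointwise $\MULT$, and then $\EXIT$ again, producing a circular dependency. The sparse choice $A(X) = X^{n/2} - d$ is precisely what breaks this loop.
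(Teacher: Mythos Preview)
Your proof is correct and follows essentially the same approach as the paper: split $P$ via $\MODV$ against a sparse degree-$n/2$ polynomial, recurse on a moiety, and reassemble coefficients. The paper makes the simpler choice $d = 0$ (i.e., $A(X) = X^{n/2}$), observing that since the moieties $S_0, S_1$ are disjoint, at most one contains $0$, so $X^{n/2}$ is automatically half-disjoint from $S$---your search for a general $d$ is unnecessary but harmless.
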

\begin{proof}
%Write $S = S_0 \cup S_1$ where $S_0, S_1$ are basic sets with $|S_0| = |S_1| = n/2$.
Let $S_0, S_1$ be the moieties of $S$, and note that we may assume without loss of generality that
$0 \notin S_0$.
Thus $X^{n/2}$ has no roots in $S_0$, and in particular an algorithm $\MODV_{S, X^{n/2}}$ exists.

On input $\EVT{P}{S}$, the algorithm will compute $\EVT{U}{S_0}$ and $\EVT{V}{S_0}$,
where $P(X) = U(X) + X^{n/2} \cdot V(X)$ with $\deg(U), \deg(V) < n/2$,
in time $O(n \log n)$. Then by recursively calling $\EXIT_{S_0}$ on these two smaller instances
and combining the results in the obvious way, we get the coefficients of $P(X)$ in time
$O(n \log^2 n )$.

The algorithm uses as advice the values $\EVT{X^{n/2}}{S_0}$, which depend only on $S$, as well as auxiliary values used by the $\MODV$ algorithm (namely, $\EVT{Z_0}{S_1}$, $\EVT{Z_0^2 \rem X^{n/2}}{S}$).
\\
\\

\noindent{\bf Algorithm $\EXIT_{S}$:}\\
\noindent{\bf Input:} an evaluation table $\EVT{\pi}{S}$
\begin{enumerate}
\item If $|S| = 1$ with $S = \{s\}$, return $(\pi(s) )$.
\item \label{exitmodstep} Let $\EVT{u}{S} = \MODV_{S,X^{n/2}}(\EVT{\pi}{S})$.
\item Let
$$(a_0, a_1, \ldots, a_{\frac{n}{2}-1}) = \EXIT_{S_0}(\EVT{u}{S_0}).$$

\item From $\EVT{\pi}{S_0}$, $\EVT{u}{S_0}$ and $\EVT{X^{n/2}}{S_0}$, compute:
$$\EVT{v}{S_0} = \EVT{\frac{\pi - u}{X^{n/2}}}{S_0}.$$

\item  Let
$$(b_0, b_1, \ldots, b_{\frac{n}{2}-1}) = \EXIT_{S_0}(\EVT{v}{S_0}).$$

\item Return
$$(a_0, a_1, \ldots, a_{\frac{n}{2} - 1}, b_0, b_1, \ldots, b_{\frac{n}{2}-1}).$$

\end{enumerate}

\paragraph{Correctness:}
Suppose $P(X) \in \fq[X]$ with $\deg(P) < n$.
We will analyze what the above algorithm does when its input $\EVT{\pi}{S}$ is taken
to be $\EVT{P}{S}$.

If $n = 1$ the algorithm is clearly correct.

Now assume $n  > 1$.
Write $P(X) = U(X) + X^{n/2} \cdot V(X)$, where $\deg(U), \deg(V) < n/2$.

Then $U(X) = P(X) \rem X^{n/2}$.
By properties of $\MODV$, we get that Step~\ref{exitmodstep} computes $\EVT{u}{S} = \EVT{U}{S}$. Thus $\EVT{u}{S_0} = \EVT{U}{S_0}$.

Also note that $V(X) = \frac{P(X) - U(X)}{X^{n/2}}$.
Then
$$ \EVT{v}{S_0} = \EVT{\frac{\pi - u}{X^{n/2}}}{S_0} =  \EVT{\frac{P - U}{X^{n/2}}}{S_0} = \EVT{V}{S_0}.$$

By induction, we get that the algorithm correctly computes the coefficients of $U(X)$ and $V(X)$, and by
concatenating them together, it computes the coefficients of $P(X)$, as desired.

\paragraph{Running time:}

The algorithm makes one call to $\MODV$ on an instance of size $n$ and two recursive calls to $\EXIT$ on instances of size $n/2$.
Thus the running time $F(n)$ satisfies the recurrence:
$$F(n) \leq 2 F(n/2) + O(n \log n),$$
and thus
$$F(n) \leq O( n \log^2 n).$$
\end{proof}

\subsection{Entering from Standard Polynomial Representation}
\label{sec:enter}

The next algorithm is the inverse of $\EXIT$, it transforms a polynomial given in standard representation to its evaluation
over a basic set.

\begin{theorem}[Entering from Standard Polynomial Representation]
	\label{thm:enter}
Let $S$ be a basic set with $|S| = n$.

There is an algorithm $\ENTER_{S}$, which when given as input:
\begin{itemize}
\item $a_0, a_1, \ldots, a_{n-1} \in \fq$,
\end{itemize}
runs in time
$$ O(n \log^2 n)$$
and computes $\EVT{P}{S}$, where
$$P(X) = \sum_{i=0}^{n-1} a_i X^i.$$
\end{theorem}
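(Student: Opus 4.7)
The plan is to mirror the structure of $\EXIT$ in reverse, producing $\ENTER_S$ as a divide-and-conquer algorithm on the coefficient list. Given coefficients $a_0,\ldots,a_{n-1}$, I split them at the midpoint and define
$$U(X)=\sum_{i=0}^{n/2-1}a_i X^i,\qquad V(X)=\sum_{i=0}^{n/2-1}a_{i+n/2}X^i,$$
so that $P(X)=U(X)+X^{n/2}V(X)$ with $\deg(U),\deg(V)<n/2$. Recursively invoking $\ENTER_{S_0}$ on the two halves of coefficients produces $\EVT{U}{S_0}$ and $\EVT{V}{S_0}$, where $S_0$ is a fixed moiety of $S$ (so $|S_0|=n/2$).

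The next step lifts these evaluations from $S_0$ to the full basic set $S$. Since $\deg(U),\deg(V)<n/2=|S_0|$, I may apply $\EXTEND_{S_0,S_1}$ (where $S_1$ is the other moiety) to $\EVT{U}{S_0}$ and to $\EVT{V}{S_0}$ to obtain $\EVT{U}{S_1}$ and $\EVT{V}{S_1}$. Combining with the recursive output yields $\EVT{U}{S}$ and $\EVT{V}{S}$. Taking $\EVT{X^{n/2}}{S}$ as precomputed advice (it depends only on $S$), I then form $\EVT{P}{S}$ pointwise via
$$P(s)=U(s)+s^{n/2}\cdot V(s)\quad \text{for each } s\in S.$$
The base case $n=1$, $S=\{s\}$, simply returns the table mapping $s\mapsto a_0$.

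Correctness is straightforward by induction on $n$: the inductive hypothesis gives correct evaluations of $U$ and $V$ on $S_0$; \cref{thm:low degree extension} (noting $\deg(U),\deg(V)<|S_0|$) certifies that $\EXTEND_{S_0,S_1}$ produces the correct values on $S_1$; and the pointwise combination produces exactly $\EVT{P}{S}$ by the definition of $P$.

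For the running time, the recursion is $F(n)\le 2F(n/2)+T_{\EXTEND}(n/2)+O(n)$. Since two calls to $\EXTEND$ on an instance of size $n/2$ cost $O(n\log n)$ by \cref{thm:low degree extension}, and the pointwise combination is $O(n)$, the recurrence becomes
$$F(n)\le 2F(n/2)+O(n\log n),$$
which solves to $F(n)=O(n\log^2 n)$, matching the claim. The only subtle point worth checking is the degree bound at each level: the halves of the coefficient list $(a_0,\ldots,a_{n/2-1})$ and $(a_{n/2},\ldots,a_{n-1})$ are fed to $\ENTER_{S_0}$, and since these define polynomials of degree strictly less than $|S_0|=n/2$, the inductive hypothesis and the $\EXTEND$ degree condition are both satisfied --- this is the one invariant that must be maintained through the recursion.
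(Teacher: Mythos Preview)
Your proposal is correct and essentially identical to the paper's own proof: the paper also splits $P(X)=U(X)+X^{n/2}V(X)$, recursively calls $\ENTER_{S_0}$ on each half of the coefficient list, applies $\EXTEND_{S_0,S_1}$ to lift each evaluation to $S_1$, and combines pointwise using the precomputed advice $\EVT{X^{n/2}}{S}$, arriving at the same recurrence $F(n)\le 2F(n/2)+O(n\log n)$.
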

\begin{proof}
If $|S| = 1$, the task is trivial.

% Otherwise, write $S = S_0 \cup S_1$ where $S_0, S_1$ are basic sets with $|S_0| = |S_1| = n/2$.
Otherwise, let $S_0, S_1$ be the moieties of $S$.
The algorithm is based on writing the polynomial $P(X)$
as:
$$ P(X) = U(X) + X^{n/2}\cdot V(X),$$
where $\deg(U), \deg(V) < n/2$, and finding the evaluation tables of $U, V$ on both $S_0, S_1$.
A priori, this seems like reducing an $\ENTER$ instance of size $n$ to $4$ $\ENTER$ instances of size $n/2$ (leading to a quadratic running time), but in fact this can be done by $2$ recursive calls to $\ENTER$ and $2$ invocations of $\EXTEND$.

The algorithm below takes $\EVT{X^{n/2}}{S}$ as advice. This can be precomputed since it only depends on $S$.\\
\\

\noindent{\bf Algorithm $\ENTER_{S}$:}\\
\noindent{\bf Input:} $(a_0, a_1, \ldots, a_{n-1}) \in \fq^n$.
\begin{enumerate}
\item If $|S| = 1$ with $S = \{s\}$
\begin{itemize}
\item Define $g: S \to \fq$ by $g(s) = a_0$
\item Return $\EVT{g}{S}$
\end{itemize}
\item Let $\EVT{u_0}{S_0} = \ENTER_{S_0}( a_0, \ldots, a_{\frac{n}{2} -1})$.
\item Let $\EVT{u_1}{S_1} = \EXTEND_{S_0,S_1}(\EVT{u_0}{S_0})$.
\item Let $\EVT{v_0}{S_0} = \ENTER_{S_0}( a_{n/2}, \ldots, a_{n-1})$.
\item Let $\EVT{v_1}{S_1} = \EXTEND_{S_0,S_1}(\EVT{v_0}{S_0})$.
\item Let $$\EVT{\pi}{S} = \EVT{u_0 + X^{n/2} v_0}{S_0} \cup \EVT{u_1 + X^{n/2} v_1}{S_1}.$$
\item Return $\EVT{\pi}{S}$.
\end{enumerate}
% \dan{This is really not important, but a slightly more efficient algorithm would be to multiply $v_0$ by
% $X^{n/2}$ \emph{before} extending, saving a negligible $n/2$ multiplications, but also the precomputation of
% $\EVT{X^{n/2}}{S_1}$.}
% \eli{Dan, then perhaps put it in remark or footnote?}
% \dan{Actually I am completely wrong and this just doesn't work, you can't extend after multiplication
% because the degree is too large. Oops!}

\paragraph{Correctness:}
The correctness follows immediately from the discussion preceding the algorithm.

\paragraph{Running time:}
This algorithm makes two recursive calls to $\ENTER$ on instances of half the size, makes two invocations of $\EXTEND$ on instances of half the size, along with $O(n)$ other operations.
Thus the running time $F(n)$ satisfies:
$$F(n) \leq 2F(n/2) + O(n \log n) + O(n),$$
and thus $F(n) \leq O(n \log^2 n)$, as claimed.
\end{proof}

\subsection{Chinese Remaindering}
\label{sec:chinese remainder}

The following operation receives as input two polynomials $P, Q$ and computes the polynomial $R$ whose remainders modulo two relatively prime polynomials $A,B$ are $P$ and $Q$, respectively.

\begin{theorem}[Chinese Remaindering]
Let $S$ be a basic set with $|S| = n$.
Let $S_0\subseteq S$ be a moiety of $S$.

Let $A(X), B(X)$ be relatively prime polynomials with degrees at most $n/2$.
Suppose that both $A(X)$ and $B(X)$ are half-disjoint from $S$; the moieties
having no zeroes of $A$ and $B$ may be the same moiety for both or a different one for each.

There is an algorithm $\CRT_{S, S_0, A, B}$, which when given as input:
\begin{itemize}
\item $\EVT{P}{S_0}$, where $P(X) \in \fq[X]$ with $\deg(P)<n/2$, and
\item $\EVT{Q}{S_0}$, where $Q(X) \in \fq[X]$ with $\deg(Q)<n/2$,
\end{itemize}
runs in time
$$O(n \log n)$$
and computes $ \EVT{R}{S}$ where $R$ is the unique polynomial of degree $< \deg(A)+\deg(B)$
such that %$P = R \rem A$ and $Q = R \rem B$.
$R \equiv P \pmod A$ and $R \equiv Q \pmod B$.
\end{theorem}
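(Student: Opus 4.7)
\emph{Plan.} I would follow the textbook CRT construction. Let $W(X) := (A(X))^{-1} \bmod B(X)$, a fixed polynomial of degree $<\deg B$ that depends only on the circuit parameters $A,B$ and can therefore be precomputed and supplied as the advice $\EVT{W}{S}$ (along with $\EVT{A}{S}$). Given inputs $P$, $Q$, define
\[
P_A := P \bmod A, \qquad U := \bigl((Q - P_A)\cdot W\bigr) \bmod B,
\]
and set $R := P_A + A\cdot U$. Then $R \equiv P_A \equiv P \pmod A$ trivially, while modulo $B$ we have $R \equiv P_A + A\cdot A^{-1}(Q - P_A) \equiv Q$. Since $\deg P_A < \deg A$ and $\deg(A\cdot U) < \deg A + \deg B$, we also have $\deg R < \deg A + \deg B$, so $R$ is indeed the unique polynomial sought.

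\emph{Algorithm.} Call $\EXTEND_{S_0, S_1}$ twice to promote $\EVT{P}{S_0}$ and $\EVT{Q}{S_0}$ to $\EVT{P}{S}$ and $\EVT{Q}{S}$. Since $A$ is half-disjoint from $S$ and $\deg A \le n/2$, apply $\MODV_{S,A}$ to $\EVT{P}{S}$ to get $\EVT{P_A}{S}$. Subtract pointwise to form $\EVT{Q - P_A}{S}$, then multiply pointwise by the precomputed $\EVT{W}{S}$ to obtain $\EVT{(Q - P_A)\cdot W}{S}$. Because $B$ too is half-disjoint from $S$ and has $\deg B \le n/2$, apply $\MODV_{S,B}$ to this table to obtain $\EVT{U}{S}$. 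Finally, using the precomputed $\EVT{A}{S}$, compute $\EVT{P_A + A\cdot U}{S} = \EVT{R}{S}$ by one pointwise multiplication and one pointwise addition.

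\emph{Degree accounting and main subtlety.} The only delicate point is ensuring that every intermediate polynomial has degree $< n$ so that its table on $S$ is an unambiguous representation. For the product $(Q - P_A)\cdot W$ this is automatic: $\deg(Q - P_A) < n/2$ and $\deg W < \deg B \le n/2$, so the product has degree $< n$, just fitting. The reason it is essential to reduce $P$ to $P_A$ before the final reconstruction (rather than using $P$ directly) is not this intermediate bound but the output degree bound: if we had used $P$, the final degree could be as large as $\max(\deg P, \deg A + \deg B - 1)$, which would exceed $\deg A + \deg B$ whenever $\deg A + \deg B < n/2 \le \deg P + 1$; the reduction kills this term and yields $\deg R \le \deg A + \deg B - 1$ unconditionally.

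\emph{Complexity.} The algorithm performs a constant number of calls to $\EXTEND$ and $\MODV$ on instances of size $n$, each of cost $O(n\log n)$, together with $O(n)$ pointwise arithmetic, for a total of $O(n\log n)$, as claimed.
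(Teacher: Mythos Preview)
Your proof is correct and follows the same approach as the paper: extend both inputs from $S_0$ to $S$ via $\EXTEND$, then assemble the CRT interpolant using pointwise arithmetic and two calls to $\MODV$, all at cost $O(n\log n)$. The only difference is the choice of CRT formula: the paper uses the symmetric idempotent form $R = ((P\cdot G)\bmod A)\cdot B + ((Q\cdot H)\bmod B)\cdot A$ with two precomputed inverses $G=(B^{-1})_{A}$ and $H=(A^{-1})_{B}$, whereas you use the equivalent Garner-style sequential form with a single precomputed inverse $W=(A^{-1})_{B}$; both variants need exactly two $\EXTEND$ calls and two $\MODV$ calls, so neither has an advantage over the other.
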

\begin{proof}
By the usual proof of the Chinese Remainder Theorem, the desired
$R(X)$ is of the form:
$$((P(X) \cdot G(X)) \rem A(X))\cdot B(X) + ((Q(X) \cdot H(X)) \rem B(X))\cdot A(X),$$
where $G(X) = (B(X)^{-1})_{A(X)}$, $H(X) = (A(X)^{-1})_{B(X)}$
depend only on $A$ and $B$, and have degrees
$$\deg(G), \deg(H) < n/2.$$

Thus the algorithm simply extends $\EVT{P}{S_0}$ and $\EVT{Q}{S_0}$
to find $\EVT{P}{S}$ and $\EVT{Q}{S}$. Then, using $\EVT{G}{S}$ and $\EVT{H}{S}$
as advice (which can be precomputed, since they only depend on $A$, $B$ and $S$), as well
as $\EVT{A}{S}$ and $\EVT{B}{S}$, we compute:
$$\MODV_{S,A}(\EVT{P\cdot G}{S})\cdot \EVT{B}{S} +  \MODV_{S,B}(\EVT{Q\cdot H}{S})\cdot \EVT{A}{S} $$
% $$\EVT{PG + QH}{S},$$
which is the desired output. Note that $\deg(P\cdot G), \deg(Q\cdot H) < n$, as $\MODV$ requires.

The run-time comes from two invocations of $\EXTEND$, two invocations of $\MODV$, and $O(n)$ other operations,
and is thus $O(n \log n)$ overall.
\end{proof}

\section{Applications to classical problems}
\label{sec:classical}

The previous \cref{sec:algorithms} presented fast algorithms (and arithmetic circuits) for manipulating polynomials represented by their evaluations on basic sets. This section uses those results to
efficiently solve ``classical'' problems of algebraic computation, in which the polynomials are represented in the ``classical'' way, as
sums of monomials. In all cases below we shall transition to
a representation of polynomials by their evaluations
on basic sets, and this will result in running times, \emph{over any polynomially large field}, that are
as good as those of special, classical-FFT-friendly, finite fields.

\subsection{Elementary Symmetric Polynomial Evaluation}
\label{sec:symm poly eval}

\begin{theorem} [Evaluating Elementary Symmetric Polynomials]
Let $t < n < q^{O(1)}$. There is an arithmetic circuit over $\fq$ of size
$$ O( n \log^2 n)$$
which takes as input variables $\alpha_1, \ldots, \alpha_n$ and computes
$$\sym_{n,t}(\alpha_1,\ldots, \alpha_n):= \sum_{J \subseteq [n], |J| = t} \prod_{j \in J} \alpha_j.$$
\end{theorem}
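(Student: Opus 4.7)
The plan is to reduce to computing the single generating polynomial
$$ P_\alpha(X) \;=\; \prod_{i=1}^n (1 + \alpha_i X) \;=\; \sum_{k=0}^n \sym_{n,k}(\alpha_1,\ldots,\alpha_n)\cdot X^k, $$
so that $\sym_{n,t}(\alpha_1,\ldots,\alpha_n)$ is just the coefficient of $X^t$ in $P_\alpha$. The main work is to build this product of $n$ linear factors quickly; once we have $P_\alpha$ as an evaluation table on a large enough basic set, the final answer is recovered by one application of $\EXIT$.

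I would first fix an $\FFTree$ of depth $\rounds+1$ with $2^\rounds \ge n$ (padding the product by trivial factors of $1$ if necessary to make $n$ a power of two), together with its standard nested basic sets $U_1 \subset U_2 \subset \cdots \subset U_{\rounds+1}$. The algorithm is then a balanced product tree with $\rounds = O(\log n)$ levels: at level $k$ it maintains $n/2^k$ polynomials, each of degree $\le 2^k$, each stored as an evaluation table on $U_{k+1}$ (a basic set of size $2^{k+1}$). Level $0$ is initialized by directly computing $1 + \alpha_i u$ at each of the two points $u \in U_1$, costing $O(n)$ in total. The transition from level $k$ to level $k+1$ pairs up the polynomials and invokes $\MULT_{U_{k+2},U_{k+1}}$ on each of the $n/2^{k+1}$ pairs; by the multiplication theorem each call costs $O(2^{k+2}(k+2))$, giving $O(n(k+2))$ work per level. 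Summing over $k = 0, 1, \ldots, \rounds - 1$ yields $O(n\log^2 n)$ total work to produce the evaluation table of $P_\alpha$ on $U_{\rounds+1}$. Finally, one call to $\EXIT_{U_{\rounds+1}}$ (\cref{thm:exit}) extracts all monomial coefficients of $P_\alpha$ in a further $O(n\log^2 n)$ operations, and the $t$-th one is returned.

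The only nontrivial point is ensuring an $\FFTree$ of the required depth $\log_2 n + O(1)$ actually exists over the ambient field. By \cref{cor:FFTree-existence}, this is immediate inside $\fq$ whenever $n \le 2\sqrt{q}$; but the theorem permits $n < q^{O(1)}$. For such larger $n$ I would pass to the smallest extension field $\F_{q^d}$ satisfying $n \le 2\sqrt{q^d}$. Since $n < q^{O(1)}$ forces $d = O(1)$, each $\F_{q^d}$-operation is simulated by $O(1)$ operations over $\fq$, preserving the $O(n\log^2 n)$ size bound while leaving the resulting circuit formally defined over $\fq$. Beyond this bookkeeping, every other step is a direct orchestration of the primitives $\MULT$ and $\EXIT$ already developed in \cref{sec:algorithms}.
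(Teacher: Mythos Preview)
Your proposal is correct and follows essentially the same approach as the paper: a balanced product tree over nested basic sets, using $\EXTEND$-based multiplication at each level and a final $\EXIT$ to read off the coefficients, with passage to an $O(1)$-degree extension field when $n$ is too large relative to $q$. The only cosmetic differences are your choice of the generating polynomial $\prod_i(1+\alpha_i X)$ rather than $\prod_i(X-\alpha_i)$, and invoking $\MULT$ directly rather than spelling out the two $\EXTEND$ calls plus pointwise product.
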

\begin{proof}
We follow the classical approach of computing elementary symmetric polynomials as coefficients of a certain product,
except that we work with polynomials in the new representation.

The idea is to compute the coefficients, in the standard monomial representation, of the polynomial
$$P(X) = \prod_{i = 1}^{n} (X - \alpha_i) =
\sum_{i=0}^n (-1)^{n-i} \cdot \sym_{n,n-i}(\alpha_1,\ldots, \alpha_n) X^{i}$$
We do this by first computing $\EVT{P}{S}$ for a big enough basic set $S$, and then running
$\EXIT_S(P)$ to compute the coefficients of $P$. Details follow.

%\dan{Revise after adding $\MEXTEND$.}
%\swastik{Not sure I like this  ... makes the reader who is interested in
%this algorithm read $\MEXTEND$, which is a strangely defined operation, unlike  $\EXTEND$ which they already know as low degree extension. How about just keeping the algorithm as is, and just adding a remark saying that one can use $\MEXTEND$ to simplify this algorithm and make the running time faster by a factor $2$ (or more).}
%\dan{Accepted; is the footnote below, on $\EXTEND$, OK?}

By adding some dummy $0$ inputs $\alpha_i$ and increasing $n$ by at most a factor $2$, we may assume
that $n$ is a power of $2$.
Next, we claim that $\fq$ can be assumed to contain a basic set of size at least $2n$. Indeed, this
can be done by replacing $\fq$ by an $O(1)$-degree extension of $\fq$ which is sufficiently large,
of size $O(n^2)$, as needed for a basic set of size at least $2n$ to exist in $\fq$
(cf. \cref{sec:EC nice maps}). Moving to a larger $q$  increases the number of arithmetic
operations by a factor of at most $O(1)$ because we assume $n<q^{O(1)}$.

Let $m = \log_{2}(2n)$ and fix arbitrary basic sets $U_0 \subseteq U_1 \ldots \subseteq U_m \subseteq \fq$
with $|U_j| = 2^j$. We shall
compute $\EVT{P}{U_m}$ in a bottom-up manner by computing products of
terms $P_i(X) := X-\alpha_i, i\in [n]$, of increasing size.
We start by computing
$$\EVT{P_1}{U_1},\ldots, \EVT{P_n}{U_1}$$
which takes time $O(1)$ for each term $P_i$ (and total time $O(n)$).

Let $Q(X) = \prod_{i=i_0}^{i_0+2^{j}-1} P_i(X)$ and assume, inductively, that we have
already computed $\EVT{Q'}{U_{j}}$ and $\EVT{Q''}{U_{j}}$ where
$$Q'(X)=\prod_{i=i_0}^{i_0+2^{j-1}-1} P_i(X), \quad Q''(X)=\prod_{i=i_0+2^{j-1}}^{i_0+2^{j}-1} P_i(X).$$  We shall now
compute $\EVT{Q}{U_{j+1}}$ as follows:
\begin{itemize}
	\item Compute $\EVT{Q'}{U_{j+1}}$ using the $\EXTEND$\footnote{Note that all polynomials computed
	in this algorithm are monic and of degrees equal to powers of 2. Thus, as noted in
	\cref{sec:MEXTEND}, it is natural to extend and multiply these polynomials using $\MEXTEND$
	instead of $\EXTEND$, allowing us to take $m = \log_2(n)$, start from evaluations at $U_0$,
	and cut down the running time by a factor of 2.} algorithm
	\item Compute $\EVT{Q''}{U_{j+1}}$ using the $\EXTEND$ algorithm
	\item Pointwise multiply the two to obtain $\EVT{Q}{U_{j+1}}=\EVT{Q'\cdot Q''}{U_{j+1}}$
\end{itemize}

Since $\EXTEND$ runs in time $O(n \log n)$ and pointwise multiplication runs in time $O(n)$, the running time $F(n)$ for this algorithm
satisfies:
$$ F(n) \leq 2 F(n/2) + O(n \log n) \leq O(n \log^2 n).$$

Finally, once we have $\EVT{P}{U_m}$, we can find its standard monomial expansion using $\EXIT_{U_m}$, which also runs in  time $O(n \log^2 n)$.

The desired output $\sym_{n,t}$ is one of the coefficients in this standard monomial expansion, and is thus computed in time $O(n \log^2 n)$, as claimed.
\end{proof}

\subsection{Multipoint evaluation over general sets of points}
\label{sec:multipoint eval general sets}

Previously we evaluated polynomials over basic sets in quasi-linear time (see \cref{thm:enter}).
The next result shows that evaluating polynomials over general sets of points can also be done in (slightly worse) quasi-linear time.

\begin{theorem} [Multipoint polynomial evaluation]
	\label{thm:multipoint evaluation}
Assume $n<q$. Given any set $B$ of $m$ points in $\fq$, there exists an arithmetic circuit over $\fq$ (that depends on $B$) of size
$$ O( n \log^2 n + m \log^2 m )$$
which takes as input $(a_0, \ldots, a_{n-1}) \in \fq^n$ and computes
$\EVT{P}{B}$
for $P(X) = \sum_{i=0}^{n-1} a_i X^i$.
\end{theorem}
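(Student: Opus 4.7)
\emph{Plan.} Implement the classical Moenck--Borodin multipoint-evaluation algorithm, but perform all polynomial arithmetic in the $\FFTree$ representation so that each modular reduction runs in quasi-linear time via \cref{thm:mod}. I would proceed in three phases: (i) convert $P$ to the $\FFTree$ representation via $\ENTER$; (ii) precompute a subproduct tree for $B$ (absorbed into the circuit as $B$-dependent preprocessing); (iii) descend this tree by repeated modular reductions, reading off $P(\beta)$ at each leaf.

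Let $N$ be the smallest power of two with $N \ge \max(n, 2m)$, and fix a basic set $S$ of size $N$ disjoint from $B$. When $N > \nnq$ I would pass first to a bounded-degree extension of $\fq$, exactly as in the proof in \cref{sec:symm poly eval}; this contributes only a constant factor, and since $n, m \le q$ a constant-degree extension suffices. Inside $S$ I would organize a nested family of basic sets $\{S_v\}$ indexed by the nodes of a binary tree over $B$, with $|S_v| = 2|B_v|$ everywhere and $S_{v_L}, S_{v_R}$ being the two moieties of $S_v$; this just amounts to partitioning $B$ in parallel with the $\FFTree$ sub-structure already sitting inside $S$. Phase (i) is then a single call to $\ENTER_S$ on $(a_0, \ldots, a_{n-1})$ zero-padded to length $N$, producing $\EVT{P}{S}$ in time $O(N \log^2 N) = O(n\log^2 n + m\log^2 m)$ by \cref{thm:enter}.

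For phase (ii) I would compute $\EVT{Z_v}{S_v}$ bottom-up, where $Z_v(X) = \prod_{\beta \in B_v}(X - \beta)$ is monic of degree $|B_v|$: at each leaf $v = \{\beta\}$ store $\EVT{X - \beta}{S_v}$ directly, and to combine two children into their parent, extend each child's table to the parent's basic set via $\MEXTEND$ (exploiting monicity, \cref{thm:mextend}) and multiply pointwise. For phase (iii) I would first apply $\MODV_{S, Z_{\text{root}}}$ to $\EVT{P}{S}$ to get $\EVT{P \bmod Z_{\text{root}}}{S}$, then restrict to $S_{\text{root}}$ by reading the relevant entries. Then I would descend: at an internal $v$ with table $\EVT{R_v}{S_v}$, where $R_v = P \bmod Z_v$ and $\deg R_v < |B_v|$, two calls to $\MODV_{S_v, Z_{v_L}}$ and $\MODV_{S_v, Z_{v_R}}$ followed by restriction to the corresponding moiety yield $\EVT{R_{v_L}}{S_{v_L}}$ and $\EVT{R_{v_R}}{S_{v_R}}$. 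At a leaf $v = \{\beta\}$ the residue $R_v$ is the constant $P(\beta)$, which is just the value at either entry. A node of size $|S_v| = 2^{j+1}$ incurs $O(2^{j+1} \log 2^{j+1})$ work, summing to $O(m\log m)$ per level and $O(m\log^2 m)$ over all $\log m$ levels; the same bound holds for phase (ii).

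The main subtlety---and it is a mild one---is arranging the data so that the preconditions of $\MODV$ hold at every invocation. The degree condition is immediate ($\deg Z_{v_L} = |B_v|/2 = |S_v|/4 \le |S_v|/2$), and half-disjointness is automatic because each $S_v \subseteq S$ is disjoint from $B$, so the moduli have no zeros in $S_v$ at all. Combining the three phases yields the claimed bound $O(n \log^2 n + m \log^2 m)$.
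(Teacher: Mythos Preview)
Your approach is essentially the paper's: convert via $\ENTER$, then descend a subproduct tree for $B$ using $\MODV$ at each node. The only real difference is how the half-disjointness hypothesis of $\MODV$ is enforced---you arrange $S\cap B=\emptyset$ up front so every call is trivially valid, whereas the paper works with the standard basic sets $U_a$ and handles half-disjointness on the fly by splitting $B$ into $B\cap U_{a-1}$ and $B\setminus U_{a-1}$ where needed.

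One small gap worth flagging: a basic set of size $N$ disjoint from $B$ need not exist in a given $\FFTree$, since there are only $\nn/N$ basic sets of size $N$ and this may not exceed $m$. Your extension-field remark only addresses the case $N>\nnq$, not this disjointness requirement. The repair is easy and in the same spirit: since the circuit may depend on $B$, pass to a constant-degree extension large enough that $\nn>mN$ (so pigeonhole yields a disjoint basic set), or equivalently pick an $\FFTree$ from the $\FFForest$ (cf.\ \cref{rem:ffforest}) whose leaf set avoids $B$ entirely. With this adjustment, and after padding $m$ to a power of two, your argument goes through.
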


\begin{remark}\label{rem:mulpoleval}
Note that if $n\geq q$ then we can first reduce $P(X)$ modulo $X^q - X$ trivially in $n$ steps and get
back to the case $n < q$.
Moreover, if $m>n$, then we can partition $B$ into $O(m/n)$ sets of size at most $O(n)$,
getting a run-time of $O(m \log^2 n)$,
whereas if $m<n$, then we can decompose $P$ into $O(n/m)$ polynomials of degree at most $O(m)$,
getting a run-time of $O(n \log^2 m)$.
\end{remark}

\begin{proof}
	Let $B=\{b_1,\ldots, b_m\}$.
The idea of the algorithm is based on the fact that
$$P(b_i) = P(X) \rem (X-b_i).$$
To find $P(X) \rem (X- b_i)$, we start with $P(X) \rem \prod_{i=1}^m(X- b_i)$, and successively
compute $P(X) \rem \prod_{i \in I} (X-b_i)$ for smaller and smaller sets $I \subseteq [m]$. Details follow.

The algorithm starts by running $\ENTER_{U_a}$ to find
$$\EVT{P}{U_a},$$
where $a = \log_2 n + O(1)$. This step runs in $O(n \log^2 n)$ time.
% Note that for any
% $b \in B \cap U_a$, the evaluation table $\EVT{P}{U_a}$ contains the value $P(b)$ directly,
% so we are left with computing the values of $P$ on the smaller set $B \setminus U_a$. Thus we may assume
% henceforth that $B$ is disjoint from $U_a$.
% \dan{Actually this isn't good, since it makes the CRT of 7.3 inapplicable. bah}

Next, we tweak $B$ and $U_a$ until they are of similar sizes, specifically, $2^{a-2} < |B| \le 2^{a-1}$.

In the case $|B| \le 2^{a-2}$, let $a' =\ceil{\log_2 m} +1 < a$.
We wish to assume that $B$ is half-disjoint from $U_a$: if it is not the case,
we may simply split $B$ into two parts that are each half-disjoint, e.g.\@ $B\cap U_{a-1}$ and
$B \setminus U_{a-1}$.
% This splitting may double our total work (once), which would not change
% the big-$O$ complexity of the algorithm.
Then, assuming half-disjointness, we may run
$\MODV_{U_a, \prod_{b \in B} (X- b)}(\EVT{P}{U_a})$ in $O(n \log n)$ time to obtain
$$\EVT{P \rem \prod_{b \in B} (X- b)}{U_{a}}.$$
% This is possible since $\prod_{b \in B} (X- b)$ is disjoint from $U_a$ and has degree at most
% $\frac{|U_a|}{2}$ (in fact, much smaller) and takes $O(n \log n)$ time.
The resulting polynomial will have degree strictly less than
$|B| \le 2^{a'-1}$, and we may restrict its evaluation table
$\EVT{P \rem \prod_{b \in B} (X- b)}{U_{a}}$ to
$$\EVT{P \rem \prod_{b \in B} (X- b)}{U_{a'-1}}$$
at no cost while maintaining the fact that it represents $P \rem \prod_{b \in B} (X- b)$.
We then continue to evaluate this polynomial on $B$, replacing $a$ with $a'$, and noting that
$2^{a'-2} < |B| \le 2^{a'-1}$, and $a' \le \min(\log_2(n), \log_2(m)) + O(1)$.

In the case $|B| > 2^{a-1}$, split $B$ arbitrarily into $l$ disjoint parts, each of size at
most $2^{a-1} = O(n)$, and proceed on each part separately. As in the previous case we further require
that each part be half-disjoint from $U_a$, and observe that again this requires at most one additional
part (e.g.\@ by taking one of the parts equal to $B \cap U_{a-1}$), and can be achieved using only
$l = O(\frac{m}{n} + 1)$ parts, and note that this bound also covers the previous case (with 1 or 2
parts).
The complexity of the remaining work done on each part will be multiplied by $l$ to obtain the total
complexity. We continue now with
$|B|$ denoting a single part, of size at most $2^{a-1}$, and half-disjoint from $U_a$.
Again we have $a \le \min(\log_2(n), \log_2(m)) + O(1)$. As in the previous case, the next step
is to run $\MODV_{U_a, \prod_{b \in B} (X- b)}(\EVT{P}{U_a})$ and restrict to $U_{a-1}$, yielding
$$\EVT{P \rem \prod_{b \in B} (X- b)}{U_{a-1}}$$
in $O(n \log n)$ time.

We can now get the desired result by applying the following recursive step, for $j=a-1,a-2,\dots,1$:
Suppose $A(X)$ is a product of $\le 2^j$ different linear factors. Then we may write
$A(X) = A'(X) \cdot A''(X)$, where $\deg(A'), \deg(A'') \le 2^{j-1}$, and both $A', A''$ are
half-disjoint from $U_j$. Then given $\EVT{P \rem A}{U_j}$,
we can compute
$$\EVT{P \rem A'}{U_{j}} = \MODV_{U_j, A'}(\EVT{P\rem A}{U_j})$$
$$\EVT{P \rem A''}{U_{j}} = \MODV_{U_j, A''}(\EVT{P\rem A}{U_j})$$
in time $O(|U_j| \log |U_j|)$, and then restrict the tables to $\EVT{P \rem A'}{U_{j-1}}$,
$\EVT{P \rem A''}{U_{j-1}}$.

At layer $j$ of the recursion we perform $2^{a-j}$ $\MODV_{U_j}$ operations, taking a total run time
of $O(|U_a|\log|U_j|)$, and summing over all layers $j$ we get a run time of
$$O(|U_a| \log^2 |U_a|) = O(\min(n \log^2 n,m \log^2 m))$$
per part. Multiplying by the number of parts $l = O(\frac{m}{n} + 1)$ and adding
the $O(n \log^2 n)$ from $\ENTER$, we get that the total run time is
$$ O(n \log^2 n + m \log^2 m),$$
as claimed.
\end{proof}

\subsection{Interpolation from general evaluation sets}\label{sec:interpolation gen sets}

In \cref{sec:exit} we showed how to interpolate in quasi-linear time from evaluations on basic sets.
The following result, the converse of the previous \cref{thm:multipoint evaluation},
obtains quasi-linear running time (with somewhat worse parameters) for interpolating from general evaluation sets.

\begin{theorem}[Polynomial interpolation from general evaluation sets]
	\label{thm:interpolation gen sets}
Let $B \subseteq \fq$ be a set of $m$ points.
There is an arithmetic circuit over $\fq$ (depending on $B$) of size
$$ O( m \log^2 m)$$
which takes as input an evaluation table $\EVT{\pi}{B}$
% \david{$p$ seems more natural to me here than $C$, WDYT?}
% \dan{I think most algorithm in section 6 use $\EVT{\pi}{S}$, so I vote $\pi$ instead of $p$ or $C$}
and computes
the coefficients $a_i$ of the unique polynomial of degree $<m$:
$$P(X) = \sum_{i=0}^{m-1} a_i X^i,$$
such that $\EVT{P}{B}=\EVT{\pi}{B}$.
\end{theorem}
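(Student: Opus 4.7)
The plan is to execute the classical Borodin--Moenck interpolation scheme, with the ECFFT primitives of \cref{sec:algorithms} replacing classical FFT-based multiplication, so that each multiplication of two degree-$n$ polynomials runs in $O(n \log n)$ operations over any $\fq$.

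First I would build a subproduct tree for $B = \{b_1,\ldots, b_m\}$, organized as a balanced binary tree whose leaves are the $b_i$. At each node $v$ with $|B_v| = 2^l$, I store the evaluation table $\EVT{Z_v}{U_{l+1}}$ of the monic polynomial $Z_v(X) = \prod_{b \in B_v}(X - b)$ on a standard basic set $U_{l+1}$ of size $2^{l+1}$, choosing the $U_\cdot$'s as in \cref{sec:representations} so that $U_a$ is a moiety of $U_{a+1}$ for every $a$. Starting from $\EVT{X - b_i}{U_1}$ at the leaves, merging two level-$l$ siblings into their level-$(l+1)$ parent takes one invocation of $\MULT$ (or, taking advantage of monicity, $\MEXTEND$ as in \cref{sec:MEXTEND}) on the basic set $U_{l+2}$ with moiety $U_{l+1}$, costing $O(2^l \cdot l)$; summing over $m/2^{l+1}$ such merges per level and over all levels yields $O(m \log^2 m)$ in total. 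Applying $\EXIT$ to the root returns $Z_B$ in monomial form in $O(m \log^2 m)$ time, after which formal differentiation produces $Z_B'(X)$ at no cost.

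Next, I would invoke the multi-point evaluation algorithm (\cref{thm:multipoint evaluation}) to obtain $Z_B'(b)$ for every $b \in B$ in $O(m \log^2 m)$ operations, and then form $c_b \coloneqq \pi(b)/Z_B'(b)$ in a further $O(m)$ operations. By the Lagrange formula, the sought interpolant is
$$ P(X) \;=\; \sum_{b \in B} c_b \prod_{b' \in B \setminus \{b\}} (X - b'), $$
which equals $N_{\mathrm{root}}$ for the partial-numerator polynomials
$$ N_v(X) \;=\; \sum_{b \in B_v} c_b \prod_{b' \in B_v \setminus \{b\}}(X - b') $$
defined at every tree node $v$. These satisfy $N_v = c_b$ at a leaf $v=\{b\}$ and the recurrence $N_v = N_{v_0} Z_{v_1} + N_{v_1} Z_{v_0}$ at an internal node, and each $N_v$ has degree $< |B_v|$, so it can be stored on the same $U_{l+1}$ as $Z_v$. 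Traversing the tree bottom-up, each internal-node combination uses two $\MULT$ calls plus a pointwise addition at scale $2^{l+1}$, costing $O(2^l \cdot l)$; the total is again $O(m \log^2 m)$. A final $\EXIT$ applied to $\EVT{N_{\mathrm{root}}}{U_{L+1}}$ (with $L = \log_2 m$) returns the requested monomial coefficients of $P$ in $O(m \log^2 m)$ time, dropping the $m$ trivially zero high-order coefficients.

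The main obstacle is purely bookkeeping: reconciling the arbitrary cardinality $m$ with the power-of-$2$ sizes of basic sets, and ensuring that a sufficiently large $\FFTree$ exists in $\fq$. If $m$ is not a power of $2$, one may partition $B$ into $O(1)$ disjoint sub-sets whose sizes are powers of $2$, interpolate each separately, and glue the results via $\CRT$ (\cref{sec:chinese remainder}); alternatively, pad $B$ up to the next power of $2$ with dummy auxiliary points in $\fq$, interpolate on the padded set, and correct at the end by a single $\MODV$ against $\prod_{b \in B}(X - b)$. Either workaround incurs only a constant-factor overhead. If $\nnq < \Theta(m)$, as in \cref{rem:mulpoleval} one first passes to a constant-degree extension of $\fq$, again inflating the arithmetic-operation count by only a constant factor.
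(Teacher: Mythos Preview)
Your proof is correct and achieves the claimed $O(m\log^2 m)$ bound, but it takes a genuinely different route from the paper's sketch. The paper simply inverts \cref{thm:multipoint evaluation}: where multi-point evaluation \emph{descends} the subproduct tree applying $\MODV$ to pass from $P\rem Z_v$ to $P\rem Z_{v_0}$ and $P\rem Z_{v_1}$, interpolation \emph{ascends} the same tree applying $\CRT$ (\cref{sec:chinese remainder}) to reconstruct $P\rem Z_v$ from the two residues, starting from the leaf data $P\rem(X-b_i)=\pi(b_i)$, and finishes with a single $\EXIT$. You instead implement the classical Borodin--Moenck Lagrange-weight scheme: compute the derivative values $Z_B'(b_i)$, form $c_i=\pi(b_i)/Z_B'(b_i)$, and ascend via the numerator recurrence $N_v=N_{v_0}Z_{v_1}+N_{v_1}Z_{v_0}$ using $\MULT$. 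The paper's approach is conceptually cleaner (a one-line inverse of the previous theorem, no derivative trick), while yours avoids the two $\MODV$ calls hidden inside each $\CRT$ invocation, trading them for plain $\MULT$s, which may save constant factors in practice.

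Two minor remarks on your write-up. First, because the circuit is allowed to depend on $B$, everything you do before the input $\pi$ appears --- building the subproduct tree, the $\EXIT$ of $Z_B$, formal differentiation, and the multi-point evaluation of $Z_B'$ on $B$ --- is pure preprocessing; the values $Z_B'(b_i)$ and all tables $\EVT{Z_v}{U_\cdot}$ can simply be hardwired constants, so the actual circuit consists only of the $O(m)$ divisions, the $\MULT$-based ascent, and the final $\EXIT$. Second, your claim that a non-power-of-$2$ $m$ can be handled by partitioning $B$ into $O(1)$ power-of-$2$ pieces is off: the binary expansion of $m$ may have up to $\log_2 m$ ones, so you may need $O(\log m)$ pieces glued by a chain of $\CRT$ calls (still within the stated budget, but not $O(1)$). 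Your padding-plus-$\MODV$ alternative is cleaner and sidesteps this entirely.
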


\begin{proof}[Proof Sketch]
Since this algorithm is roughly the opposite of the previous algorithm,
instead of applying $\MODV$ in each recursive step as done above,
we use $\CRT$ to
do fast Chinese remaindering to compute $\EVT{P}{U}$ for a basic set $U$, followed by
calling  $\EXIT(\EVT{P}{U})$ to get the desired standard polynomial representation. The running time and analysis are similar to that of the previous \cref{thm:multipoint evaluation}.
\end{proof}

%\swastik{Maybe a low degree extension theorem too?}

\section*{Acknowledgements}
We thank an anonymous reviewer for informing us about the previous work by Chudnovsky and
Chudnovsky~\cite{ChCh}, %of which we were not previously aware,
as well as other valuable references and comments.
Some of this research was done while SK was visiting StarkWare in 2019. SK is grateful to StarkWare for the warm hospitality and the electrifying atmosphere.

\newpage
\bibliographystyle{alpha}
\bibliography{ecFFT}{}

\newpage

\appendix

\section{The paper of Chudnovsky and Chudnovsky~\cite{ChCh}}
\label{sec:ChCh-comp}
Some of the core ideas appearing in this paper were first suggested, in brief,
in the final section of a paper by Chudnovsky and Chudnovsky \cite[Section 6]{ChCh}.

%We were made aware of this paper by an anonymous reviewer,
%who further remarked that the authors only sketch
%their idea. Indeed, the writing is extremely succinct, many details are omitted or only
%hinted at, and we believe there is at least one substantial error. 

The main claim from~\cite{ChCh} relevant for us is an ``Elliptic Interpolation Algorithm'', which
describes how to use elliptic curve groups over
finite fields to solve a certain rational function interpolation problem via an FFT-type algorithm.
They call this rational function interpolation problem and the algorithm for it
the Fast Elliptic Number Theoretic Transform (FENTT).
%Specifically, for all primes $p$ and such that the interval $[p \pm 2\sqrt{p} + 1]$ contains a power of
%2, 
Specifically for any prime $p$
and for a certain $n$ and certain fixed $\alpha_1, \ldots, \alpha_n, \beta_1, \ldots, \beta_n \in \F_p$
the following holds:
\begin{itemize}
\item Given as input $x_1, \ldots, x_n$, consider the rational function
$$F(Z) = \sum_{i} \frac{x_i}{Z - \alpha_i}.$$
There is an algorithm to compute $\langle F(\beta_j) \rangle_{j \in [n]}$ in $O(n \log n)$ operations.
\end{itemize}
Succinctly, the claim is that there exists an $n \times n$ Cauchy matrix $M$ with entries in $\F_p$
such that multiplying by $M$ can be done in $O(n \log n)$ operations. The authors go on to suggest
that $n$ can be taken to be $\Omega(p)$ whenever there is a power of $2$ within $p+1 \pm 2\sqrt{p}$, but we
believe there is an error in the analysis of their algorithm, and that in fact, there are no elliptic curves
that can make their algorithm achieve such large $n$ uniformly for all $p$ --- like the FFT, the
$n$ in the FENTT is limited by the factorization of $p-1$.

Technically, the proposal is based on the doubling map, which is a specific $4$-isogeny mapping an
elliptic curve $E$ to itself (where we use general 2-isogenies between different curves).
The overall scheme then follows the high level outline of \cref{sec:ode to EC},
albeit with a more direct translation, leading to a solution of the rational function interpolation
problem rather than the polynomial extension problems that we aim for (with an eye on applications).

A key point that~\cite{ChCh} overlooked is that the kernel of the doubling map is
isomorphic to $\Z/2\Z \times \Z/2\Z$, not $\Z/4\Z$. Thus the doubling map (which is a degree $4$ map) will be
a $4$-to-$1$ map on a subgroup $G  \subseteq E(\F_p)$ if and only if $\Z/2\Z \times \Z/2\Z$ is a subgroup of $G$.
Thus if we seek a sequence $G_0, G_1, \ldots, G_i, \ldots, G_t$ of subgroups of $E(\F_p)$
such that each $G_{i+1}$ is the image of $G_i$ under doubling, with $|G_{i+1}| = |G_i|/4$,
then we need $\Z/2\Z \times \Z/2\Z$ to be a subgroup of each $G_i$.
Since  $E(\F_p)$ is a rank 2 abelian group, the above sequence of groups can exist only if
$\Z/2^{t}\Z \times \Z/2^{t}\Z$ is a subgroup of $E(\F_p)$.
%(not just {\em any} subgroup of cardinality $4^t$, which is what can be guaranteed, via theorems of Deuring/Waterhouse, by considering the number of points in $E(\F_p)$).
This is a much more stringent requirement on $E(\F_p)$ than merely having size divisible by $4^t$ (which is what one can get out of theorems of Deuring/Waterhouse).
In fact, a theorem about the possible structures of an elliptic curve group, proved independetly by
Ruck~\cite{Ruck} and Voloch~\cite{Voloch}, implies that an elliptic curve over $\F_p$ can contain a
subgroup isomorphic to $\Z/2^{t}\Z \times \Z/2^{t}\Z$ only if $p \equiv 1 \pmod{2^{t}}$.
 {\em Thus, the size of the largest FENTT supported by $\F_p$ is bounded in terms of the factorization of $p-1$},
and is at most quadratically larger than the size of the largest FFT supported by $\F_p$.

This is quite restrictive; for example, if $p \equiv 3 \pmod 4$, then the largest FENTT 
supported by $\F_p$ has instance size $4$. In contrast, our approach works with
$\Omega(\sqrt{p})$ instance size over $\F_p$ for all primes $p$.

%then  such that iterated application of the doubling map keeps shrinking the size of the
%group by a factor of $4$, we require $E(\F_p)$ to have the product group
%$\Z/2^{t/2}\Z \times \Z/2^{t/2}\Z$ as a subgroup  (not just {\em any} subgroup of cardinality $2^t$,
%which is what our argument requires).
%This is a more stringent requirement on $E(\F_p)$, and merely having size divisible by $2^t$ does not
%guarantee the existence of such a large product subgroup.
%On the contrary, theorems of Ruck and Voloch imply that an elliptic curve over $\F_p$ can contain a subgroup isomorphic to
%$\Z/2^{t/2}\Z \times \Z/2^{t/2}\Z$ only if $p \equiv 1 \pmod{2^{t/2}}$.
%{\em Thus, the size of the largest FENTT supported by $\F_p$ is bounded in terms of the factorization of $p-1$},
%and is at most quadratically larger than the size of the largest FFT supported by $\F_p$.

%The existence of such $E$ where $E(\F_p)$ has a large product subgroup is not covered
%by the theorems of Deuring/Waterhouse (\cref{sec:group struct,thm:Waterhouse}).
%, and requires further argument.
%\dan{added the next clause; is it too harsh?}

The necessity and feasibility of this product subgroup structure condition are not mentioned in~\cite{ChCh};
in fact, the paper erroneously states that every elliptic curve with order divisible by $2^t$ will
have points of order $2^t$, i.e.\@ a {\em cyclic} subgroup of order $2^t$,
which is not necessarily the case, and further, because of the confusion regarding $\Z/2\Z \times \Z/2\Z$ and $\Z/4\Z$
described above, is not relevant to success of the FENTT.

%In fact, for general primes it is often not possible to find arbitrarily large product subgroup.
%A theorem of {} (see \cite{}) implies that a curve over $\F_p$ can contain a subgroup isomorphic to
%$\Z/2^{t/2}\Z \times \Z/2^{t/2}\Z$ only if $p \equiv 1 \pmod{2^{t/2}}$. In other words, a FENTT
%of size $2^t$ is only possible for primes where the regular multiplicative FFT is possible
%on at least $2^{t/2}$ variables.

In summary, the idea of using elliptic curve groups in place of multiplicative groups for computing
FFT-like transforms to all prime fields is not new to our paper, but we believe that an implementation 
that supports transforms with large instance sizes over {\em all} primes $p$ is.
Furthermore, our approach is intentionally adapted to a
certain flexible polynomial extension problem (rather than the specific rational function evaluation
considered in~\cite{ChCh}), and this is what enables our applications to data structures and
algorithms for intrinsically interesting classical problems involving polynomials.

% We thank the anonymous reviewer for making us aware of this paper and other valuable references and comments.

\section{Proof of the decomposition lemma \ref{lem:decomposition}}
\label{sec:proof-of-decomposition}

\decomp*

\begin{proof}
For general $P_i(Y) \in V_{d/\delta}$, where
$$P_i(Y) = \sum_{j=0}^{d/\delta-1} a_{ij} Y^j,$$
consider the polynomial
$$P(X) = \sum_{i=0}^{\delta-1} X^i  P_i(\psi(X)) \cdot v(X)^{\frac{d}{\delta}-1}.$$

Observe that
\begin{align}
	\nonumber
P(X) &= \sum_{i=0}^{\delta-1} X^i P_i(u(X)/v(X)) \cdot v(X)^{\frac{d}{\delta}-1}
=\sum_{i=0}^{\delta-1} X^i \sum_{j=0}^{d/\delta-1} a_{ij}(u(X)/v(X))^j \cdot v(X)^{\frac{d}{\delta}-1}\\
&=\sum_{i=0}^{\delta-1} \sum_{j=0}^{d/\delta-1} a_{ij} X^i u(X)^j \cdot v(X)^{\frac{d}{\delta}-1 - j}, \label{eq:decomp}
\end{align}
and thus $P(X) \in V_d$.
% since the general summand has degree
% $i +j\deg(u) +(\tfrac{d}{t}-1-j) \le t-1 + (\tfrac{d}{t}-1)t = d-1 $.
We shall use the following claim, proved below:
\begin{claim}\label{clm:proper decomposition}
For every choice of $P_0(X), P_1(X), \ldots, P_{\delta-1}(X) \in V_{d/\delta}$, not all $P_i$ being zero,
the polynomial:
$$P(X) = \sum_{i=0}^{\delta-1} X^i  P_i(\psi(X)) \cdot v(X)^{\frac{d}{\delta}-1}$$
is nonzero.
\end{claim}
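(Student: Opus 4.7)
The plan is to recast the claim inside the rational function field $\fq(X)$ and apply the classical fact that $\psi(X) = u(X)/v(X)$ of degree $\delta$ realizes $\fq(X)$ as a degree-$\delta$ field extension of the subfield $\fq(\psi(X))$, with basis $\{1, X, \dots, X^{\delta-1}\}$.

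First, I would reduce to an identity in $\fq(X)$: since $v(X)^{d/\delta - 1}$ is a nonzero polynomial and hence invertible in $\fq(X)$, assuming for contradiction that $P \equiv 0$ yields
$$\sum_{i=0}^{\delta-1} X^i \cdot P_i(\psi(X)) = 0 \quad \text{in } \fq(X).$$
Each coefficient $P_i(\psi(X))$ lies in the subfield $\fq(\psi(X))$, so this is an $\fq(\psi(X))$-linear dependence among $1, X, \ldots, X^{\delta-1}$. By the field-extension fact, such a dependence forces $P_i(\psi(X)) = 0$ in $\fq(X)$ for every $i$. Since $\psi(X)$ is non-constant (as $\delta \geq 1$), it is transcendental over $\fq$, and therefore $P_i(Y) \in \fq[Y]$ must be the zero polynomial for every $i$, contradicting the hypothesis that some $P_i$ is nonzero.

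The one nontrivial input is the degree identity $[\fq(X):\fq(\psi(X))] = \delta$ together with the basis $\{X^i\}_{i<\delta}$, and this is the main obstacle. I would establish it by observing that $X$ is a root of the polynomial $u(T) - \psi(X)\cdot v(T) \in \fq(\psi(X))[T]$, which has degree exactly $\delta$ in $T$ (since at least one of $u, v$ has degree $\delta$), and then invoking Gauss's lemma together with the coprimality of $u$ and $v$ in $\fq[T]$: viewed as an element of $\fq[\psi(X), T]$ the polynomial is linear in $\psi(X)$ and primitive in $T$, so any nontrivial factorization would force a common factor of $u$ and $v$. Hence this polynomial is irreducible over $\fq(\psi(X))$, its degree equals $[\fq(X):\fq(\psi(X))]$, and $\{1, X, \ldots, X^{\delta-1}\}$ is an $\fq(\psi(X))$-basis for $\fq(X)$. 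This is a standard fact, also available as Proposition~1.1.13 of Stichtenoth's \emph{Algebraic Function Fields and Codes}.
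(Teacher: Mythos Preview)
Your proof is correct but takes a genuinely different route from the paper's.

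The paper argues directly with polynomials: it reorders the double sum as $P(X) = \sum_{j} Q_j(X)\, u(X)^j v(X)^{d/\delta-1-j}$ where $Q_j(X) = \sum_i a_{ij} X^i$ has degree $< \delta$, assumes without loss of generality that $\deg(u) = \delta$, picks the minimal $j_0$ with $Q_{j_0} \neq 0$, divides through by $u(X)^{j_0}$, and reduces modulo $u(X)$; the result is $Q_{j_0}(X)\, v(X)^{d/\delta-1-j_0}$, which is nonzero modulo $u$ because $v$ is coprime to $u$ and $\deg Q_{j_0} < \deg u$. Everything stays inside elementary polynomial arithmetic.

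Your approach instead invokes the structural fact $[\fq(X):\fq(\psi(X))] = \delta$, after which the linear independence of $1, X, \ldots, X^{\delta-1}$ over $\fq(\psi(X))$ makes the claim immediate. What you gain is conceptual economy---once the extension-degree lemma is in hand, the claim is a one-liner---at the price of importing field-extension language and a Gauss's-lemma irreducibility check (or the citation to Stichtenoth). The paper's argument is fully self-contained and matches the hands-on style of the surrounding section, while yours makes clearer \emph{why} the bound $\delta$ on the number of summands is exactly right.
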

Together with the fact that the dimension of $\left(V_{d/\delta}\right)^\delta$ equals the dimension of $V_d$, the theorem follows.
\end{proof}

\begin{proof}[Proof of \cref{clm:proper decomposition}]
Reordering the right hand side of \cref{eq:decomp}
gives
	\begin{align*}
		P(X)
		= \sum_{j=0}^{d/\delta-1}    Q_j(X) u(X)^j v(X)^{d/\delta-1-j},
	\end{align*}
	where $Q_j(X) = \sum_{i=0}^{\delta-1} a_{ij} X^i$ is a polynomial of degree $< \delta$.

	Since $\deg(\psi) = \delta$, we have that either $\deg(u(X))= \delta$ or $\deg(v(X)) = \delta$.
	Suppose $\deg(u(X)) = \delta$, the other case being similar.

	The assumption that not all $P_i(X)$ are zero implies that not all $Q_j(X)$ are zero, so
	let $j_0$ be the minimal integer such that  $Q_{j_0}(X)$ is a nonzero polynomial.
	Then $P(X)$ is divisible by $u(X)^{j_0}$, and
	$$ \frac{P(X)}{u(X)^{j_0}} =
	\sum_{j=j_0}^{d/\delta-1}    Q_j(X) u(X)^{j-j_0} v(X)^{d/\delta-1-j},$$

	Finally, we observe this polynomial is {\em nonzero} modulo $u(X)$,
	since modulo $u(X)$ it equals:
	$$ Q_{j_0}(X) \cdot v(X)^{d/\delta-1-j_0},$$
	$v(X)$ is invertible modulo $u(X)$ (since $v(X)$ is relatively prime to $u(X)$), and $Q_{j_0}(X)$ is nonzero modulo $u(X)$ because it is a nonzero  polynomial of degree strictly less than $\delta$.
	This implies that $P(X)$ is a nonzero polynomial, completing the proof of
	the claim.
\end{proof}

\section{Proofs from \cref{sec:FFTrees from EC}}\label{sec:elliptic appendix}

\subsection{Proof of \cref{prop:standard form}}\label{sec:psi appendix}

\standard*

\begin{proof}
	As noted in \cref{sec:group law}, $\pi'(Q)=\pi'(-Q)$ for all points $Q\in E'$.
	In fact, this equality holds for all the points of $E'(\fqbar)$ -- the set of all solutions
	of the curve equation in the algebraic closure of $\fq$ (i.e., considering all solutions over all the finite field extensions of $\fq$).
	The composition $\pi'\circ\phi:E \to \Prj$ can be represented as an element of $\fq(X)[Y]/F(X,Y)$
	where $F(X,Y)=0$ is the equation that defines $E$ (see \cref{eq:weierstrass}).
	Notice that $\fq(X)[Y]/F(X,Y)$ is a degree $2$ extension field of $\fq(X)$
	because $F$ is a degree $2$ polynomial in $Y$ with coefficients in $\fq(X)$, so we can write $\pi'\circ\phi(X,Y) = \psi(X) + Y \cdot \chi(X)$ for some $\psi, \chi\in \fq(X)$.
	We know that $\phi$ is a group homomorphism, so $\pi'(\phi(-Q))=\pi'(-\phi(Q))=\pi'(\phi(Q))$ for all points
	$Q\in E(\fqbar)$. In particular, since $Q$ and $-Q$ have the same $x$ coordinate but different $y$
	coordinates (unless $Q=-Q$), then $\chi(x)=0$ for every $x$ coordinate of a point in $E(\fqbar)$
	except for at most $4$ points (see \cite[Exercise 3.7]{Silverman} or \cite[Example 2.5]{Wash08}). %\eli{why 4?}
	Since there are infinitely many such points over the algebraic closure $\fqbar$
	we conclude that $\chi$ is the constant
	$0$ function and $\pi'\circ\phi(x,y) = \psi(x)$ or equivalently $\pi'\circ\phi = \psi\circ\pi$.
\end{proof}

\subsection{Existence of an appropriate $G_0$ in the proof of \cref{thm:curve sequence}}
\label{sec:existence of G_0}
Recall that we have constructed a curve $E_0$ of size $N$ with $\nn \mid N$ and $N > 2\nn$, where
$\nn$ is a power of 2. Our goal in this section is to show that there exists a subgroup $G_0 < E_0$
which is of size $\nn$, and such that there exists a coset $C$ of $G_0$ with $C \neq -C$.

As noted in \cref{sec:group struct}, $E_0$ is of
rank at most 2, and there is an isomorphism
$$\tau : E_0  \leftrightarrow \Z/(m_12^{l_1}\Z) \times \Z/(m_22^{l_2}\Z) $$
where $m_1, m_2$ are odd with $m_1 \mid m_2$,
$l_1 \le l_2$, $m_1 m_2 2^{l_1+l_2} = N$ and in particular $l_1 + l_2 \ge \rounds$. A subgroup $G_0$
of size $\nn$ will necessarily be of the form
$$G_0 =
\tau^{-1}\parens*{(m_12^{l_1-k_1}\Z)/(m_12^{l_1}\Z) \times (m_2 2^{l_2- k_2}\Z)/(m_22^{l_2}\Z)}
\simeq \Z/2^{k_1}\Z \times \Z/2^{k_2}\Z $$
with $k_1\le l_1$, $k_2\le l_2$ and $k_1 + k_2 = \rounds$, and the quotient $E/G_0$ is then
isomorphic to
$$E_0/G_0 \simeq \Z/(m_12^{l_1-k_1}\Z) \times \Z/(m_22^{l_2-k_2}\Z).$$
We wish to ensure that this group contains an element $C$ such that $C \neq -C$, or equivalently,
$2C \neq 0$. This is clearly the case for any choice of $k_1, k_2$, except if
$m_1 = m_2 = 1$ and $l_1 - k_1, l_2 - k_2 \le 1$, which are the cases where $E_0/G_0$
is isomorphic to either the trivial group, $\Z/2\Z$, or $\Z/2\Z \times \Z/2\Z$. But since
$m_1 m_2 2^{l_1 -k_1 + l_2 - k_2} = \frac{N}{\nn} > 2$,
this happens only when $N = 4\nn$ and for the choice $k_1 = l_1 -1$ and $k_2 = l_2 - 1$.
But, by the assumption $\nn > 1$ and by $l_2 \ge l_1$, we find $l_2 \ge 2$, thus we may choose
instead $k_1 = l_1$ and $k_2 = l_2 - 2$, to obtain $E_0/G_0 \simeq \Z/4\Z$, which indeed contains
an element $C$ with $C \neq -C$. \qed

\end{document}